\providecommand{\U}[1]{\protect \rule{.1in}{.1in}}
\newtheorem{theorem}{Theorem}[section]
\newtheorem{lemma}{Lemma}[section]
\newenvironment{proof}[1][Proof]{\noindent \textbf{#1.} }{\  \rule{0.5em}{0.5em}}
\numberwithin{equation}{section}
\newcommand{\blfootnote}[1]{
    \begingroup
    \renewcommand\thefootnote{}\footnote{#1}
    \addtocounter{footnote}{-1}
    \endgroup
}
\begin{document}
	
	\title{Asymptotic Analysis of Optimal Diversification in Catastrophe Risk Pooling}
	\author{Minh Chau Nguyen~ \thinspace Tony Wirjanto ~ \thinspace  Fan Yang\\
		{\small \ Department of Statistics and Actuarial Science, University of
			Waterloo }\\
		{\small Waterloo, ON N2L 3G1, Canada }\\
	}
	\date{{\small April 7, 2026}}
	\maketitle
	
	\begin{abstract}
Catastrophe risk has long been recognized to pose a serious threat to the
insurance sector. Catastrophe risk pooling offers an effective way to
diversify losses arising from catastrophic events. In this paper, we
investigate a catastrophe risk pooling structure and optimize it so that participants can
attain the maximum diversification benefit from joining the pool.
Determining the practical optimal pool entails solving a high-dimensional
optimization problem, for which analytical solutions are typically
unavailable and numerical methods can be computationally intensive and
potentially unreliable. To address this challenge, we evaluate the
diversification benefit in the limit and use it to derive an asymptotically
optimal pool which approximates the practical optimal pool. Through
simulation studies, we show that the asymptotically optimal pool provides an
accurate and reliable approximation to the practical optimal pool. We also
conduct an empirical analysis using data from the U.S. National Flood
Insurance Program to illustrate how the framework can be applied in practice.

\textbf{Keywords}:  catastrophe risk pooling, diversification, heavy tails
	\end{abstract}

\blfootnote{We thank Lisa Gao and David Saunders for their comments and suggestions on an earlier version of the paper. We also thank participants at the following conferences: Climate Change and Insurance Conference 2025, International Centre for Mathematical Science (Edinburgh, U.K., September 10th - September 12th, 2025); RSS International Conference 2025, Royal Statistical Society (Edinburgh, U.K., September 1st - September 4th, 2025); 2025 Actuarial Research Conference, Society of Actuaries (Toronto, Canada, July 29th - August 1st, 2025); 2025 ICSA China Conference, International Chinese Statistical Association, Beijing Normal University (Zhuhai, Guangdong, China, June 28th  - June 30th, 2025); SSC Annual Meeting 2025, Statistical Society of Canada (Saskatoon, Canada, May 25th - May 28th, 2025). The usual disclaimer applies.}

\section{Introduction}

Climate change has increased the severity and frequency of natural disasters.
Since 2015, losses from natural disasters worldwide have been steadily
increasing, more than 50\% of which are uninsured (see Figure
\ref{fig:protection_gap}). A catastrophe risk pool is a promising mechanism for
managing extreme losses, as it helps diversify participants' catastrophe risk
and strengthen their resilience against natural disasters. Real-world
examples of such risk pools include the Florida Hurricane Catastrophe Fund,
the Caribbean Catastrophe Risk Insurance Facility (CCRIF), and the African Risk
Capacity (for further details of these examples, see \citealp{bollmann2019international} for instance). In
this paper, we aim at improving the efficiency of catastrophe risk pools by
optimally allocating the diversification benefits among participants.

\begin{figure}[htbp]
    \centering
    \includegraphics[width=0.75\linewidth]{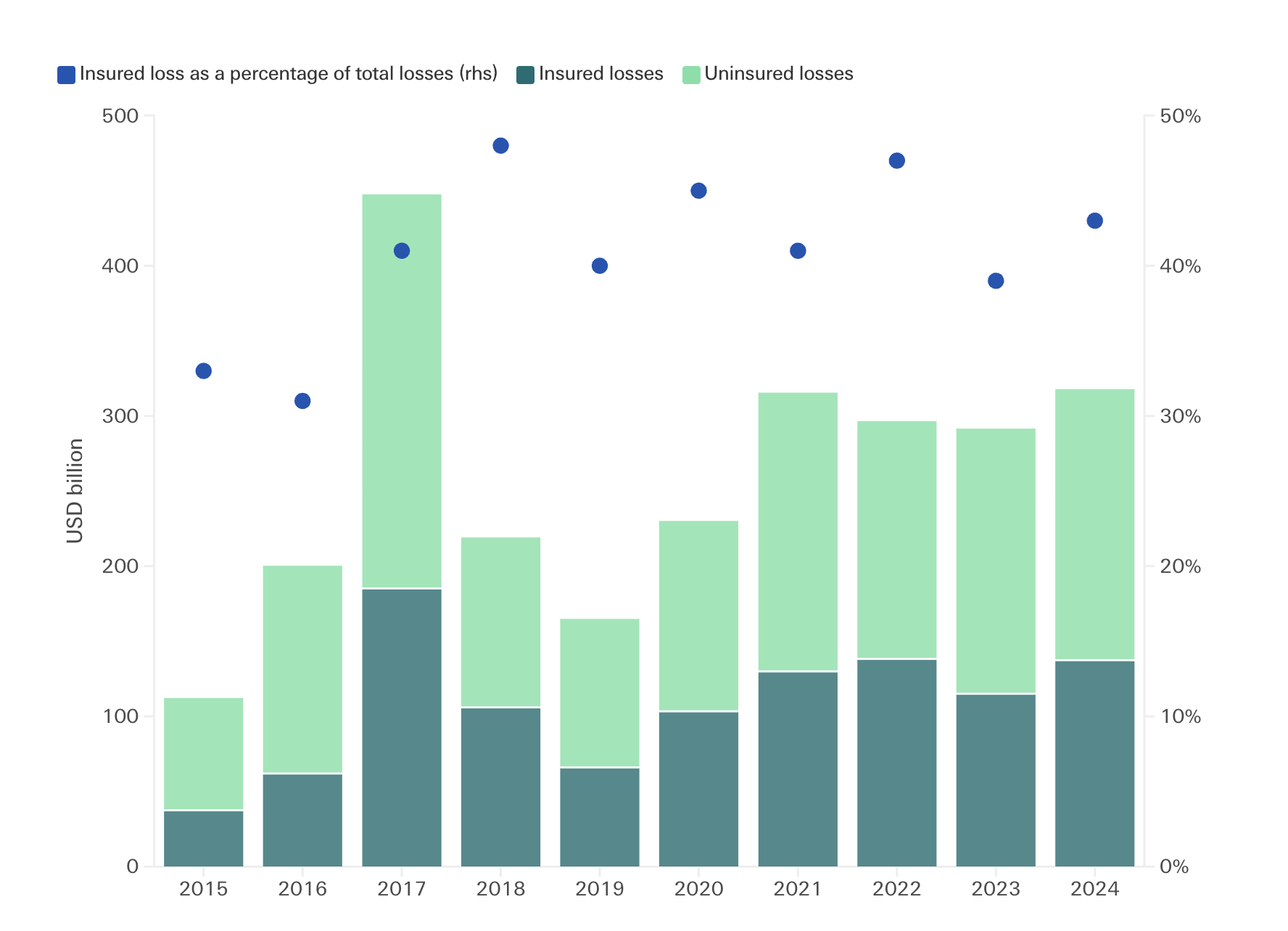}
    \caption{Global catastrophe losses from 2015 to 2024 (USD billion, 2024 prices)  \citep{swiss_re_sigma_2025}}
    \label{fig:protection_gap}
\end{figure}

It has been shown that economic losses from natural disasters exhibit
power-law tail behavior when the underlying hazard intensities
follow a power-law distribution. A power-law tail means that the survival
function of a loss variable $X$ satisfies%
\[
\mathbb{P} \left(  X>x\right)  \sim cx^{-\alpha},\qquad x>0,
\]
for some tail index $\alpha>0$. Such distributions are also referred to as
heavy-tailed. Empirical evidence supports this behavior for various types of
catastrophes, including earthquakes \citep{sornette1996rank}, hurricanes
\citep{hogg1983estimation,hsieh1999robustness}, wildfires
\citep{malamud2005characterizing}, and river floods
\citep{woo2011calculating}. 
 In this paper, we
assume that the catastrophic losses follow heavy-tailed distributions, and they need not be identical.

In practice, catastrophe risk pools are typically formed to diversify
exposures across geographically separated regions or across distinct perils,
thereby reducing cross-loss dependence. This is because independence among losses enhances
the potential for diversification, as verified under the Archimedean copula
structure in Theorem 2.3 of{\  \cite{embrechts2009additivity}. }Motivated by
this insight, we adopt the independence of the catastrophe losses as a
benchmark assumption in order to isolate the effect of heavy-tailed marginal
losses on diversification in a tractable manner. We acknowledge that
catastrophe losses may still exhibit residual dependence in practice due to
large-scale weather systems, climate patterns, or common macroeconomic
factors. In the empirical analysis, we show that our theoretical results
derived under the independence assumption continue to hold when the losses
exhibit weak dependence. A full analysis of pooling dependent catastrophic
losses, however, is beyond the scope of the present paper.

We consider a pooling structure in which the pool provides each participant
with coverage for a specific layer of loss, and each participant pays a
premium for this coverage which is proportional to the aggregated loss of the
pool. A formal definition of this pooling structure is provided in Section 
\ref{setup}. The diversification benefit is measured by using a so-called
diversification ratio (DR), defined in \eqref{DRp}, which is based on the
Value-at-Risk (VaR) measure. In this setup, after joining the pool, each
participant's risk consists of two components: a self-retained loss not
covered by the pool and the premium paid to participate in the pool. The DR
compares a participant's risk before and after joining the pool and thereby
measures the extent of the risk reduction achieved through participation.

Since the catastrophe losses are not assumed to be identically
distributed, the diversification benefit that each participant derives from the
pool depends on the layer of loss covered by the pool. In this paper, we
seek to determine the optimal layer for each participant so that all
participants simultaneously attain their maximum diversification benefit
from the pool. In practice, the DR is measured at any specific level $p$ of
the VaR measure. However, determining the practical pool that simultaneously
maximizes the diversification benefit for all participants leads to a
high-dimensional and multi-objective optimization problem. Analytical solutions are generally
unavailable, and numerical methods may be computationally expensive and
potentially unreliable. We therefore propose using the asymptotically
optimal pool, defined by evaluating DR in the limit as $p\rightarrow 1$, as
an approximation to the practical optimal pool for any finite $p$
sufficiently close to 1.

To be more specific, we consider two types of risk pools. In the first
model, all losses are assumed to be tail equivalent, i.e., they share the
same tail index, although they need not be identically distributed. In
the second model, we allow the losses to have different tail indices, in
which case a more substantial heterogeneity arises among participants. For
each model, we derive asymptotic expressions of the DR as the level $p$ of
the VaR measure approaches 1. These expressions characterize the
diversification benefit each participant obtains from the asymptotic pool.
Based on these results, we construct the asymptotically optimal pool by
selecting the optimal loss layer for each participant.

Through a set of simulation studies, we show that the asymptotically optimal
pool provides an accurate and reliable approximation to the practical
optimal pool when the level $p$ is close to 1. In particular, the
asymptotically optimal pooling strategy yields loss layers and diversification
ratios that are close to those obtained from the practical optimal pool. The
latter is computed at specific VaR levels $p$ using a global
optimization algorithm. In Appendix \ref{Appendix_algo}, we compare the
chosen algorithm with four alternative optimization methods and show that it
achieves a favorable balance between solution accuracy and computational
efficiency. In the empirical study, we construct three optimal flood risk
pools using the asymptotically optimal pooling strategy and flood loss data
from the U.S. National Flood Insurance Program (NFIP). We demonstrate the
full implementation of the proposed framework, including the validation of
model assumptions and the estimation of key parameters. The resulting pools
are shown to be consistent with our theoretical findings. 

The remainder of this paper is structured as follows. In Section \ref{thms},
we first provide the formal setup of the catastrophe risk pool. Then we derive
the asymptotic expressions for the DR when the level $p$ of VaR
approaches 1. This is done for two types of risk pools, one having tail
equivalent losses and the other having general heavy-tailed losses. Based on
these expressions, we derive the asymptotic optimal pool. In Section
\ref{sec:sim}, we examine the performance of the asymptotically optimal pool
through a set of simulations. Lastly, in Section \ref{sec:realdata}, we carry out an
empirical analysis to illustrate how this framework can be implemented in
practice. The appendices contain detailed proofs of the main results, a comparison
of algorithms used in the simulation study, and an additional exploratory analysis of the
flood loss data.

\section{Optimal pooling structure}\label{thms}

\subsection{Pool setup}\label{setup}

Suppose that there are $n$ participants in the pool, each with a ground-up
catastrophe loss $X_{i}$, $i=1,...,n$. Assume that $X_{1},X_{2},...,X_{n}$ are independent random variables with distribution functions $F_{1},...,F_{n}$.
As discussed in the introduction, these losses can be intentionally chosen
from different types of perils or from geographically well-separated
locations so as to justify the independence assumption.

Furthermore, these losses are assumed to follow heavy-tailed distributions. More
specifically,{ a loss} $X$ with a distribution function $F=1-\overline{F}$ {is
said to have a regularly varying tail with index }$\alpha>0${, denoted by}
$X\in \mathrm{RV}_{-\alpha}$ or $\overline{F}\in \mathrm{RV}_{-\alpha}${, if }
\[
\lim_{t\rightarrow \infty}\frac{\overline{F}(tx)}{\overline{F}(t)}=x^{-\alpha
},\qquad x>0.
\]
The tail index $\alpha$ represents the heaviness of the tail of the distribution, where the
smaller the value of $\alpha$ is, the heavier the tail is. See, for example,
\cite{haan2006extreme}  for more details on regularly varying (RV) functions.

By joining the pool, each participant is covered for a layer of $X_{i}$ with
an attachment point $d_{i}$ and a limit $l_{i}$, defined as
\begin{equation}
Y_{i}=\left \{
\begin{array}
[c]{lll}%
0, &  & X_{i}<d_{i},\\
X_{i}-d_{i}, &  & d_{i}\leq X_{i}<l_{i},\\
l_{i}-d_{i}, &  & l_{i}\leq X_{i}.
\end{array}
\right.  \label{Y}%
\end{equation}
This form of coverage is a common practice in many lines of property and casualty insurance, and is often applied in the contribution structure of existing climate
risk pools such as CCRIF  \citep{bollmann2019international}.

The aggregated loss in this pool is then given by $S_{n}=\sum_{i=1}^{n}Y_{i}$.
To receive the coverage from the pool, each participant pays a premium $P_{i}$
defined as
\[
P_{i}=\frac{E\left[  Y_{i}\right]  }{E\left[  S_{n}\right]  }\mathrm{VaR}%
_{p}(S_{n}),
\]
where $\mathrm{VaR}_{p}(Z)=G^{\leftarrow}(p)=\inf \{z:G(z) \geq p\}$ is the Value-at-Risk (VaR) or the general inverse function for a random variable $Z$ with a distribution function $G$ at a level $p\in(0,1)$. {This premium can be interpreted as the participants sharing the aggregated risk through a mean-proportional risk-sharing rule (see \cite{denuit2022risk} for instance). In particular, each participant shares a portion of $\frac{E\left[Y_{i}\right]  }{E\left[  S_{n}\right]  }$ of the total risk of the pool.} Furthermore, this premium can be rewritten as
\[
P_{i}=E[Y_{i}]\left(  1+\eta \right)  ,
\]
where $\eta=\frac{\mathrm{VaR}_{p}(S_{n})}{E[S_{n}]}-1$. When the level $p$ of VaR is close to 1, we have $\eta>0$. Thus, the premium $P_{i}$
follows the expected value principle, and the loading $\eta$ represents the
exceedance of the tail aggregated risk over the expected aggregated risk.

With the layer loss $Y_{i}$ covered by the pool, the risk retained by
participant $i$ is given by $\mathrm{VaR}_{p}(X_{i}-Y_{i})$. Then we quantify the
diversification benefit that participant $i$ obtains when joining the pool by using the
diversification ratio (DR) defined as
\begin{equation}
\mathrm{DR}_{i}(p)=\frac{\mathrm{VaR}_{p}(X_{i}-Y_{i})+\frac{E\left[
Y_{i}\right]  }{E\left[  S_{n}\right]  }\mathrm{VaR}_{p}(S_{n})}%
{\mathrm{VaR}_{p}(X_{i})}.\label{DRp}%
\end{equation}
DR compares the risk of a participant before and after joining the
pool. A DR value less than 1 indicates that the participant obtains a positive diversification benefit from joining the pool, and a smaller DR value means a
greater risk reduction for the participant. Note that the total risk that a
participant holds after joining the pool, including both the retained risk and
the portion of risk shared from the pool, is used for the definition (\ref{DRp}).
This is different from the diversification ratio in \cite{cui2021diversification} or
\cite{cui2022asymptotic}, where only the risk shared from the pool is
considered as the participant's post-pooling risk. By including the total risk that a participant holds after joining the pool, our newly proposed DR measure (\ref{DRp}) reflects the real situation faced by a pool participant and thus better captures the risk diversification effect of joining the pool.

The goal of this paper is to construct the pool by choosing the attachment
points $d_{i}$ and the limits $l_{i}$ such that all pool participants can
enjoy the maximum diversification benefit, that is, 
\begin{equation}
\min_{d_{i},l_{i},i=1,...,n}\mathrm{DR}_{i}(p)\text{,\qquad for all }%
i=1,...,n\text{.}  \label{pop}
\end{equation}%
However, an analytical solution to the high-dimensional and multi-objective optimization problem
(\ref{pop}) is generally not available. Considering that the pool is used to
manage the extreme risks for participants, we focus on the case where the level $p$ is close to 1. Denote the diversification benefit
that participant $i$ has in the limit when $p\rightarrow 1$ as 
\[
\mathrm{DR}_{i}(1):=\lim_{p\rightarrow 1}\mathrm{DR}_{i}(p).
\]%
In the following subsections, we set the attachment points in such a way
that each participant asymptotically has the same tail probability at these
points. Then we look for the optimal solution of the problem 
\begin{equation}
\min_{l_{i},i=1,...,n}\mathrm{DR}_{i}(1)\text{,\qquad }i=1,...,n\text{,}
\label{lpo}
\end{equation}%
for given $d_{i}$. This means that, for given attachment points $d_{i}$, we
solve for the optimal limits $l_{i}$ that allow the pool participants to
attain the maximum diversification benefit asymptotically. Through a set of
simulation studies, we demonstrate that this solution can be used to
approximate the practical optimal solution of problem (\ref{pop}) when $p$
is close to 1.

\subsection{Pool with tail equivalent losses}

In this subsection, we first consider a pool of losses which have the same heavy-tailedness but are not necessarily identically distributed, as detailed in the following model. By assuming a common tail index for all losses, we provide a simplified version of the general model, which helps illustrate the structure of the pool and the implications of the results.  We extend the study to a pool of losses with general heavy-tailedness in the next subsection.\bigskip

\textbf{Model 1}

\begin{itemize}
\item $\overline{F}_{i}\in \mathrm{RV}_{-\alpha}$ with $\alpha>0$, for
$i=1,...,n$. Furthermore, for
$i=1,...,n$, there exists $\theta_{i}>0$ such that
\begin{equation}
\lim_{t\rightarrow \infty}\frac{\overline{F}_{i}(t)}{\overline{F}_{1}%
(t)}=\theta_{i}>0.\label{TE}%
\end{equation}

\item The attachment point and limit are functions of $p$, meaning that $d_{i}=d_{i}(p)$
and $l_{i}=l_{i}(p)$, $i=1,...,n$, which satisfy, for some $\xi>0$, that
\begin{equation}
\lim_{p\rightarrow1}\frac{d_{i}(p)}{\mathrm{VaR}_{p}(X_{i})}=\xi,\label{a1}%
\end{equation}
and for some $\lambda_{i}>1$,
\begin{equation}
\lim_{p\rightarrow1}\frac{l_{i}(p)}{d_{i}(p)}=\lambda_{i}>1.\label{a2}%
\end{equation}

\end{itemize}

\bigskip

In Model 1, all losses are heavy-tailed with the same tail index $\alpha$. We further assume that the losses may have different scales,
captured by $\theta_{i}$, compared to the first loss distribution
$\overline{F}_{1}(t)$. A larger scale $\theta_{i}$ means a higher tail
probability of $X_{i}$ at the level $t$ compared to that of $X_{1}$. Under
the assumptions (\ref{a1}) and (\ref{a2}), both attachment points $d_{i}(p)$ and limits $l_{i}(p)$
diverge to infinity as the level $p$ approaches 1, or equivalently as the benchmark risk $\mathrm{VaR}_{p}(X_{i})$ increases to infinity. More specifically, for each loss
$X_{i}$, its attachment point $d_{i}(p)$ is assumed to be approximately $\xi$ times its
benchmark risk $\mathrm{VaR}_{p}(X_{i})$. This multiplier is set to be the same for
all losses. The limit $l_{i}(p)$ is then assumed to be approximately $\lambda_{i}$ times $d_{i}(p)$.
The multipliers $\lambda_i$ are allowed to be different for each loss and will be optimally chosen for fixed levels of $\xi$ to achieve an asymptotically optimal pool in Subsection \ref{optimal pool}.

Next, we obtain several relations that are helpful for further illustrating the implications of Model 1
and useful for establishing the proof of the main theorem in this subsection.

\begin{lemma}
\label{lem1}Under Model 1, we have
\begin{equation}
\lim_{p\rightarrow1}\frac{d_{i}(p)}{d_{1}(p)}=\theta_{i}^{1/\alpha}
,\qquad \lim_{p\rightarrow1}\frac{\overline{F}_{i}(d_{i})} {\overline{F}%
_{1}(d_{1})}=1, \label{r1}%
\end{equation}
and
\[
\lim_{p\rightarrow1}\frac{d_{i}(p)}{\mathrm{VaR}_{p}(X_{1})}=\theta
_{i}^{1/\alpha}\xi,\qquad \lim_{p\rightarrow1}\frac{l_{i} (p)}{\mathrm{VaR}%
_{p}(X_{1})}=\lambda_{i}\theta_{i}^{1/\alpha}\xi.
\]

\end{lemma}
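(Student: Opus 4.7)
The plan is to chain together the three elementary ingredients in Model 1, namely the tail equivalence (\ref{TE}), the regular variation $\overline{F}_{i}\in \mathrm{RV}_{-\alpha}$, and the relative rates (\ref{a1})--(\ref{a2}) of $d_{i}(p)$ and $l_{i}(p)$. The pivot step is to translate the tail equivalence of the $\overline{F}_{i}$'s into a corresponding equivalence of the associated quantile functions, i.e.\ to show
\[
\lim_{p\to 1}\frac{\mathrm{VaR}_{p}(X_{i})}{\mathrm{VaR}_{p}(X_{1})}=\theta_{i}^{1/\alpha}.
\]
Once this ``VaR transfer'' is in hand, all four limits in the lemma follow by writing each ratio as a product of ratios whose limits are already known.

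To establish the pivot, I would first note that $\overline{F}_{1}\in \mathrm{RV}_{-\alpha}$ and heavy-tailedness imply $\mathrm{VaR}_{p}(X_{i})\to\infty$ as $p\to 1$ for every $i$. Assuming $c:=\liminf_{p\to 1}\mathrm{VaR}_{p}(X_{i})/\mathrm{VaR}_{p}(X_{1})$ (along a subsequence), I decompose
\[
\frac{\overline{F}_{i}(\mathrm{VaR}_{p}(X_{i}))}{\overline{F}_{1}(\mathrm{VaR}_{p}(X_{1}))}
=\frac{\overline{F}_{i}(\mathrm{VaR}_{p}(X_{i}))}{\overline{F}_{i}(\mathrm{VaR}_{p}(X_{1}))}\cdot\frac{\overline{F}_{i}(\mathrm{VaR}_{p}(X_{1}))}{\overline{F}_{1}(\mathrm{VaR}_{p}(X_{1}))}.
\]
The left-hand side equals $1$ (using continuity of the $F_{i}$'s at the quantile, which follows from regular variation and the absence of atoms in the tail); the second factor on the right converges to $\theta_{i}$ by (\ref{TE}); and the first factor converges to $c^{-\alpha}$ by the regular variation of $\overline{F}_{i}$ applied to the sequence $\mathrm{VaR}_{p}(X_{i})/\mathrm{VaR}_{p}(X_{1})\to c$. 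Hence $c^{-\alpha}\theta_{i}=1$, which forces $c=\theta_{i}^{1/\alpha}$ and, since the argument applies to every convergent subsequence, the full limit equals $\theta_{i}^{1/\alpha}$.

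Given the pivot, the remaining steps are arithmetic. For the first identity, I write
\[
\frac{d_{i}(p)}{d_{1}(p)}=\frac{d_{i}(p)}{\mathrm{VaR}_{p}(X_{i})}\cdot\frac{\mathrm{VaR}_{p}(X_{i})}{\mathrm{VaR}_{p}(X_{1})}\cdot\frac{\mathrm{VaR}_{p}(X_{1})}{d_{1}(p)},
\]
whose limit is $\xi\cdot\theta_{i}^{1/\alpha}\cdot\xi^{-1}=\theta_{i}^{1/\alpha}$ by (\ref{a1}) and the pivot. For $\overline{F}_{i}(d_{i})/\overline{F}_{1}(d_{1})$, I factor it as
\[
\frac{\overline{F}_{i}(d_{i}(p))}{\overline{F}_{1}(d_{1}(p))}=\frac{\overline{F}_{i}(d_{i}(p))}{\overline{F}_{1}(d_{i}(p))}\cdot\frac{\overline{F}_{1}(d_{i}(p))}{\overline{F}_{1}(d_{1}(p))},
\]
and since $d_{i}(p)/d_{1}(p)\to\theta_{i}^{1/\alpha}$ and $d_{1}(p)\to\infty$, the first factor tends to $\theta_{i}$ by (\ref{TE}) while the second tends to $(\theta_{i}^{1/\alpha})^{-\alpha}=\theta_{i}^{-1}$ by $\overline{F}_{1}\in\mathrm{RV}_{-\alpha}$, and the product is $1$. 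Finally, the two displays in the second line of (\ref{r1}) follow by writing $d_{i}(p)/\mathrm{VaR}_{p}(X_{1})=(d_{i}(p)/\mathrm{VaR}_{p}(X_{i}))(\mathrm{VaR}_{p}(X_{i})/\mathrm{VaR}_{p}(X_{1}))$ and $l_{i}(p)/\mathrm{VaR}_{p}(X_{1})=(l_{i}(p)/d_{i}(p))(d_{i}(p)/\mathrm{VaR}_{p}(X_{1}))$, and invoking (\ref{a2}).

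The main obstacle is really only the VaR transfer: it requires a careful use of general inverses (so that $\overline{F}_{i}(\mathrm{VaR}_{p}(X_{i}))=1-p$ holds, up to a jump at an atom that does not occur under regular variation), plus a subsequence argument to upgrade a $\liminf/\limsup$ equality $c^{-\alpha}\theta_{i}=1$ into a genuine limit. Everything else is a bookkeeping chain of ratio manipulations using assumptions (\ref{TE})--(\ref{a2}) and the defining property of $\mathrm{RV}_{-\alpha}$.
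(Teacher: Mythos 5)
Your proposal is correct and takes essentially the same route as the paper: both establish the pivot $\lim_{p\to 1}\mathrm{VaR}_p(X_i)/\mathrm{VaR}_p(X_1)=\theta_i^{1/\alpha}$ from (\ref{TE}) together with regular variation (the paper invokes Potter's bounds where you use a subsequence argument), and then obtain all four limits by exactly the same chains of ratio decompositions using (\ref{a1}) and (\ref{a2}). The only nitpick is your parenthetical claim that regular variation rules out atoms in the tail; it does not, but the fact you actually need, $\overline{F}_i(\mathrm{VaR}_p(X_i))\sim 1-p$ as $p\to 1$, still holds because jumps are asymptotically negligible for $\mathrm{RV}_{-\alpha}$ tails, so the argument stands at the same level of rigor as the paper's own proof.
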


From Lemma \ref{lem1}, we observe several additional implications of Model 1.
Firstly, when the level $p$ is close to 1, the attachment point
$d_{i}$ for loss $X_{i}$ is approximately $\theta_{i}^{1/\alpha}$ times
$d_{1}$ for $X_{1}$. Thus, for a loss $X_{i}$ with a larger scale $\theta_{i}$,
its attachment point $d_{i}$ is set at a higher level than those with
smaller scales. Secondly, the attachment points are set in such a way that the tail
probability of each loss at its attachment point is approximately the same.
Following a similar proof to that of Lemma \ref{lem1}, we can show that
\[
\lim_{p\rightarrow1}\frac{\overline{F}_{i}(l_{i})}{\overline{F}_{1}(l_{1}%
)}=\left(  \frac{\lambda_{1}}{\lambda_{i}}\right)  ^{\alpha}.
\]
This means that the tail probability of loss $X_{i}$ at its limit $l_{i}$ is $\left(
\lambda_{1}/\lambda_{i}\right)  ^{\alpha}$ times that of $X_{1}$, which
depends on $\lambda_{1}$ and $\lambda_{i}$. Thirdly, for loss $X_{i}$, its
attachment point $d_{i}(p)$ is approximately $\theta_{i}^{1/\alpha}\xi$ times
the benchmark risk $\mathrm{VaR}_{p}(X_{1})$ of $X_{1}$. It further
illustrates that losses with larger scales have higher attachment points.

Now we are ready to derive the asymptotic expression of $\mathrm{DR}_{i}(p)$
as $p\rightarrow1$.

\begin{theorem}
\label{theorem3}Under Model 1, we have
\begin{equation}
\mathrm{DR}_{i}(1)=\left \{
\begin{array}
[c]{lll}%
1+\delta_i\Delta_{\xi,\boldsymbol{\lambda}}, &  & \xi \geq1,\\
\xi+\delta_i\Delta_{\xi,\boldsymbol{\lambda}}, &  & 1/\lambda_{i}\leq \xi<1,\\
1-(\lambda_{i}-1)\xi+\delta_i\Delta_{\xi,\boldsymbol{\lambda}}, &  & \xi<1/\lambda
_{i},
\end{array}
\right.  \label{res1}%
\end{equation}
where
\[
\delta_i = \frac{\left(  \lambda_{i}^{1-\alpha
}-1\right)  }{\sum_{j=1}^{n}\left(  \lambda_{j}^{1-\alpha}-1\right)
\theta_{j}^{1/\alpha}},
\qquad
\Delta_{\xi,\boldsymbol{\lambda}}=\left(  \sum_{j\in Z}\left(  \theta_{j}^{-1/\alpha}%
+\xi \right)  ^{-\alpha}\right)  ^{1/\alpha}%
\]
with $Z=\left \{  j=1,2,...,n:\xi>\theta_{j}^{-1/\alpha}\left(  \lambda
_{j}-1\right)  ^{-1}\right \}  $.
\end{theorem}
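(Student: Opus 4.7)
The plan is to split $\mathrm{DR}_i(p)$ from \eqref{DRp} into the self-retention ratio $\mathrm{VaR}_p(X_i-Y_i)/\mathrm{VaR}_p(X_i)$ and the premium-contribution term $(E[Y_i]/E[S_n])\cdot \mathrm{VaR}_p(S_n)/\mathrm{VaR}_p(X_i)$, take the limit of each piece separately as $p\to 1$, and add them. For the self-retention ratio, $X_i-Y_i=\min(X_i,d_i)+(X_i-l_i)_+$ is a non-decreasing function of $X_i$, so $\mathrm{VaR}_p(X_i-Y_i)$ is obtained by applying this function to $\mathrm{VaR}_p(X_i)$; splitting on whether $\mathrm{VaR}_p(X_i)$ lies asymptotically below $d_i$, in $[d_i,l_i)$, or above $l_i$---which by \eqref{a1}--\eqref{a2} correspond to $\xi\geq 1$, $1/\lambda_i\leq \xi<1$, and $\xi<1/\lambda_i$---yields the leading terms $1$, $\xi$, and $1-(\lambda_i-1)\xi$ respectively.

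For the premium-contribution term I would handle the two factors separately. Writing $E[Y_i]=\int_{d_i}^{l_i}\overline{F}_i(x)\,dx$ and applying Karamata's theorem to $\overline{F}_i\in\mathrm{RV}_{-\alpha}$ (valid since $\alpha>1$) gives $E[Y_i]\sim d_i\overline{F}_i(d_i)(1-\lambda_i^{1-\alpha})/(\alpha-1)$; plugging in $d_i\sim \theta_i^{1/\alpha}\xi\,\mathrm{VaR}_p(X_1)$ and $\overline{F}_i(d_i)\sim \xi^{-\alpha}\overline{F}_1(\mathrm{VaR}_p(X_1))$ from Lemma \ref{lem1} and \eqref{TE} and cancelling the common scale in the ratio yields $E[Y_i]/E[S_n]\to \theta_i^{1/\alpha}\delta_i$. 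For $\mathrm{VaR}_p(S_n)/\mathrm{VaR}_p(X_i)$, I would use the single-large-jump tail asymptotic for the independent truncated sum $S_n$: for each $v>0$ below the scaled cap $(\lambda_j-1)\theta_j^{1/\alpha}\xi$, regular variation and tail equivalence give $\Pr(Y_j>v\,\mathrm{VaR}_p(X_1))\sim \theta_j(\theta_j^{1/\alpha}\xi+v)^{-\alpha}\overline{F}_1(\mathrm{VaR}_p(X_1))$ (and $0$ above the scaled cap), while $\Pr(Y_j>0,\,Y_k>0)=O(\overline{F}_1(\mathrm{VaR}_p(X_1))^2)$ is negligible, so the tail $\Pr(S_n>v\,\mathrm{VaR}_p(X_1))$ is asymptotic to the sum of the individual tails. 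Evaluating at $v=1$, the contributing indices are exactly those in $Z$, and the identity $\theta_j(\theta_j^{1/\alpha}\xi+1)^{-\alpha}=(\theta_j^{-1/\alpha}+\xi)^{-\alpha}$ gives $\Pr(S_n>\mathrm{VaR}_p(X_1))/\overline{F}_1(\mathrm{VaR}_p(X_1))\to \Delta_{\xi,\boldsymbol{\lambda}}^{\alpha}$. Inverting via the regular variation of $\overline{F}_1^{\leftarrow}$ (of index $-1/\alpha$) produces $\mathrm{VaR}_p(S_n)\sim \Delta_{\xi,\boldsymbol{\lambda}}\,\mathrm{VaR}_p(X_1)$, and the factor $\mathrm{VaR}_p(X_1)/\mathrm{VaR}_p(X_i)\to\theta_i^{-1/\alpha}$ cancels the $\theta_i^{1/\alpha}$ carried by $E[Y_i]/E[S_n]$, yielding the claimed $\delta_i\Delta_{\xi,\boldsymbol{\lambda}}$.

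The main technical obstacle is the single-big-jump step for the truncated sum $S_n$. Each $Y_j$ has an atom at its cap $l_j-d_j$, so the classical unbounded heavy-tail heuristic must be justified carefully: one must show that configurations with two or more simultaneously large $Y_j$'s are $O(\overline{F}_1(\mathrm{VaR}_p(X_1))^2)$ and hence negligible, that the atomic mass at each cap is automatically absorbed in $\Pr(Y_j>v\,\mathrm{VaR}_p(X_1))$ whenever $v$ lies below the scaled cap, and that indices $j\notin Z$ (whose scaled cap is below $1$) cannot carry a single large jump at the reference scale and therefore drop out of the leading-order tail. This cap-driven index selection is exactly what restricts the sum in $\Delta_{\xi,\boldsymbol{\lambda}}$ to $Z$, a feature absent from the classical unbounded heavy-tail sum asymptotic, and is the step that most directly reflects the role of the truncation introduced by the pool's layer structure.
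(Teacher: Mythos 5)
Your proposal follows essentially the same route as the paper: the same three-factor split of $\mathrm{DR}_i(p)$, the same piecewise limit for the retention ratio, the same layer asymptotic for $E[Y_i]$, and the same single-big-jump analysis of $\Pr(S_n>t)$ with the cap-driven restriction to the index set $Z$ (the paper carries this out by enumerating the partitions $(J_1,J_2,J_3)$ and showing only $|J_2\cup J_3|\le 1$ contributes, which is exactly your ``at most one large $Y_j$'' argument made explicit). One small correction: the asymptotic $E[Y_i]\sim d_i\overline{F}_i(d_i)\,(\lambda_i^{1-\alpha}-1)/(1-\alpha)$ does not need $\alpha>1$ --- the integral runs over the bounded ratio range $[1,l_i/d_i]$ with $l_i/d_i\to\lambda_i<\infty$, so locally uniform convergence of $\overline{F}_i(d_i x)/\overline{F}_i(d_i)$ (Potter's bounds plus dominated convergence, as the paper does) yields it for every $\alpha>0$, which matters because Model 1 only assumes $\alpha>0$ and the theorem must cover that full range (with the usual $\ln\lambda_i$ interpretation at $\alpha=1$).
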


A smaller $\mathrm{DR}_{i}(1)$ is preferred for pool participant $i$
because it means that more risk is diversified by joining the pool. From Theorem
\ref{theorem3}, since $\delta_i >0$ for any $\lambda_i>1$ and $\Delta_{\xi,\boldsymbol{\lambda}}\geq0$, we see that if
$\xi \geq1$, then $\mathrm{DR}_{i}(1) \geq 1$. In other words, if the deductible
$d_{i}$ is set at a higher level than $\mathrm{VaR}_{p}(X_{i})$, too much risk
is retained, and pool participant $i$ would fail to gain any risk diversification by joining
the pool. Since this is not an ideal situation, we focus on the case where $\xi<1$ in applications. Moreover, $\Delta_{\xi,\boldsymbol{\lambda}}$ can be considered as a representation of the overall risk diversification level of the pool, and in addition, a smaller $\Delta_{\xi,\boldsymbol{\lambda}}$ is also preferred as it can yield lower $\mathrm{DR}_{i}(1)$ {due to $\delta_i$ being positive}. 

\subsection{Pool with general losses}

In this subsection, we consider the general case where risks can have different
heavy-tailedness, which is specified in the following model.

\bigskip

\textbf{Model 2}

\begin{itemize}
\item $\overline{F}_{i}\in \mathrm{RV}_{-\alpha_{i}}$ with $\alpha_{i}>0$, for
$i=1,...,n$, and $\alpha_{1}=\min \left \{  \alpha_{1},...,\alpha_{n}\right \}
$. Furthermore, we assume that there exists $\theta_{i}\geq0$ such that
\[
\lim_{t\rightarrow \infty}\frac{\overline{F}_{i}(t)}{\overline{F}_{1}%
(t)}=\theta_{i}\geq0,
\]
provided that $\theta_{i}>0$ if $\alpha_{i}=\alpha_{1}$ and $\theta_{i}=0$ if
$\alpha_{i}>\alpha_{1}$.

\item The attachment point and limit are functions of $p$, i.e., $d_{i}=d_{i}(p)$
and $l_{i}=l_{i}(p)$, $i=1,...,n$, which satisfy that for some $\xi>0$,
\begin{equation}
\lim_{p\rightarrow1}\frac{d_{i}(p)}{\mathrm{VaR}_{p}(X_{i})}=\xi^{\alpha
_{1}/\alpha_{i}},\label{a3}%
\end{equation}
and for some $\lambda_{i}>1$,
\[
\lim_{p\rightarrow1}\frac{l_{i}(p)}{d_{i}(p)}=\lambda_{i}>1.
\]

\end{itemize}

\bigskip

In Model 2, the tails of $X_{1},X_{2},...,X_{n}$ can decay at different speeds
captured by the tail indices $\alpha_{i}$. There is a more substantial
difference among the risks than that in Model 1, {where the tails decay at the same speed (i.e., with same tail index), albeit with different scales}. A smaller tail
index means a heavier tail for the risk. Without loss of generality, the loss $X_1$ is
assumed to have the heaviest tail in the pool.

When a loss $X_{i}$ has the same tail index as $X_{1}$, they are
tail equivalent, which is the same as in Model 1, but may be
different in scale, again captured by $\theta_{i}>0$. When a loss $X_{i}$ has
a different tail index from $X_{1}$, by the assumption that $X_{1}$ has the
heaviest tail, its tail index $\alpha_{i}$ is strictly larger than $\alpha
_{1}$, and its scale parameter $\theta_{i}$ is still well-defined as $0$.

The attachment point $d_{i}$ is assumed to be $\xi^{\alpha_{1}/\alpha_{i}}$
times the benchmark risk level $\mathrm{VaR}_{p}(X_{i})$, which takes the difference in heavy-tailedness into account. When $\alpha_{i}=\alpha_{1}$, it is reduced to $\xi$ as seen in
Model 1. Again, we allow the limits, represented by $\lambda_{i}$'s, to be different for each loss and they
will be optimally chosen for fixed levels of $\xi$ in order to achieve an asymptotically optimal pool in Subsection
\ref{optimal pool}. 

Overall, when all tail indices in Model 2 are the same, Model 2 reduces to
Model 1. Similarly to Lemma \ref{lem1}, we obtain the following useful
relations for Model 2.

\begin{lemma}
\label{lem3}Under Model 2, we have
\[
\lim_{p\rightarrow1}\frac{d_{i}(p)}{d_{1}(p)}=\theta_{i}^{1/\alpha_{1}}
,\qquad \lim_{p\rightarrow1}\frac{\overline{F}_{i}(d_{i})}{\overline{F}
_{1}(d_{1})}=1,
\]
and
\[
\lim_{p\rightarrow1}\frac{d_{i}(p)}{\mathrm{VaR}_{p}(X_{1})}=\theta
_{i}^{1/\alpha_{1}}\xi,\qquad \lim_{p\rightarrow1}\frac{l_{i}(p)}{\mathrm{VaR}
_{p}(X_{1})}=\lambda_{i}\theta_{i}^{1/\alpha_{1}}\xi.
\]

\end{lemma}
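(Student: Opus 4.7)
The plan is to parallel the proof of Lemma \ref{lem1}, with extra care to handle losses whose tail indices may differ from $\alpha_1$. All four limits in Lemma \ref{lem3} reduce to combinations of the two assumed limits in Model 2 (on $d_i(p)/\mathrm{VaR}_p(X_i)$ and $l_i(p)/d_i(p)$) together with an auxiliary VaR-ratio limit, which I would establish first.

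The auxiliary claim is
$$\lim_{p\to 1}\frac{\mathrm{VaR}_p(X_i)}{\mathrm{VaR}_p(X_1)}=\theta_i^{1/\alpha_1}.$$
I would split into two subcases. If $\alpha_i=\alpha_1$ and $\theta_i>0$, the tail-equivalence relation $\overline{F}_i/\overline{F}_1\to\theta_i$ with both tails in $\mathrm{RV}_{-\alpha_1}$ is inverted using the inverse-function theorem for regularly varying functions (see \cite{haan2006extreme}) to yield the quantile-ratio limit $\theta_i^{1/\alpha_1}$. If $\alpha_i>\alpha_1$ (so $\theta_i=0$), the tail-quantile function $U_j(t):=\mathrm{VaR}_{1-1/t}(X_j)$ lies in $\mathrm{RV}_{1/\alpha_j}$, so $U_i(t)/U_1(t)\in\mathrm{RV}_{1/\alpha_i-1/\alpha_1}$ has a negative index and therefore vanishes at infinity, matching $\theta_i^{1/\alpha_1}=0$.

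With the auxiliary limit in hand, I chain limits:
$$\frac{d_i(p)}{\mathrm{VaR}_p(X_1)}=\frac{d_i(p)}{\mathrm{VaR}_p(X_i)}\cdot\frac{\mathrm{VaR}_p(X_i)}{\mathrm{VaR}_p(X_1)}\to\xi^{\alpha_1/\alpha_i}\cdot\theta_i^{1/\alpha_1}=\theta_i^{1/\alpha_1}\xi,$$
where in the last equality either $\alpha_i=\alpha_1$ (so $\xi^{\alpha_1/\alpha_i}=\xi$) or $\theta_i=0$ (so both sides are zero). Dividing this by the $i=1$ version gives $d_i(p)/d_1(p)\to\theta_i^{1/\alpha_1}$, and multiplying by $l_i(p)/d_i(p)\to\lambda_i$ yields $l_i(p)/\mathrm{VaR}_p(X_1)\to\lambda_i\theta_i^{1/\alpha_1}\xi$.

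For the tail-probability ratio, I write $d_i(p)=c_i(p)\mathrm{VaR}_p(X_i)$ with $c_i(p)\to\xi^{\alpha_1/\alpha_i}$. The uniform convergence theorem for regularly varying functions applied to $\overline{F}_i\in\mathrm{RV}_{-\alpha_i}$ gives $\overline{F}_i(d_i(p))\sim c_i(p)^{-\alpha_i}\overline{F}_i(\mathrm{VaR}_p(X_i))=c_i(p)^{-\alpha_i}(1-p)$, and the analogous statement for $i=1$ gives $\overline{F}_1(d_1(p))\sim c_1(p)^{-\alpha_1}(1-p)$. Since both $c_i(p)^{-\alpha_i}$ and $c_1(p)^{-\alpha_1}$ converge to $\xi^{-\alpha_1}$, their ratio tends to $1$, proving the second claim of the lemma. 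The main obstacle is the case $\alpha_i>\alpha_1$, which has no counterpart in Lemma \ref{lem1}: one must verify that the degenerate value $\theta_i=0$ is consistent with every chained limit above and, in particular, that the auxiliary VaR ratio vanishes at a rate compatible with treating $\theta_i^{1/\alpha_1}$ as $0$ in each algebraic step.
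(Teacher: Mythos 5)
Your proposal is correct and follows essentially the same route as the paper's proof: treat the $\alpha_i=\alpha_1$ case exactly as in Lemma \ref{lem1}, handle $\alpha_i>\alpha_1$ by showing the VaR ratio (and hence all chained ratios) vanishes consistently with $\theta_i=0$, and obtain the tail-probability limit by normalizing $\overline{F}_i(d_i)$ and $\overline{F}_1(d_1)$ by $1-p$ and invoking regular variation, which forces both normalized quantities to $\xi^{-\alpha_1}$. Your justification of the vanishing VaR ratio via the tail-quantile functions being $\mathrm{RV}_{1/\alpha_j}$, and your explicit appeal to the uniform convergence theorem for the varying ratio $c_i(p)$, are just slightly more detailed versions of the steps the paper states directly.
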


When $\theta_{i} \neq 0$, Lemma
\ref{lem3} confers the same implications as Lemma \ref{lem1}. When $\theta_{i}=0$, three
limits in Lemma \ref{lem3} are reduced to 0. This result is intuitive because when
$\theta_{i}=0$, $X_{i}$ has a lighter tail than $X_{1}$
and its attachment point $d_{i}$ (or its limit $l_{i}$) grows at a lower speed than
$d_{1}$ (or $l_{1}$) as $p$ approaches 1. Nonetheless, for any $\theta_{i}\geq0$,
the attachment points are still set in such a way that the tail probability of each
loss at its attachment point is approximately the same. Moreover, following
the proof of Lemma \ref{lem3}, we can show that for $\theta_{i}\geq0$,
\[
\lim_{p\rightarrow1}\frac{\overline{F}_{i}(l_{i})}{\overline{F}_{1}(l_{1}%
)}=\frac{\lambda_{1}^{\alpha_{1}}}{\lambda_{i}^{\alpha_{i}}}.
\]
This means that the tail probability of loss $X_{i}$ at its limit $l_{i}$ is
$\lambda_{1}^{\alpha_{1}}/\lambda_{i}^{\alpha_{i}}$ times that of $X_{1}$,
which depends on $\lambda_{1}$, $\lambda_{i}$, and the tail indices. 

Now, we are ready to show the asymptotic expression of $\mathrm{DR}_{i}(p)$ as
$p\rightarrow1$ under Model 2.

\begin{theorem}
\label{difthm}Under Model 2, we have
\[
\mathrm{DR}_{i}(1)=\left \{
\begin{array}
[c]{lll}%
1+\widetilde{\delta}_i \Delta_{\xi,\boldsymbol{\lambda}}, &  & \xi \geq1,\\
\xi^{\alpha_{1}/\alpha_{i}}+\widetilde{\delta}_i \Delta_{\xi,\boldsymbol{\lambda}}, &
& 1/\lambda_{i}\leq \xi^{\alpha_{1}/\alpha_{i}}<1,\\
1-(\lambda_{i}-1)\xi^{\alpha_{1}/\alpha_{i}}+\widetilde{\delta}_i \Delta_{\xi
,\boldsymbol{\lambda}}, &  & \xi^{\alpha_{1}/\alpha_{i}}<1/\lambda_{i}.
\end{array}
\right.
\]
where
\begin{equation}
\widetilde{\delta}_i = \frac{\left(  1-\alpha
_{1}\right)  \left(  \lambda_{i}^{1-\alpha_{i}}-1\right)  \xi^{\alpha
_{1}/\alpha_{i}-1}}{(1-\alpha_{i})\sum_{j=1}^{n}\left(  \lambda_{j}%
^{1-\alpha_{1}}-1\right)  \theta_{j}^{1/\alpha_{1}}},
\qquad
\Delta_{\xi,\boldsymbol{\lambda}}=\left(  \sum_{j\in
Z}\left(  \theta_{j}^{-1/\alpha_{1}}+\xi \right)  ^{-\alpha_{1}}\right)
^{1/\alpha_{1}}\label{delta}%
\end{equation}
with $Z=\left \{  j=1,2,...,n:\xi>\theta_{j}^{-1/\alpha_{1}}\left(  \lambda
_{j}-1\right)  ^{-1}\right \}  $.
\end{theorem}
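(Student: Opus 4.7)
The plan is to decompose
\[
\mathrm{DR}_i(p)=\frac{\mathrm{VaR}_p(X_i-Y_i)}{\mathrm{VaR}_p(X_i)}+\frac{E[Y_i]}{E[S_n]}\cdot\frac{\mathrm{VaR}_p(S_n)}{\mathrm{VaR}_p(X_i)}
\]
into a retained-loss ratio and a premium ratio, compute the limit of each piece as $p\to 1$ with the help of Lemma~\ref{lem3}, and then combine them. The argument parallels that of Theorem~\ref{theorem3}, with careful bookkeeping for the heterogeneous tail indices.

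For the retained-loss ratio, I first note that $x\mapsto x-Y_i(x)=\min(x,d_i)+(x-l_i)^+$ is non-decreasing and continuous, so quantiles pass through:
\[
\mathrm{VaR}_p(X_i-Y_i)=\min(\mathrm{VaR}_p(X_i),d_i)+(\mathrm{VaR}_p(X_i)-l_i)^+.
\]
Dividing by $\mathrm{VaR}_p(X_i)$ and using the Model~2 assumptions $d_i/\mathrm{VaR}_p(X_i)\to\xi^{\alpha_1/\alpha_i}$ and $l_i/\mathrm{VaR}_p(X_i)\to\lambda_i\xi^{\alpha_1/\alpha_i}$ produces the three possible first-term values $1$, $\xi^{\alpha_1/\alpha_i}$, or $1-(\lambda_i-1)\xi^{\alpha_1/\alpha_i}$ according to where $\xi^{\alpha_1/\alpha_i}$ falls relative to $1/\lambda_i$ and $1$.

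For the premium ratio I would first handle the expected layer values. Integration by parts gives $E[Y_i]=\int_{d_i}^{l_i}\overline{F}_i(x)\,dx$. Substituting $x=d_i(1+s)$ and using the uniform convergence theorem for regularly varying functions (so that $\overline{F}_i(d_i(1+s))/\overline{F}_i(d_i)\to(1+s)^{-\alpha_i}$ uniformly on $[0,\lambda_i-1]$) together with Karamata's theorem yields
\[
E[Y_i]\sim \frac{d_i\overline{F}_i(d_i)(\lambda_i^{1-\alpha_i}-1)}{1-\alpha_i}.
\]
Summing and invoking Lemma~\ref{lem3} (which gives $d_j\overline{F}_j(d_j)\sim\theta_j^{1/\alpha_1}\xi^{1-\alpha_1}\mathrm{VaR}_p(X_1)\overline{F}_1(\mathrm{VaR}_p(X_1))$ whenever $\alpha_j=\alpha_1$, and shows the tail-lighter summands to be negligible since $\theta_j=0$ there) extracts the leading order of $E[S_n]$. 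Forming the ratio $E[Y_i]/E[S_n]$ and multiplying by $\mathrm{VaR}_p(X_1)/\mathrm{VaR}_p(X_i)$ then recovers exactly $\widetilde{\delta}_i$ from \eqref{delta}.

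The hard part will be the asymptotic behavior of $\mathrm{VaR}_p(S_n)/\mathrm{VaR}_p(X_1)$, which should converge to $\Delta_{\xi,\boldsymbol{\lambda}}$. Here I would invoke the single-big-jump principle for sums of independent heavy-tailed variables, adapted to the truncated layer losses: for $x$ of order $\mathrm{VaR}_p(X_1)$, $\Pr(Y_j>x)=\overline{F}_j(d_j+x)$ whenever $x<l_j-d_j$ and vanishes otherwise, while cross-term contributions to $\Pr(S_n>x)$ are of order $(1-p)^2$ by independence together with the subexponentiality of regularly varying tails. Dividing by $\overline{F}_1(\mathrm{VaR}_p(X_1))=1-p$ and letting $p\to 1$ leaves the factor $(\theta_j^{-1/\alpha_1}+\xi)^{-\alpha_1}$ for each $j$ with $\alpha_j=\alpha_1$, while tail-lighter summands ($\alpha_j>\alpha_1$, so $\theta_j=0$) drop out; the truncation condition in the scaled limit becomes $\xi>\theta_j^{-1/\alpha_1}(\lambda_j-1)^{-1}$, exactly defining the index set $Z$. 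Solving the resulting asymptotic relation delivers $\Delta_{\xi,\boldsymbol{\lambda}}$, and combining it with $\widetilde{\delta}_i$ and the three first-term values completes the proof in each regime.
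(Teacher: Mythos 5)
Your proposal is correct and follows essentially the same route as the paper: the same three-term split, the same use of Lemma~\ref{lem3} and a Karamata/Potter-type argument for $E[Y_i]$ and $E[S_n]$, and the same single-big-jump analysis of $\Pr(S_n>t)$ (which the paper carries out explicitly via the partition into below-deductible, in-layer, and above-limit index sets) followed by inversion to get $\mathrm{VaR}_p(S_n)/\mathrm{VaR}_p(X_1)\to\Delta_{\xi,\boldsymbol{\lambda}}$. The only quibble is that invoking ``subexponentiality'' for the cross terms is unnecessary since the layer losses are bounded; the needed bound is just $\Pr(Y_j>0)=\overline{F}_j(d_j)=O(1-p)$, which is what the paper uses.
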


When all tail indices $\alpha_{i}$ are the same, Theorem \ref{difthm} reduces to Theorem \ref{theorem3}. Similarly to Theorem
\ref{theorem3}, we observe that a 
smaller $\Delta_{\xi,\boldsymbol{\lambda}}$
can yield a lower $\mathrm{DR}_{i}(1)$, which is the preferred case. 
Note that
if $\theta_{j}=0$, then risk $j$ does not belong to the set $Z$ in
(\ref{delta}). Thus, lighter-tailed risks have no contribution to
$\Delta_{\xi,\boldsymbol{\lambda}}$, and only losses as heavy-tailed
as $X_{1}$ determine the overall risk diversification level of the pool.

\subsection{Asymptotically optimal pool\label{optimal pool}}

In this subsection, we derive the optimal pooling structure in the limit for a pool
with general losses under Model 2. More specifically, we aim at minimizing
$\mathrm{DR}_{i}(1)$ for each participant based on the asymptotic expression
derived earlier in Theorem \ref{difthm}, where the value of $\xi$ is fixed and we solve for the optimal $\lambda_{i}$. The resulting pool, called the asymptotically
optimal pool, enables a simultaneous maximization of the diversification benefit
for all participants in the pool. It provides an
approximation to the practical problem (\ref{pop}) at a level $p$ close to 1.

\begin{theorem}
\label{thm:opt_sol}Assume that $0\leq \theta_i \leq 1$ for $i=1,...,n$. Under Model 2, we have for any $\xi>0$,%
\[
\min_{\boldsymbol{\lambda}}\mathrm{DR}_{i}(1)=\left \{
\begin{array}
[c]{lll}%
1, &  & \xi \geq1,\\
\xi^{\alpha_{1}/\alpha_{i}}, &  & 0<\xi<1,
\end{array}
\right.
\]
where the optimal $\boldsymbol{\lambda}^{\ast}=\arg \min_{\boldsymbol{\lambda}%
}\mathrm{DR}_{i}(1)$ satisfies that
\[
\boldsymbol{\lambda}^{\ast}\in \{(\lambda_{1},...,\lambda_{n}):\min \left \{
\xi^{-\alpha_{1}/\alpha_{i}},1\right \}  \leq \lambda_{i}\leq1+\theta
_{i}^{-1/\alpha_{1}}\xi^{-1},i=1,...,n\}.
\]

\end{theorem}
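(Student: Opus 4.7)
The plan is to attack the minimization by splitting $\mathrm{DR}_{i}(1)$ into its case-dependent leading term and the common pooling correction $\widetilde{\delta}_{i}\Delta_{\xi,\boldsymbol{\lambda}}$, and to zero out the correction before optimizing the leading term. The first step is to verify that $\widetilde{\delta}_{i}>0$ for every admissible $\boldsymbol{\lambda}$. After pulling the positive factor $\xi^{\alpha_{1}/\alpha_{i}-1}$ outside, each remaining sign-bearing quantity in $\widetilde{\delta}_{i}$ takes the form $(\lambda^{1-\alpha}-1)/(1-\alpha)$, which is strictly positive for any $\lambda>1$ and $\alpha\neq 1$ (the degenerate case $\alpha=1$ being read as $\log\lambda>0$). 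Combined with $\Delta_{\xi,\boldsymbol{\lambda}}\geq 0$ by definition, this yields $\widetilde{\delta}_{i}\Delta_{\xi,\boldsymbol{\lambda}}\geq 0$, and the nonnegative correction vanishes precisely when the index set $Z$ is empty, i.e., $\lambda_{j}\leq 1+\theta_{j}^{-1/\alpha_{1}}\xi^{-1}$ for every $j$ (with $\theta_{j}=0$ automatically excluding $j$ from $Z$). This identifies the upper bound appearing in the stated minimizing set.

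With $\Delta_{\xi,\boldsymbol{\lambda}}=0$ secured, minimizing $\mathrm{DR}_{i}(1)$ reduces to optimizing only the piecewise leading term in Theorem \ref{difthm}. For $\xi\geq 1$ this leading term is identically $1$, so $\min\mathrm{DR}_{i}(1)=1$ is attained by any feasible $\lambda_{i}>1$ meeting only the upper-bound constraint. For $0<\xi<1$ the leading term equals $\xi^{\alpha_{1}/\alpha_{i}}$ on the branch $\lambda_{i}\geq\xi^{-\alpha_{1}/\alpha_{i}}$ and equals $1-(\lambda_{i}-1)\xi^{\alpha_{1}/\alpha_{i}}$ on the branch $\lambda_{i}<\xi^{-\alpha_{1}/\alpha_{i}}$. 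The latter expression is strictly decreasing in $\lambda_{i}$ and continuously matches $\xi^{\alpha_{1}/\alpha_{i}}$ at the interface $\lambda_{i}=\xi^{-\alpha_{1}/\alpha_{i}}$, so the infimum over the two branches is $\xi^{\alpha_{1}/\alpha_{i}}$ and is attained on the first. Combining the two cases delivers the stated optimal values.

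Finally I would verify that the minimizing set is nonempty, which is where the hypothesis $0\leq\theta_{j}\leq 1$ enters. If $\theta_{j}=0$ the upper bound on $\lambda_{j}$ is $+\infty$, imposing no restriction there. If $\theta_{j}>0$ and $\xi\geq 1$ the upper bound exceeds $1$ trivially. For the remaining case $0<\xi<1$ and $\theta_{j}>0$, the chain $\xi^{-\alpha_{1}/\alpha_{j}}-1\leq\xi^{-1}-1\leq\theta_{j}^{-1/\alpha_{1}}\xi^{-1}$, which follows from $\alpha_{1}\leq\alpha_{j}$ and $\theta_{j}\leq 1$, shows that the optimality requirement $\lambda_{j}\geq\xi^{-\alpha_{1}/\alpha_{j}}$ is compatible with the upper bound; in particular every actual minimizer then satisfies the weaker condition $\lambda_{j}\geq\min\{\xi^{-\alpha_{1}/\alpha_{j}},1\}$, giving the inclusion claimed in the theorem. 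The main subtlety I anticipate is the sign analysis of $\widetilde{\delta}_{i}$, which at first glance looks to depend delicately on the positions of the $\alpha_{j}$'s relative to $1$ but is in fact uniformly positive after the regrouping above; once that is in hand, the rest is a mechanical piecewise optimization of the explicit formula of Theorem \ref{difthm}.
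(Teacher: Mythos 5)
Your proposal is correct and follows essentially the same route as the paper's proof: force $\Delta_{\xi,\boldsymbol{\lambda}}=0$ by making the set $Z$ empty (giving the upper bound $\lambda_j\leq 1+\theta_j^{-1/\alpha_1}\xi^{-1}$), minimize the remaining piecewise leading term (giving $\lambda_i\geq\xi^{-\alpha_1/\alpha_i}$ when $0<\xi<1$), and check compatibility of the two constraints using $0\leq\theta_i\leq1$ and $\alpha_1=\min_j\alpha_j$. You are merely more explicit than the paper on two points it leaves implicit — the strict positivity of $\widetilde{\delta}_i$ via $(\lambda^{1-\alpha}-1)/(1-\alpha)=\int_1^{\lambda}x^{-\alpha}\,dx>0$, and the branch-by-branch comparison of the leading term — which is a welcome tightening rather than a different method.
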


We note that in Theorem \ref{thm:opt_sol}, the condition $0\leq \theta_i \leq 1$ is without loss of generality. Indeed, $0<\theta_i \leq 1$ means that $X_1$ is set so that all other risks of same heavy-tailedness as $X_1$ has smaller scales, while $\theta_i=0$ means that the risk of the $i$-th participant has a lighter tail than $X_1$. As such, this theorem can be applied to any pool of risks satisfying Model 2.

Theorem \ref{thm:opt_sol} shows that at the limit, for any given $\xi $, each $\mathrm{DR}_{i}(1)$ is minimized,
which means that all participants in the pool can simultaneously obtain the
maximum diversification benefit from the pool. We show with a set of
simulations in the next section that the asymptotically optimal solution is a good
approximation to the solution of the practical optimization problem (%
\ref{pop}).

If the given $\xi$ is greater than or equal to $1$, then each participant has a
$\mathrm{DR}_{i}(1)$ of at least $1$, which is not ideal as no risk
reduction is obtained from joining the pool. When the given $\xi$ is less than
1, each participant can achieve the lowest $\mathrm{DR}_{i}(1)$ as
$\xi^{\alpha_{1}/\alpha_{i}}<1$. Therefore, each
participant obtains the largest risk reduction by joining the pool and the
pool distributes the diversification benefit efficiently among participants.

From Theorem \ref{thm:opt_sol}, we see that the lower $\xi$ is, the more
diversification benefit each participant obtains. However, it is important to point out that the mathematical proofs underlying the current methodology do
not apply to the case where $\xi=0$. {Therefore, the asymptotic pool in this paper is
derived for any given $\xi$ that is less than 1 and strictly greater than 0.}

From the proof of Theorem \ref{thm:opt_sol}, we see that the feasible set
$\boldsymbol{\lambda}^{\ast}$ is obtained from solving $\Delta
_{\xi,\boldsymbol{\lambda}}=0$ for any given $\xi$. It means that the asymptotically
optimal pool can completely diversify away the aggregated risk. As a result, each participant is only left with the retained risk. Participants with heavier tails (i.e., smaller
$\alpha_{i}$) can gain a larger diversification benefit (smaller $\xi
^{\alpha_{1}/\alpha_{i}}$) in this pool. For a participant with a lighter
tail (i.e., $\alpha_{i}>\alpha_{1}$), its scale $\theta_{i}$ is $0$, which leads to the
upper bound of $\lambda_{i}^{\ast}$ being $\infty$. This means that a 
participant with lighter-tailed loss $X_i$ can bring to the pool a layer loss with a limit $l_i$ unlimitedly larger than the benchmark risk $\mathrm{VaR}_p(X_i)$, which
provides greater flexibility for these pool participants.

\section{Simulation study}\label{sec:sim}
In this section, we use simulations to examine the effectiveness of the
asymptotically optimal pool derived in Section \ref{optimal pool} as an approximation to the practical optimal pool defined
in (\ref{pop}). Note that the problem (\ref{pop}) is naturally a
multi-objective optimization problem, since it seeks to minimize the
participant-specific DRs simultaneously. In general, such a problem may not
admit a single common minimizer, and therefore a scalarization is needed for
numerical implementation. We adopt the weighted-sum formulation of the practical optimization problem with equal weights as follows
\begin{equation}
\min_{\boldsymbol{\lambda }}\sum_{i=1}^{n}\mathrm{DR}_{i}(p).  \label{PAP}
\end{equation}%
This criterion can be interpreted as the minimization of the average diversification
ratio across pool participants, and thus provides a natural and tractable
solution to the original objective. We therefore compare $\boldsymbol{%
\lambda }^{\ast }$, obtained in Theorem \ref{thm:opt_sol}, with the optimal
solution $\boldsymbol{\lambda }(p)$ to problem (\ref{PAP}).

To solve the multidimensional optimization problem (\ref{PAP}), we apply the
Generalized Simulated Annealing (GSA) algorithm implemented in the R package 
\texttt{GenSA}. GSA is a generalized version of simulated annealing, a
widely used stochastic optimization method inspired by the annealing process
in metallurgy, where heat treatment is used to reduce a material's internal
energy (see e.g. \citealp{tsallis1996generalized} and \citealp{xiang_generalized_2013}). In Appendix %
\ref{Appendix_algo}, we compare GSA with four other optimization algorithms
and find that it provides accurate solutions while remaining computationally
efficient.

In this study, we assume that each loss $X_{i}$ follows a Fr\'{e}chet distribution
\[
F_i(x)=e^{-(x/s_{i})^{-\alpha_{i}}},\qquad s_{i}>0\text{,}%
\]
denoted by $F_i\sim \mathrm{Fr\acute{e}%
chet}\left(  \alpha_{i},s_{i}\right)  $. It can be
shown that $\overline{F}_i\in \mathrm{RV}_{-\alpha_{i}}$ and that the analytical form of its quantile function is $F_i^\leftarrow(p) = s_{i}[-\ln(p)]^{-1/\alpha_{i}}$. To compute the numerical solution of \eqref{PAP} at a level $p$, we first
simulate a sample of one million observations for the random vector
$(X_{1},...,X_{n})$, denoted by $\{(X_{j,1},...,X_{j,n})\}_{\{j=1,...,m\}}$,
with $m=1,000,000$. For a given $\xi$ and each $(\lambda_{1},...,\lambda_{n}%
)$, the attachment point $d_{i}$ is computed as $\xi^{\alpha_1/\alpha_{i}}%
F_{i}^{\leftarrow}(p)$, and the limit $l_{i}$ is $\lambda_{i}d_{i}$ for
$i=1,...,n$. The observed layer loss $Y_{j,i}$ for each observation $X_{j,i}$
are then computed by using (\ref{Y}). Thus the $j$-th observation of the
aggregated loss of the pool is $S_{j}=\sum_{i=1}^{n}Y_{j,i}$. Let $Z_{(1)}\leq
Z_{(2)}\leq \cdots \leq Z_{(m)}$ denote the order statistics of the sample
$Z_{1},...,Z_{m}$, $Z_{(1),i}\leq Z_{(2),i}\leq \cdots \leq Z_{(m),i}$
denote the order statistics of the sample $Z_{1,i},...,Z_{m,i}$, and $\lfloor x \rfloor$ denote the integer value of $x$. Then
$\mathrm{DR}_{i}(p)$ is estimated as
\[
\widehat{\mathrm{DR}}_{i}(p)=\frac{\mathrm{VaR}_{p}(X_{i}-Y_{i})}%
{\mathrm{VaR}_{p}(X_{i})}+\frac{\widehat{E[Y_{i}]}}{\sum_{i=1}^{n}%
\widehat{E[Y_{i}]}}\frac{S_{(\lfloor pm\rfloor)}}{X_{(\lfloor pm\rfloor),i}},
\]
where
\[
\frac{\mathrm{VaR}_{p}(X_{i}-Y_{i})}{\mathrm{VaR}_{p}(X_{i})}=\left \{
\begin{array}
[c]{lll}%
1, &  & F_{i}^{\leftarrow}(p)\leq d_{i},\\
\frac{d_{i}}{F_{i}^{\leftarrow}(p)}, &  & d_{i}<F_{i}^{\leftarrow}(p)\leq
l_{i},\\
1-\frac{l_{i}-d_{i}}{F_{i}^{\leftarrow}(p)}, &  & F_{i}^{\leftarrow}(p)>l_{i},
\end{array}
\right.
\]
and $\widehat{E[Y_{i}]}$ is the approximation of
\[
E[Y_{i}]=\int_{d_{i}}^{l_{i}}\overline{F}_{i}(x)dx=\int_{d_{i}}^{l_{i}}\left(
1-e^{-(x/s_{i})^{-\alpha_{i}}}\right)  dx
\]
 by using function \texttt{integrate()}
in R. 
Then we apply the GSA algorithm to search for the solution
$\boldsymbol{\widehat{\lambda}}(p)$ of (\ref{PAP}).

Since the asymptotical solutions $\boldsymbol{\lambda}^{\ast}$ form a set and the
GSA algorithm only produces one single value of $\boldsymbol{\widehat{\lambda}%
}(p)$ for each sample, we define the distance between $\boldsymbol{\lambda}^{\ast}$ and
$\boldsymbol{\widehat{\lambda}}(p)$ as
\begin{equation}
\Pi(p)=\min_{\boldsymbol{\lambda}^{\ast}}\left\Vert \boldsymbol{\widehat{\lambda}}(p) - \boldsymbol{\lambda}^{\ast}\right\Vert, \label{rd}%
\end{equation}
where $\left\Vert \cdot \right\Vert$ denotes a Euclidean norm.

In the first study, we consider two tail equivalent losses $X_{1}\sim \mathrm{Fr\acute{e}%
chet}\left(  8.5,100\right)  $ and $X_{2}\sim \mathrm{Fr\acute{e}%
chet}\left(
8.5,90\right)  $. They have the same tail index of $8.5$ but with
different scales. It can be shown that $\theta_{1}=1$ and $\theta
_{2}=(9/10)^{8.5}$ according to (\ref{TE}). Thus $X_{1}$ and $X_{2}$ satisfy
Model 1. We use 50 samples, for each of which we
repeat the process of finding $\boldsymbol{\widehat{\lambda}}(p)= (\widehat{\lambda}_{1}(p),\widehat{\lambda}_{2}(p))$ for a selection of levels $p$: $\{0.8,0.825,0.85,0.875,0.9,0.925,0.95,0.975,0.99\}$.
A summary of the results is shown in Table \ref{tb1}.

\begin{table}[htbp]
    \footnotesize
    \centering
    \begin{tabular}{|c|c|c|c|}
        \hline
        & $\widehat{\lambda}_1(p)$ & $\widehat{\lambda}_2(p)$ & $\Pi(p)$ \\ \hline

        \multicolumn{4}{|l|}{}\\
        \multicolumn{4}{|l|}{$\xi=0.1^{1/8.5}$, $\boldsymbol{\lambda}^{\ast}\in \{(\lambda_{1},\lambda_{2}):1.3111\leq \lambda_{1}\leq2.3111,1.3111\leq \lambda_{2}\leq2.4568\}$}\\ \hline
        $p=0.800$ & 1.311136 ($3.485\times10^{-6}$) & 1.311134 ($8.637\times10^{-7}$) & $2.086\times10^{-6}$ ($3.353\times10^{-6}$) \\ 
        $p=0.825$ & 1.311135 ($2.492\times10^{-6}$) & 1.311134 ($6.727\times10^{-7}$) & $2.006\times10^{-6}$ ($2.012\times10^{-6}$) \\ 
        $p=0.850$ & 1.311136 ($3.293\times10^{-6}$) & 1.311134 ($8.034\times10^{-7}$) & $2.559\times10^{-6}$ ($2.754\times10^{-6}$) \\ 
        $p=0.875$ & 1.311137 ($7.945\times10^{-6}$) & 1.311135 ($3.952\times10^{-6}$) & $4.415\times10^{-6}$ ($8.294\times10^{-6}$) \\ 
        $p=0.900$ & 1.311135 ($4.989\times10^{-6}$) & 1.311135 ($1.123\times10^{-6}$) & $3.313\times10^{-6}$ ($4.138\times10^{-6}$) \\
        $p=0.925$ & 1.311137 ($5.973\times10^{-6}$) & 1.311135 ($1.157\times10^{-6}$) & $4.658\times10^{-6}$ ($4.837\times10^{-6}$) \\
        $p=0.950$ & 1.311139 ($7.343\times10^{-6}$) & 1.311135 ($1.067\times10^{-6}$) & $6.000\times10^{-6}$ ($6.355\times10^{-6}$) \\
        $p=0.975$ & 1.280022 ($2.252\times10^{-6}$) & 1.311135 ($2.536\times10^{-6}$) & 0.031112 ($2.252\times10^{-6}$) \\
        $p=0.990$ & 1.290521 (0.01476691) & 7198.80 (13290.34) & 7197.11 (13289.91) \\ \hline
        
        \multicolumn{4}{|l|}{}\\
        \multicolumn{4}{|l|}{$\xi=0.3^{1/8.5}$, $\boldsymbol{\lambda}^{\ast}\in \{(\lambda_{1},\lambda_{2}):1.1522\leq \lambda_{1}\leq2.1522,1.1522\leq \lambda_{2}\leq2.2802\}$}\\ \hline
        $p=0.800$ & 1.147701 ($2.929\times10^{-4}$) & 1.152167 ($2.018\times10^{-6}$) & 0.004465 ($2.929\times10^{-4}$) \\ 
        $p=0.825$ & 1.136951 ($2.245\times10^{-6}$) & 1.152168 ($2.457\times10^{-6}$) & 0.015215 ($2.245\times10^{-6}$) \\ 
        $p=0.850$ & 1.136950 ($6.760\times10^{-8}$) & 1.152166 ($3.083\times10^{-8}$) & 0.015216 ($6.760\times10^{-8}$) \\ 
        $p=0.875$ & 1.136951 ($9.338\times10^{-7}$) & 1.152168 ($9.206\times10^{-7}$) & 0.015215 ($9.338\times10^{-7}$) \\ 
        $p=0.900$ & 1.136950 ($1.093\times10^{-6}$) & 1.152167 ($2.628\times10^{-7}$) & 0.015216 ($1.093\times10^{-6}$) \\ 
        $p=0.925$ & 1.136978 ($1.400\times10^{-4}$) & 1.152168 ($7.317\times10^{-6}$) & 0.015189 ($1.400\times10^{-4}$) \\ 
        $p=0.950$ & 1.136953 ($1.332\times10^{-5}$) & 1.152170 ($1.483\times10^{-5}$) & 0.015213 ($1.328\times10^{-5}$) \\ 
        $p=0.975$ & 1.139445 (0.0054) & 9880.94 (46848.42) & 9879.60 (46848.22) \\ 
        $p=0.990$ & 1.139079 (0.0053) & 2005.45 (5243.49) & 2004.15 (5243.10) \\ \hline
        
        \multicolumn{4}{|l|}{}\\
        \multicolumn{4}{|l|}{$\xi=0.5^{1/8.5}$, $\boldsymbol{\lambda}^{\ast}\in \{(\lambda_{1},\lambda_{2}):1.0850\leq \lambda_{1}\leq2.0850,1.0850\leq \lambda_{2}\leq2.2055\}$}\\ \hline
        $p=0.800$ & 1.076468 ($2.489\times10^{-8}$) & 1.084964 ($5.657\times10^{-9}$) & 0.008496 ($2.489\times10^{-8}$) \\
        $p=0.825$ & 1.076472 ($1.466\times10^{-6}$) & 1.084964 ($1.980\times10^{-8}$) & 0.008492 ($1.466\times10^{-6}$) \\
        $p=0.850$ & 1.076472 ($1.359\times10^{-8}$) & 1.084964 ($5.094\times10^{-9}$) & 0.008492 ($1.359\times10^{-8}$) \\
        $p=0.875$ & 1.076472 ($1.078\times10^{-6}$) & 1.084968 ($1.159\times10^{-6}$) & 0.008492 ($1.078\times10^{-6}$) \\
        $p=0.900$ & 1.076471 ($1.880\times10^{-6}$) & 1.084968 ($2.022\times10^{-6}$) & 0.008493 ($1.880\times10^{-6}$) \\
        $p=0.925$ & 1.076691 (0.001558) & 5274.84 (37291.11) & 5273.74 (37290.95) \\
        $p=0.950$ & 1.076806 (0.001671) & 470.53 (2389.10) & 469.41 (2388.89) \\
        $p=0.975$ & 1.120331 (0.299812) & 35621.42 (111784.24) & 35620.07 (111783.96) \\
        $p=0.990$ & 2.022278 (0.996342) & 1223.92 (6821.75) & 1.223.26 (6821.47) \\ \hline

        \multicolumn{4}{|l|}{}\\
        \multicolumn{4}{|l|}{$\xi=0.7^{1/8.5}$, $\boldsymbol{\lambda}^{\ast}\in \{(\lambda_{1},\lambda_{2}):1.0428\leq \lambda_{1}\leq2.0428,1.0428\leq \lambda_{2}\leq2.1587\}$}\\ \hline
        $p=0.800$ & 1.038573 ($1.515\times10^{-9}$) & 1.042858 ($6.061\times10^{-10}$) & 0.004282 ($1.515\times10^{-9}$) \\
        $p=0.825$ & 1.038573 ($8.314\times10^{-9}$) & 1.042858 ($8.512\times10^{-9}$) & 0.004282 ($8.306\times10^{-9}$) \\
        $p=0.850$ & 1.038570 ($1.584\times10^{-8}$) & 1.042855 ($5.233\times10^{-9}$) & 0.004285 ($1.584\times10^{-8}$) \\
        $p=0.875$ & 1.038571 ($2.295\times10^{-6}$) & 1.042856 ($2.832\times10^{-6}$) & 0.004284 ($2.292\times10^{-6}$) \\
        $p=0.900$ & 1.038570 ($5.663\times10^{-7}$) & 1.042855 ($4.288\times10^{-7}$) & 0.004284 ($5.663\times10^{-7}$) \\
        $p=0.925$ & 1.038571 ($2.935\times10^{-6}$) & 1.042855 ($8.213\times10^{-7}$) & 0.004283 ($2.935\times10^{-6}$) \\
        $p=0.950$ & 1.070454 (0.223433) & 1615.48 (4892.52) & 1614.34 (4892.18) \\
        $p=0.975$ & 2.679588 (6.628282) & 1.042856 ($2.396\times10^{-6}$) & 1.181359 (6.50907) \\
        $p=0.990$ & 3.352461 (7.815203) & 936.62 (6615.53) & 937.17 (6615.14) \\ \hline

    \end{tabular}
    \caption{\small Pool with two tail equivalent losses $X_{1}\sim \mathrm{Fr\acute{e}%
chet}\left(  8.5,100\right)  $ and $X_{2}\sim \mathrm{Fr\acute{e}
chet}\left(
8.5,90\right)$. Sample mean and standard error are reported.}
    \label{tb1}
\end{table}

In the second study, we have two losses $X_{1}\sim \mathrm{Fr\acute{e}%
chet}(8.5,100)$ and $X_{2}\sim \mathrm{Fr\acute{e}chet}(9,100)$. Thus, they have
different tail indices of $8.5$ and $9$. It can be shown that $\theta_{1}=1$
and $\theta_{2}=0$ according to (\ref{TE}). So $X_{1}$ and $X_{2}$ satisfy
Model 2. 
We use 50 samples, for each of which we
repeat the process of finding $\boldsymbol{\widehat{\lambda}}(p)= (\widehat{\lambda}_{1}(p),\widehat{\lambda}_{2}(p))$ for the same selection of levels $p$ as that in the first study.
A summary of the results is shown in Table \ref{tb2}.

\begin{table}[htbp]
    \footnotesize
    \centering
    \begin{tabular}{|c|c|c|c|}
        \hline
        & $\widehat{\lambda}_1(p)$ & $\widehat{\lambda}_2(p)$ & $\Pi(p)$  \\ \hline

        \multicolumn{4}{|l|}{}\\
        \multicolumn{4}{|l|}{$\xi=0.1^{1/8.5}$, $\boldsymbol{\lambda}^{\ast}\in \{(\lambda_{1},\lambda_{2}):1.3111\leq \lambda_{1}\leq2.3111,\lambda_{2}\geq1.2915\}$}\\ \hline
        $p=0.800$ & 1.311135 ($1.776\times10^{-6}$) & 1.291550 ($9.240\times10^{-7}$) & $1.209982\times10^{-6}$ ($1.557\times10^{-6}$) \\
        $p=0.825$ & 1.311134 ($1.456\times10^{-6}$) & 1.291550 ($6.781\times10^{-7}$) & $1.064790\times10^{-6}$ ($1.081\times10^{-6}$) \\
        $p=0.850$ & 1.311135 ($3.336\times10^{-6}$) & 1.291550 ($1.539\times10^{-6}$) & $2.075998\times10^{-6}$ ($2.961\times10^{-6}$) \\
        $p=0.875$ & 1.311135 ($2.428\times10^{-6}$) & 1.291550 ($9.873\times10^{-7}$) & $2.042666\times10^{-6}$ ($1.982\times10^{-6}$) \\
        $p=0.900$ & 1.311137 ($8.617\times10^{-6}$) & 1.291551 ($2.215\times10^{-6}$) & $4.206583\times10^{-6}$ ($8.181\times10^{-6}$) \\
        $p=0.925$ & 1.311138 ($9.893\times10^{-6}$) & 1.291551 ($4.177\times10^{-6}$) & $4.433363\times10^{-6}$ ($9.544\times10^{-6}$) \\
        $p=0.950$ & 1.311138 ($9.893\times10^{-6}$) & 1.291551 ($4.177\times10^{-6}$) & $4.433363\times10^{-6}$ ($9.544\times10^{-6}$) \\
        $p=0.975$ & 1.288958 ($1.336\times10^{-5}$) & 1.291560 ($1.252\times10^{-5}$) & 0.022176 ($1.336\times10^{-5}$) \\
        $p=0.990$ & 1.287235 (0.000133) & 1.291553 ($4.712\times10^{-6}$) & 0.023899 (0.000133) \\ \hline

        \multicolumn{4}{|l|}{}\\
        \multicolumn{4}{|l|}{$\xi=0.3^{1/8.5}$, $\boldsymbol{\lambda}^{\ast}\in \{(\lambda_{1},\lambda_{2}):1.1522\leq \lambda_{1}\leq2.1522,\lambda_{2}\geq1.1431\}$}\\ \hline
        $p=0.800$ & 1.152009 (0.000186) & 1.143137 ($2.242\times10^{-6}$) & 0.000158 (0.000185) \\
        $p=0.825$ & 1.142726 ($1.207\times10^{-5}$) & 1.143141 ($1.186\times10^{-5}$) & 0.009440 ($1.207\times10^{-5}$) \\
        $p=0.850$ & 1.142595 ($1.468\times10^{-5}$) & 1.143144 ($1.057\times10^{-5}$) & 0.009571 ($1.468\times10^{-5}$) \\
        $p=0.875$ & 1.142404 ($6.793\times10^{-5}$) & 1.143157 ($6.801\times10^{-5}$) & 0.009762 ($6.793\times10^{-5}$) \\
        $p=0.900$ & 1.142168 ($9.421\times10^{-6}$) & 1.143143 ($9.787\times10^{-6}$) & 0.009998 ($9.421\times10^{-6}$) \\
        $p=0.925$ & 1.141885 ($1.320\times10^{-5}$) & 1.143140 ($1.337\times10^{-5}$) & 0.010281 ($1.320\times10^{-5}$) \\
        $p=0.950$ & 1.141505 ($2.996\times10^{-5}$) & 1.143144 ($3.104\times10^{-5}$) & 0.010662 ($2.996\times10^{-5}$) \\
        $p=0.975$ & 1.141307 (0.002242) & 1607.40 (9992.78) & 0.010861 (0.002237) \\
        $p=0.990$ & 1.140792 (0.002672) & 3176.93 (12684.18) & 0.011374 (0.002672) \\ \hline

        \multicolumn{4}{|l|}{}\\
        \multicolumn{4}{|l|}{$\xi=0.5^{1/8.5}$, $\boldsymbol{\lambda}^{\ast}\in \{(\lambda_{1},\lambda_{2}):1.0850\leq \lambda_{1}\leq2.0850,\lambda_{2}\geq1.0801\}$}\\ \hline
        $p=0.800$ & 1.079644 ($2.206\times10^{-6}$) & 1.080063 ($2.184\times10^{-6}$) & 0.005320 ($2.206\times10^{-6}$) \\
        $p=0.825$ & 1.079565 ($1.590\times10^{-6}$) & 1.080062 ($1.819\times10^{-6}$) & 0.005400 ($1.590\times10^{-6}$) \\
        $p=0.850$ & 1.079474 ($1.047\times10^{-7}$) & 1.080060 ($1.188\times10^{-8}$) & 0.005489 ($1.047\times10^{-7}$) \\
        $p=0.875$ & 1.079375 ($4.760\times10^{-6}$) & 1.080061 ($4.415\times10^{-6}$) & 0.005589 ($4.760\times10^{-6}$) \\
        $p=0.900$ & 1.079390 ($1.974\times10^{-5}$) & 1.080060 ($5.233\times10^{-9}$) & 0.005574 ($1.974\times10^{-5}$) \\
        $p=0.925$ & 1.079095 ($3.024\times10^{-6}$) & 1.080061 ($2.361\times10^{-6}$) & 0.005869 ($3.024\times10^{-6}$) \\
        $p=0.950$ & 1.078885 ($2.786\times10^{-6}$) & 1.080061 ($3.106\times10^{-6}$) & 0.006079 ($2.786\times10^{-6}$) \\
        $p=0.975$ & 1.078530 ($5.088\times10^{-5}$) & 1.080068 ($1.049\times10^{-5}$) & 0.006434 ($5.088\times10^{-5}$) \\
        $p=0.990$ & 1.401185 (1.013326) & 2646.74 (18707.64) & 0.187982 (0.760469) \\ \hline

        \multicolumn{4}{|l|}{}\\
        \multicolumn{4}{|l|}{$\xi=0.7^{1/8.5}$, $\boldsymbol{\lambda}^{\ast}\in \{(\lambda_{1},\lambda_{2}):1.0428\leq \lambda_{1}\leq2.0428,\lambda_{2}\geq1.0404\}$}\\ \hline
        $p=0.800$ & 1.040178 ($1.922\times10^{-5}$) & 1.040427 ($7.051\times10^{-8}$) & 0.002677 ($1.922\times10^{-5}$) \\
        $p=0.825$ & 1.040100 ($8.694\times10^{-6}$) & 1.040434 ($1.095\times10^{-5}$) & 0.002754 ($8.694\times10^{-6}$) \\
        $p=0.850$ & 1.040061 ($1.776\times10^{-5}$) & 1.040436 ($1.224\times10^{-5}$) & 0.002793 ($1.776\times10^{-5}$) \\
        $p=0.875$ & 1.039997 ($2.079\times10^{-6}$) & 1.040431 ($3.339\times10^{-6}$) & 0.002858 ($2.079\times10^{-6}$) \\
        $p=0.900$ & 1.039934 ($3.940\times10^{-6}$) & 1.040429 ($2.463\times10^{-6}$) & 0.002921 ($3.940\times10^{-6}$) \\
        $p=0.925$ & 1.039856 ($6.023\times10^{-6}$) & 1.040428 ($1.509\times10^{-6}$) & 0.002999 ($6.023\times10^{-6}$) \\
        $p=0.950$ & 1.040296 (0.001181) & 1242.01 (4665.02) & 0.002560 (0.001176) \\
        $p=0.975$ & 1.107629 (0.372054) & 1.040431 ($2.081\times10^{-5}$) & 0.034428 (0.208918) \\
        $p=0.990$ & 3.289461 (6.477553) & 671.27 (2688.12) & 1.927083 (6.251131) \\ \hline

    \end{tabular}
    \caption{\small Pool with two losses of different heavy-tailedness $X_{1}\sim \mathrm{Fr\acute{e}%
chet}(8.5,100)$ and $X_{2}\sim \mathrm{Fr\acute{e}chet}(9,100)$. Sample mean and standard error are reported.}
    \label{tb2}
\end{table}

As mentioned above, an auxiliary comparison between optimization algorithms is performed to select the best candidate for these two studies (see Appendix \ref{Appendix_algo} for further details). In the experiments within this comparison, we use a smaller set of 20 samples with parameters from the second study of risks of different indices, and set $\xi=0.3^{1/8.5}$ and $p=0.95$. On average, it takes 3.74 hours to obtain each $\boldsymbol{\widehat{\lambda}}(p)$ corresponding to each of the 20 samples using the GSA algorithm\footnote{These experiments were carried out using the group of servers ``biglinux.math'' of University of Waterloo. More details on the computing power of these servers can be found at: \url{https://uwaterloo.ca/math-faculty-computing-facility/services/service-catalogue-research-linux/research-linux-server-hardware\#biglinux}}. Therefore, by using parallel computing, we can say that each line in Tables \ref{tb1} and \ref{tb2} (which employs 50 samples) requires between 3.74 and 187 hours depending on the available computing power. As such, it is clear that the explicit expression of the approximation proposed in this paper provides a much needed reduction in computational cost when solving the practical problem (\ref{PAP}). We also observe next that the approximation $\boldsymbol{\lambda}^\ast$ has a high level of accuracy in comparison to $\widehat{\boldsymbol{\lambda}}(p)$ provided by the algorithm.

From Tables~\ref{tb1} and \ref{tb2}, we have some observations which are similar for both studies. The lower bounds of $\boldsymbol{\lambda}^\ast$ generally provide a good approximation to $\widehat{\boldsymbol{\lambda}}(p)$ when $p$ is between 0.8 and 0.95, especially for losses with a lower scale or a lighter tail ($X_2$ in both studies). As such, in practical implementations, one could use $\lambda_i = \xi^{-\alpha_1/\alpha_i}$ for $i=1,...,n$ and would have a good approximation to the optimal pool.

However, we notice that as $p$ gets closer to 1, $\widehat{\boldsymbol{\lambda}}(p)$ no longer represents a reliable solution to the practical problem (\ref{PAP}) due to its large variations. This can be explained by the fact that a higher level $p$ lowers the number of observations used for estimations, thus leading to a higher level of uncertainty, which is evident in the magnitudes of the standard deviations of $\widehat{\lambda}_1(p)$, $\widehat{\lambda}_2(p)$ and $\Pi(p)$. This effect is further amplified by a larger value of $\xi$ for the same reason. As such, these two effects potentially contribute to the poorer performance of the approximation when $p$ is close to 1.

\section{Empirical analysis} \label{sec:realdata}
In this section, we examine the application of the theoretical framework by using flood loss data from the U.S. National Flood Insurance Program (NFIP). Created in 1968, this program aims at sharing the losses from flood damages with homeowners and limiting flood damages by reducing development in floodplains. NFIP is managed and administered by the Federal Emergency Management Agency (FEMA)\footnote{These data are available from: \url{https://www.fema.gov/openfema-data-page/fima-nfip-redacted-claims-v2}. It includes multiple details on each claim since 1978: date, damage amounts, payment amounts, characteristics of the building, details on the location, information on the coverage in the insurance policy, etc.}. We use the damage amounts, which are aggregated from the building damage (\texttt{buildingDamageAmount}) and contents damage (\texttt{contentsDamageAmount}) for each state and each month from 1978 to 2023. Hence, each state has a total of 552 observations. 

In this study, we consider three states: New York (NY), California (CA), Florida (FL). We first conduct some preliminary data analysis, and the details of each statistical test are provided in Appendix \ref{Appendix_data_test}. The main results of this analysis are summarized here in the main text. Using the test proposed in \cite{dietrich2002testing}, we find that there is statistically significant evidence that NY, CA, FL have regularly varying tail distributions. This result confirms the assertion that the three losses satisfy the assumption of having regularly varying tails in Model 1 and Model 2. Using Pearson correlation tests, we also find that there is statistically significant evidence of pairwise linear independence between the three losses. However, the Spearman correlation test shows a significant monotonic relation between NY and FL, meaning that these three losses may not be strictly independent. Nonetheless, the following empirical studies show that our results still hold, indicating that the independence assumption among the losses may be relaxed in our theorems.

Moreover, we apply the tail equivalence test proposed in \cite{daouia2024optimal} and show the $p$-values of the tests calculated at different values of $k$ in Figure \ref{fig:sametail_test}. FL and CA show strong evidence of having the same tail index, while NY and CA can be considered either tail equivalent or not depending on the particular choice of $k$.  

\begin{figure}[h]
\centering
\begin{subfigure}{0.5\textwidth}
  \centering
   \includegraphics[width=\linewidth]{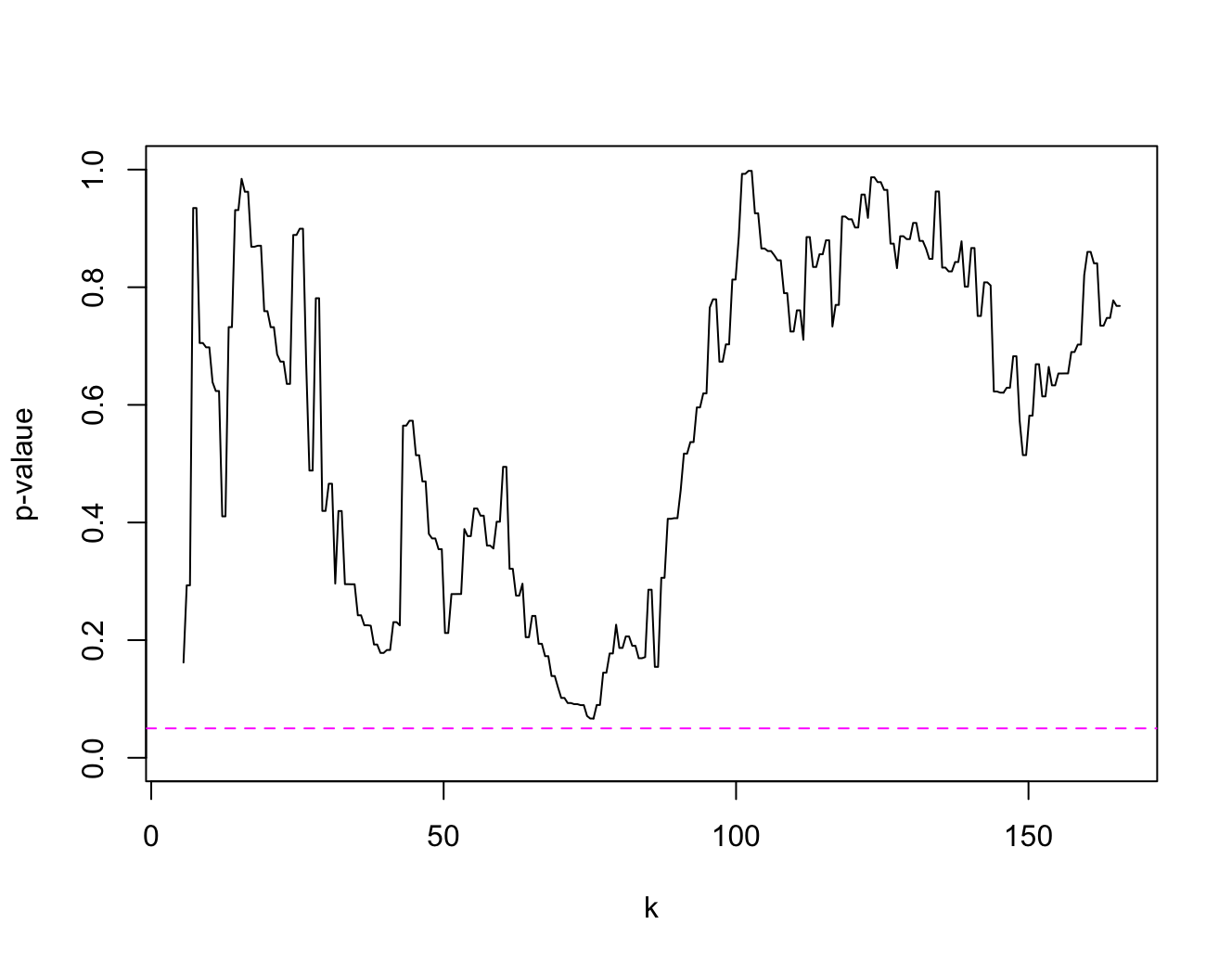}
    \caption{\small FL and CA}
    \label{fig:sametail_test1}
\end{subfigure}%
\begin{subfigure}{.5\textwidth}
  \centering
   \includegraphics[width=\linewidth]{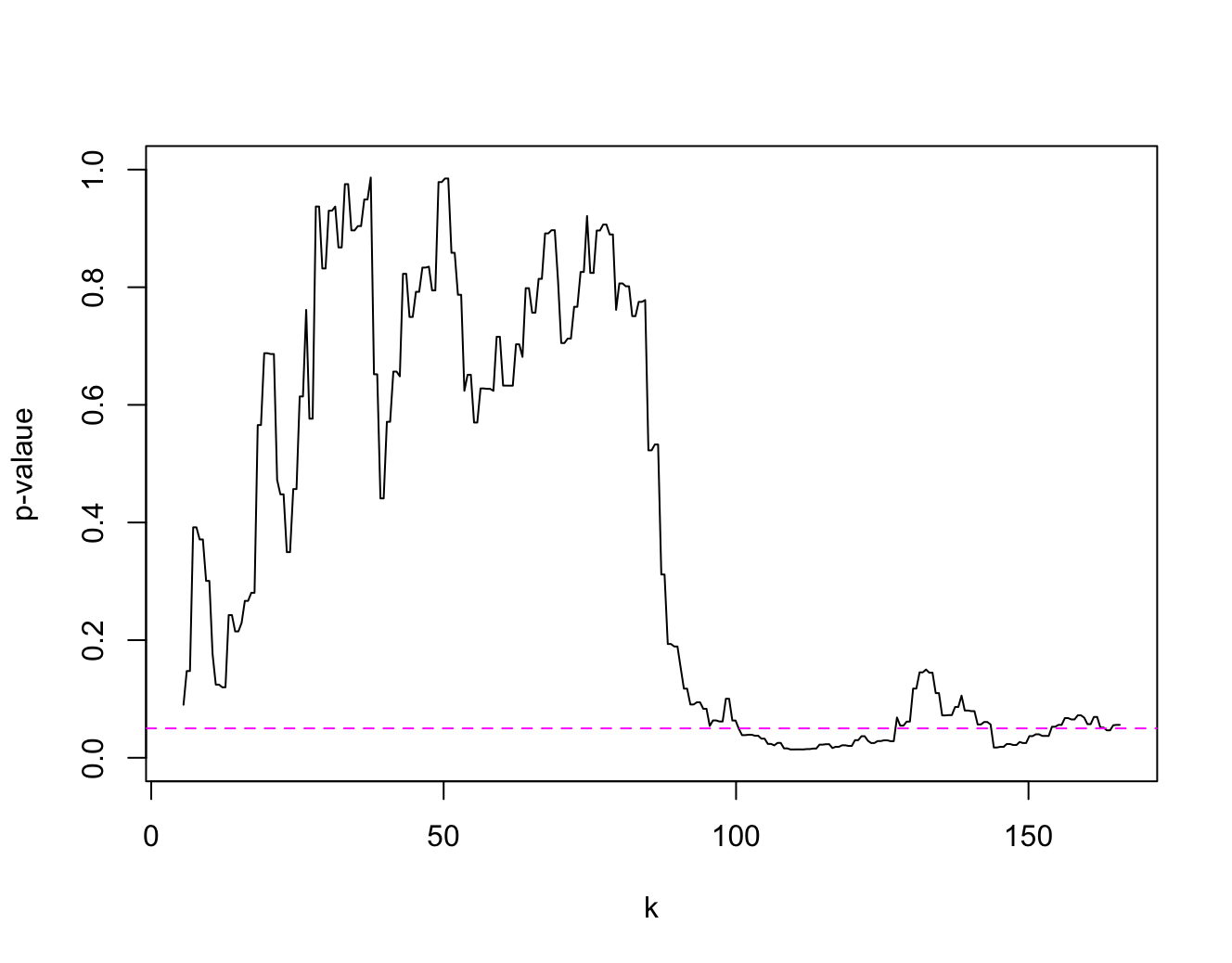}
 \caption{\small NY and CA}%
    \label{fig:sametail_test2}
\end{subfigure}%
\caption{\small Results of equivalent tail tests. The $p$-value is computed based on $k$ largest observations of $X_1$ and $k$ largest observations of $X_{2}$. The null hypothesis is that $X_1$ and $X_2$ have equivalent tails. The dashed line indicates a $p$-value of 0.05.}
\label{fig:sametail_test}
\end{figure}

Based on the above analysis, we study the following three pools of losses.
Pool 1 consists of the losses from FL ($X_{1}$) and CA ($X_{2}$), which
satisfy the assumptions in Model 1. Pool 2 consists of the losses from CA
($X_{1}$) and NY ($X_{2}$), which satisfy the assumptions in Model 2. Pool 3
consists of all three losses, FL ($X_{1}$), CA ($X_{2}$), and NY ($X_{3}$),
which satisfy the assumptions in Model 2. In each pool, let $X_{1,i}%
,X_{2,i},...,X_{m,i}$ be the $m$ $(=552)$ observations for loss $X_{i}$. We
use the following steps to calculate the DR for each participant.

First we determine the tail index of each loss. In Pool 1, since the losses
are considered to be tail equivalent, we use the pooled tail estimator defined
as \eqref{Hill_pooled_formula} in Appendix \ref{Appendix_data_test},
$\widehat{\alpha}_{FL-CA}$, which was proposed in \cite{daouia2024optimal}. In
Pool 2 and Pool 3, since the losses are considered to have different tail
indices, we estimate the tail index with the Hill's estimator for each loss
separately. The Hill's estimator \citep{hill1975} for a sample $X_{1},...,X_{m}$
is defined as
\[
\widehat{\alpha}=\left(  \frac{1}{k}\sum_{j=1}^{k}\log X_{(m-j+1)}-\log
X_{(m-k)}\right)  ^{-1}.
\]
The plots for each estimator with varying $k$ are shown in Figure
\ref{fig:hill_pool}. From these plots, we set $k$ to be $55$, which is the
10\% of the entire data points for each loss. Then we obtain: (i) for Pool 1,
$\widehat{\alpha}_{FL-CA}=\widehat{\alpha}_{1}=\widehat{\alpha}_{2}=0.604$, (ii) for Pool 2, $\widehat{\alpha}_{CA}=\widehat{\alpha}_{1}=0.646$ and
$\widehat{\alpha}_{NY}=\widehat{\alpha}_{2}=0.719$, and (iii) for Pool 3, $\widehat{\alpha}_{FL}=\widehat{\alpha
}_{1}=0.555$, $\widehat{\alpha}%
_{CA}=\widehat{\alpha}_{2}=0.646$ and $\widehat{\alpha}_{NY}=\widehat{\alpha}_{3}=0.719$. We can see that all of these losses are extremely heavy-tailed
with an infinite first moment.

\begin{figure}[htbp]
  \centering

  \begin{subfigure}{0.45\textwidth}
    \includegraphics[width=\linewidth]{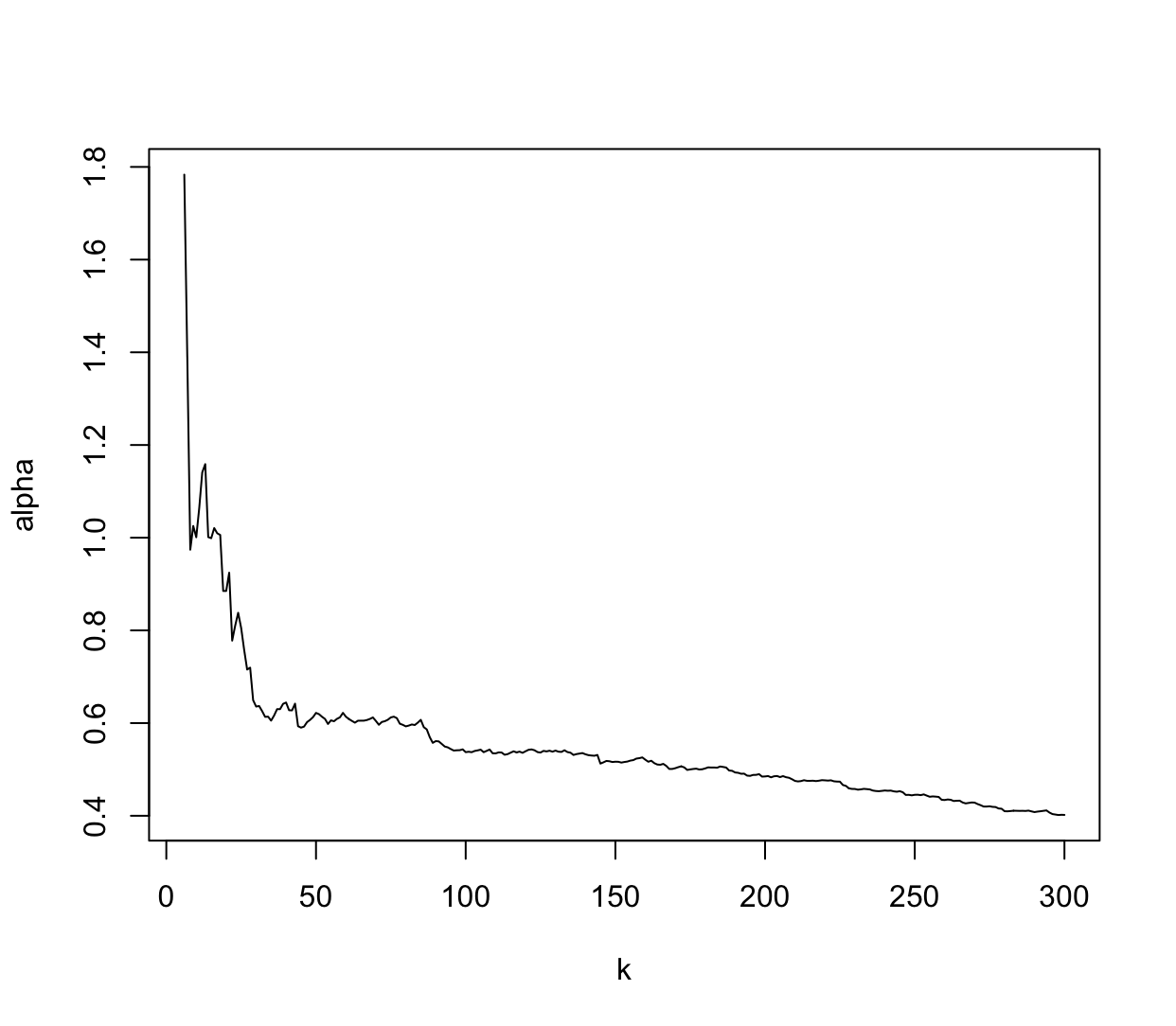}
    \caption{\small Pooled $\widehat{\alpha}_{FL-CA}$}
    \label{fig:hill_pool_FLCA}
  \end{subfigure}
    \hfill
  \begin{subfigure}{0.45\textwidth}
    \includegraphics[width=\linewidth]{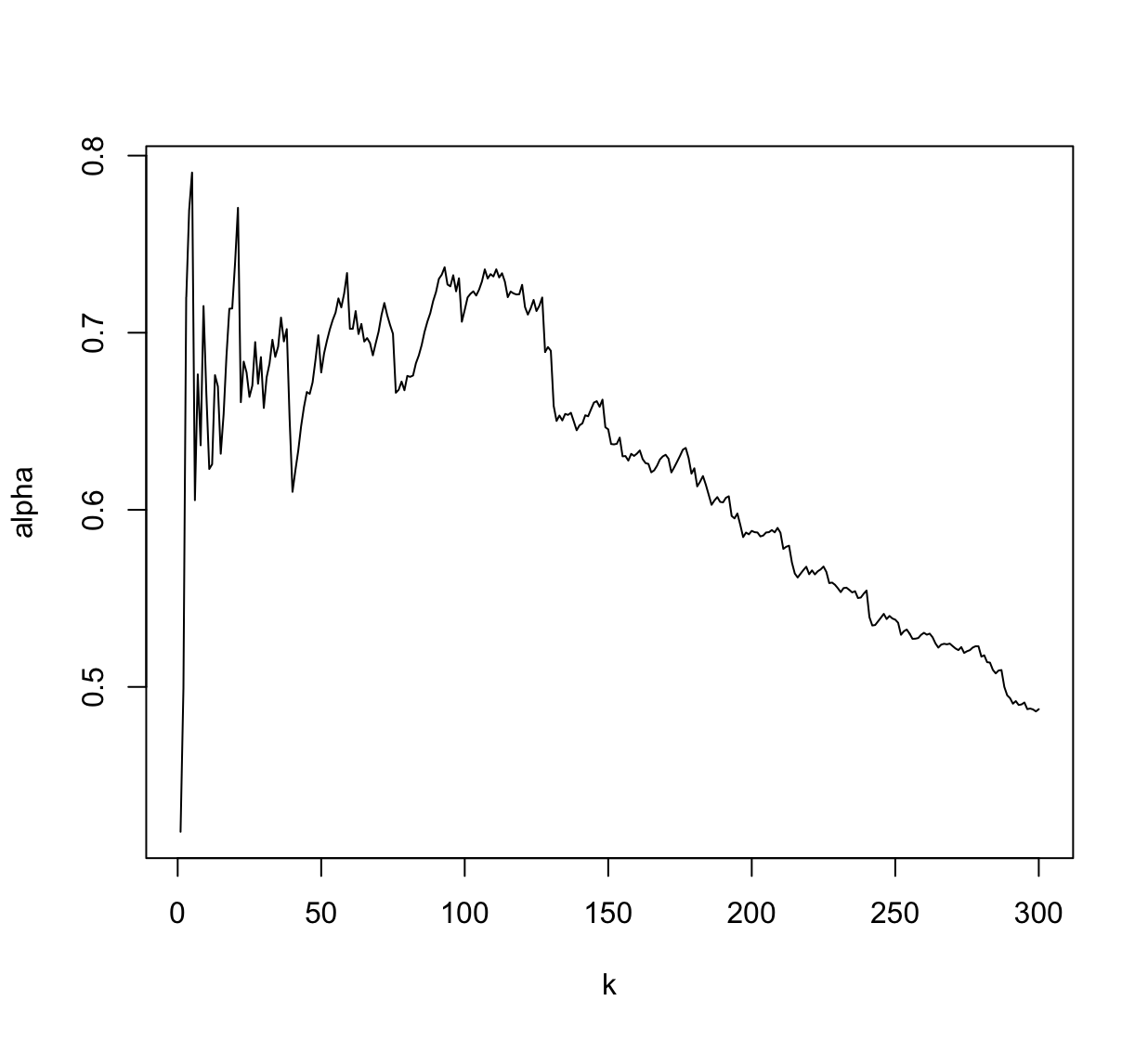}
    \caption{\small $\widehat{\alpha}_{NY}$}
    \label{fig:hill_pool_NY}
  \end{subfigure}

  \begin{subfigure}{0.45\textwidth}
    \includegraphics[width=\linewidth]{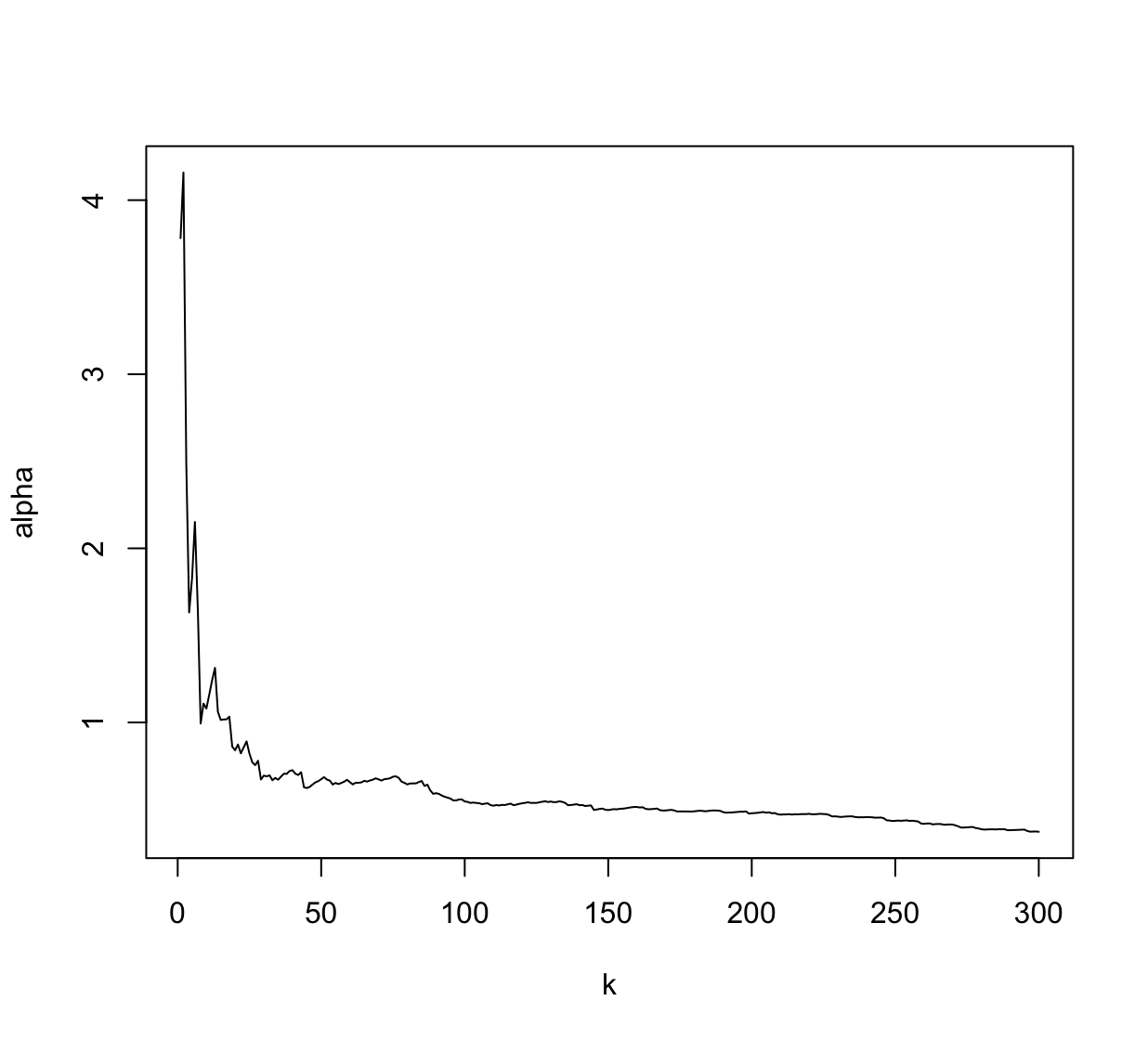}
    \caption{\small $\widehat{\alpha}_{CA}$}
    \label{fig:hill_pool_CA}
  \end{subfigure}
    \hfill
  \begin{subfigure}{0.45\textwidth}
    \includegraphics[width=\linewidth]{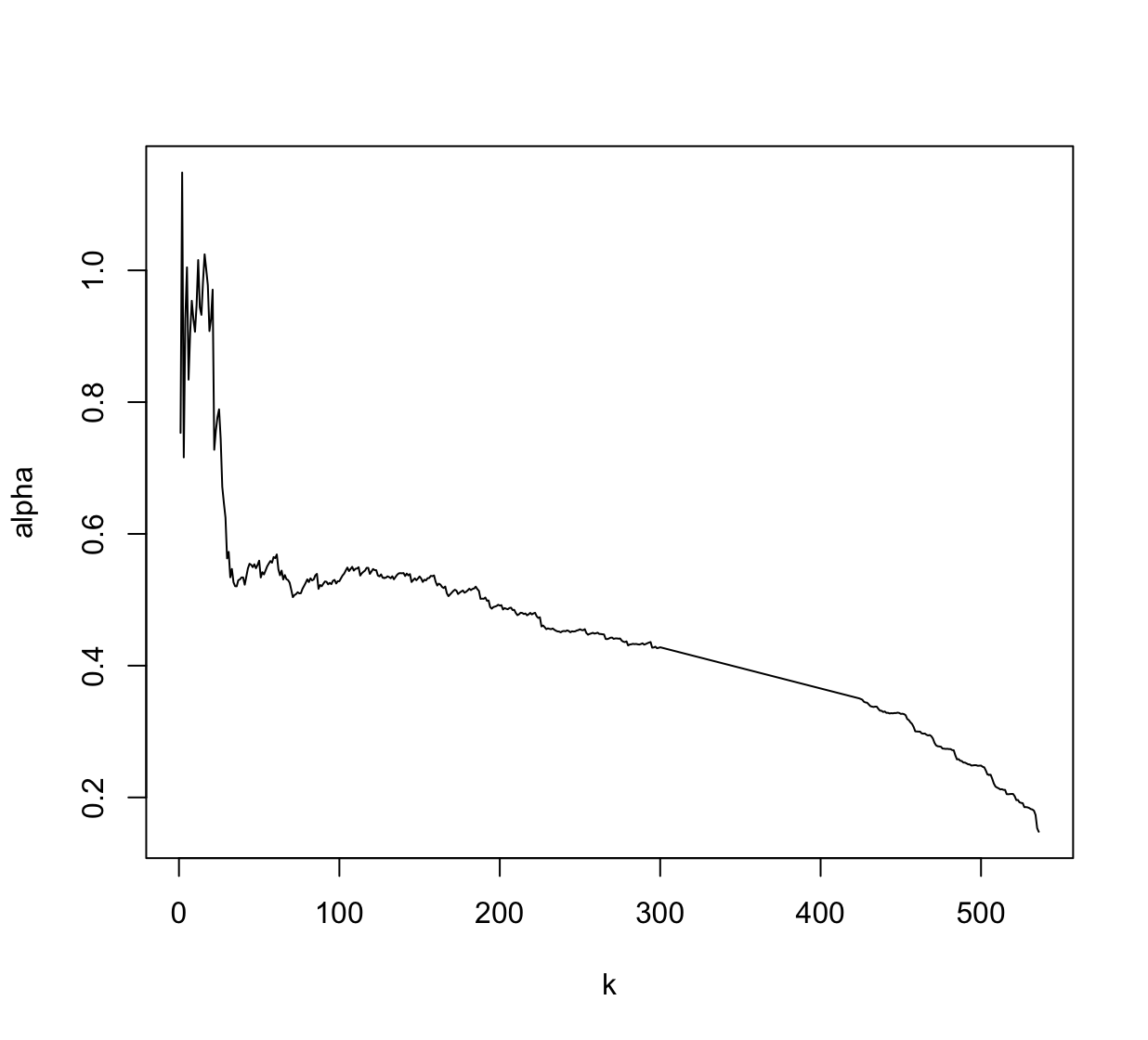}
    \caption{\small$\widehat{\alpha}_{FL}$}
    \label{fig:hill_pool_FL}
  \end{subfigure}
\captionsetup{justification=raggedright,singlelinecheck=false}
  \caption{Plots of Hill's estimator for each loss}
  \label{fig:hill_pool}
\end{figure}

Next, for losses in Pool 1, which are considered to be tail equivalent, we also
need to estimate the scale parameter $\theta_{CA}$ for CA as defined in
(\ref{TE}). Note that by definition $\theta_{FL}$ in Pool 1 is 1. We propose
the following empirical estimator for $\theta_{CA}$
\begin{equation}\label{eq:theta_est}
\widehat{\theta}_{CA}=\frac{\widehat{\overline{F}}_{CA}(X_{(m-h),FL})}{\widehat
{\overline{F}}_{FL}(X_{(m-h),FL})}=\frac{\sum_{j=1}^{m}1_{\left \{  X_{j,CA}\geq
X_{(m-h),FL}\right \}  }}{\sum_{j=1}^{m}1_{\left \{  X_{j,FL}\geq X_{(m-h),FL}\right \}  }},
\end{equation}
where $m=552$. By varying $h$, we plot the values of $\widehat{\theta}_{CA}$
in Figure \ref{fig:theta.plot}. Then by taking $h=55$, which is the 10\% of the entire data
points, $\widehat{\theta}_{CA}$ is calculated as $0.3637$.

\begin{figure}
    \centering
    \includegraphics[width=0.7\linewidth]{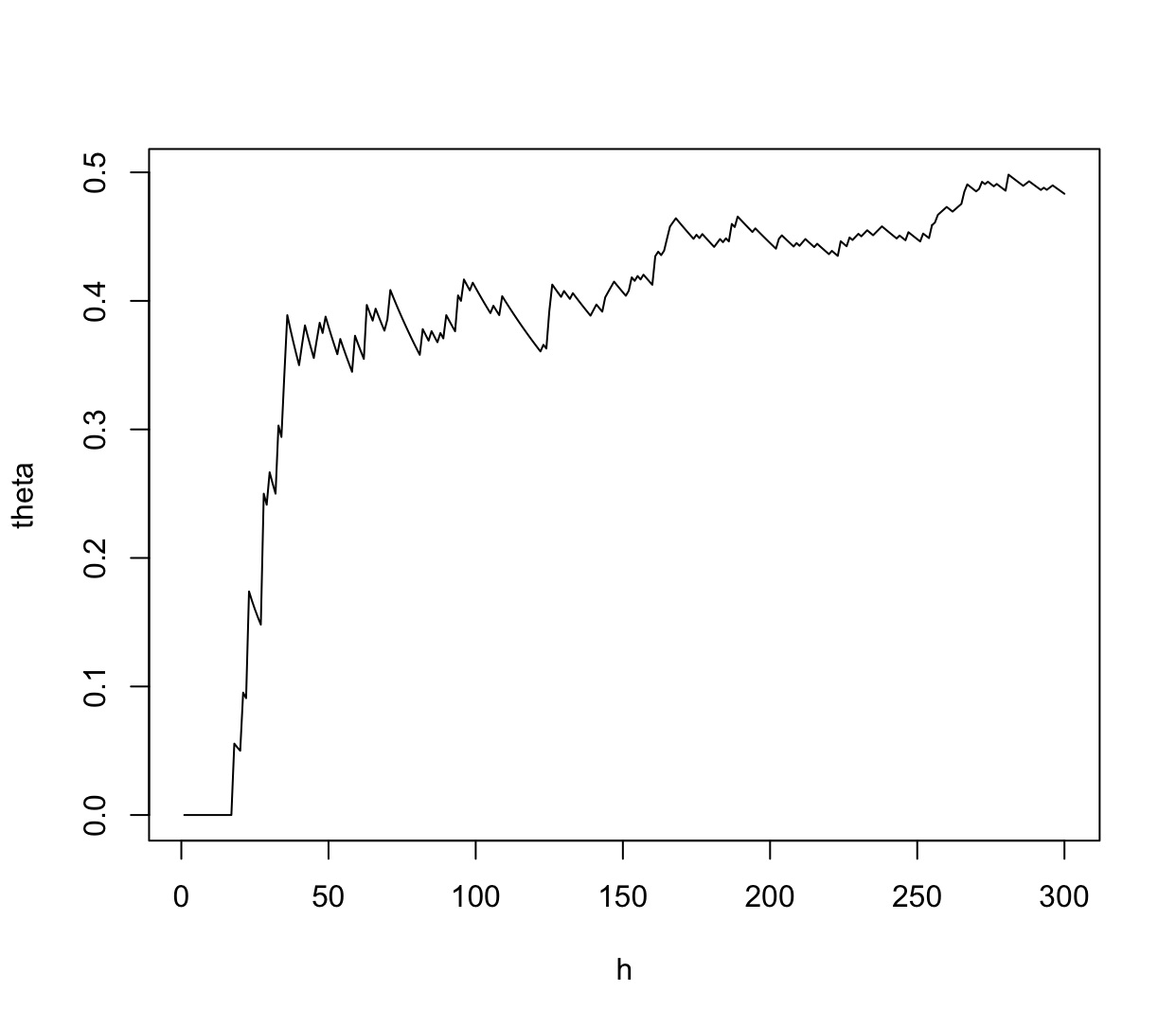}
    \captionsetup{justification=raggedright,singlelinecheck=false}
    \caption{\small Plot of $\widehat{\theta}_{CA}$ with varying $h$}
    \label{fig:theta.plot}
\end{figure}

Now we are ready to compute the DR for each participant in a pool. We apply
the following estimator for $\mathrm{DR}_{i}(p)$
\[
\widehat{\mathrm{DR}}_{i}(p)=\frac{\widehat{\mathrm{VaR}}_{p}(X_{i}-Y_{i}%
)}{\widehat{\mathrm{VaR}}_{p}(X_{i})}+\frac{\widehat{E[Y_{i}]}}{\sum_{i=1}%
^{n}\widehat{E[Y_{i}]}}\frac{S_{(\lfloor pm\rfloor)}}{X_{(\lfloor pm\rfloor),i}},
\]
with each term being calculated as follows. First, to make the estimation more stable for high
levels of $p$, we adopt the following EVT-based quantile estimator for
$\mathrm{VaR}_{p}(X_{i})$ with $p>0.8$%
\[
\widehat{\mathrm{VaR}}_{p}(X_{i})=X_{(\lfloor0.8m\rfloor),i}\left(  \frac
{0.2}{1-p}\right)  ^{1/\widehat{\alpha}_{i}},
\]
where $m=552$. This means that to
estimate $\mathrm{VaR}_{p}(X_{i})$ at a confidence level $p>0.8$ close to 1,
we extrapolate the empirical estimator of $\mathrm{VaR}_{0.8}(X_{i})$,
$X_{(\lfloor0.8m\rfloor),i}$, to the level $p$ (for details of this estimator
see for example Theorem 4.3.8 of \cite{haan2006extreme}). Then, following the assumptions in
Model 1 and 2, we set the levels of the attachment points in each pool as
$d_{i}=\xi^{\widehat{\alpha}_{1}/\widehat{\alpha}_{i}}\widehat{\mathrm{VaR}%
}_{p}(X_{i})$ for $i=1,2,3$, where $\xi$ is
chosen to be of certain values, and the limits in each pool as $l_i=\lambda_i d_i$, where $\lambda_i = \xi^{-\widehat{\alpha}_{1}/\widehat{\alpha}_{i}}$ are the lower bounds of $\boldsymbol{\lambda}^\ast$ from Theorem \ref{thm:opt_sol}. This is based on the observation in the previous section that the lower bounds provide the most accurate approximation to the solution of the practical optimization problem. Then, for each loss
observation $X_{j,i}$, the layer loss $Y_{j,i}$ is calculated according to
(\ref{Y}). Let $R_{j,i}=X_{j,i}-Y_{j,i}$. Thus we have
\[
\widehat{\mathrm{VaR}}_{p}(X_{i}-Y_{i})=R_{(\lfloor pm\rfloor),i}.
\]
and%
\[
\widehat{E[Y_{i}]}=\frac{1}{m}{\displaystyle \sum \limits_{j=1}^{m}} Y_{j,i}.
\]
Lastly, the aggregated loss in the pool is $S_{j}=
{\displaystyle \sum \limits_{i=1}^{n}}
Y_{j,i}$, where $n=2$ or $3$.

\begin{figure} [htbp]
    \centering

    \begin{subfigure}{0.45\textwidth}
    \includegraphics[width=\linewidth]{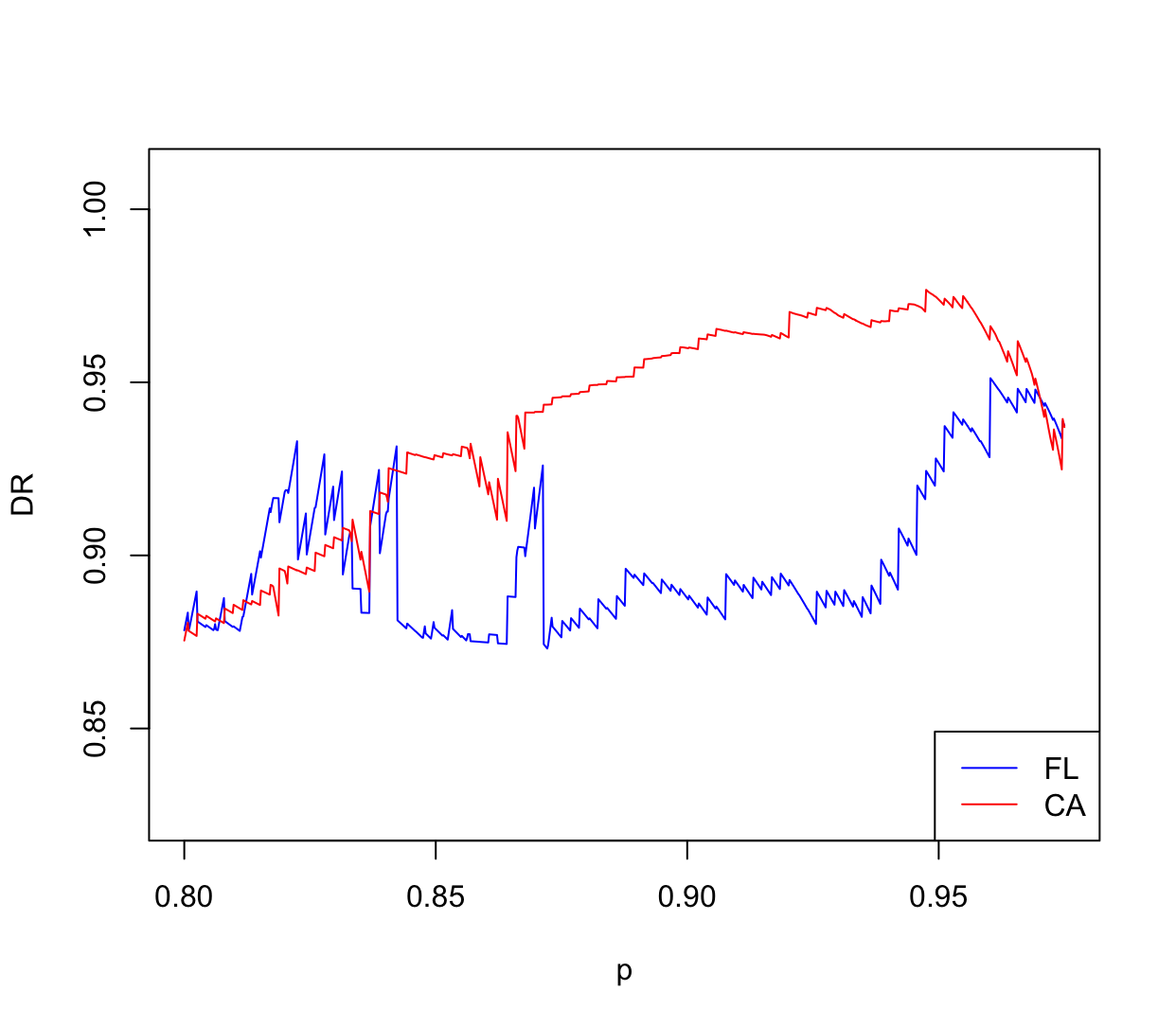}
    \caption{\small$\xi=0.1^{1/\widehat{\alpha}_{FL-CA}}$}
    \label{fig:pool1_result1}
    \end{subfigure}
    \hfill
    \begin{subfigure}{0.45\textwidth}
    \includegraphics[width=\linewidth]{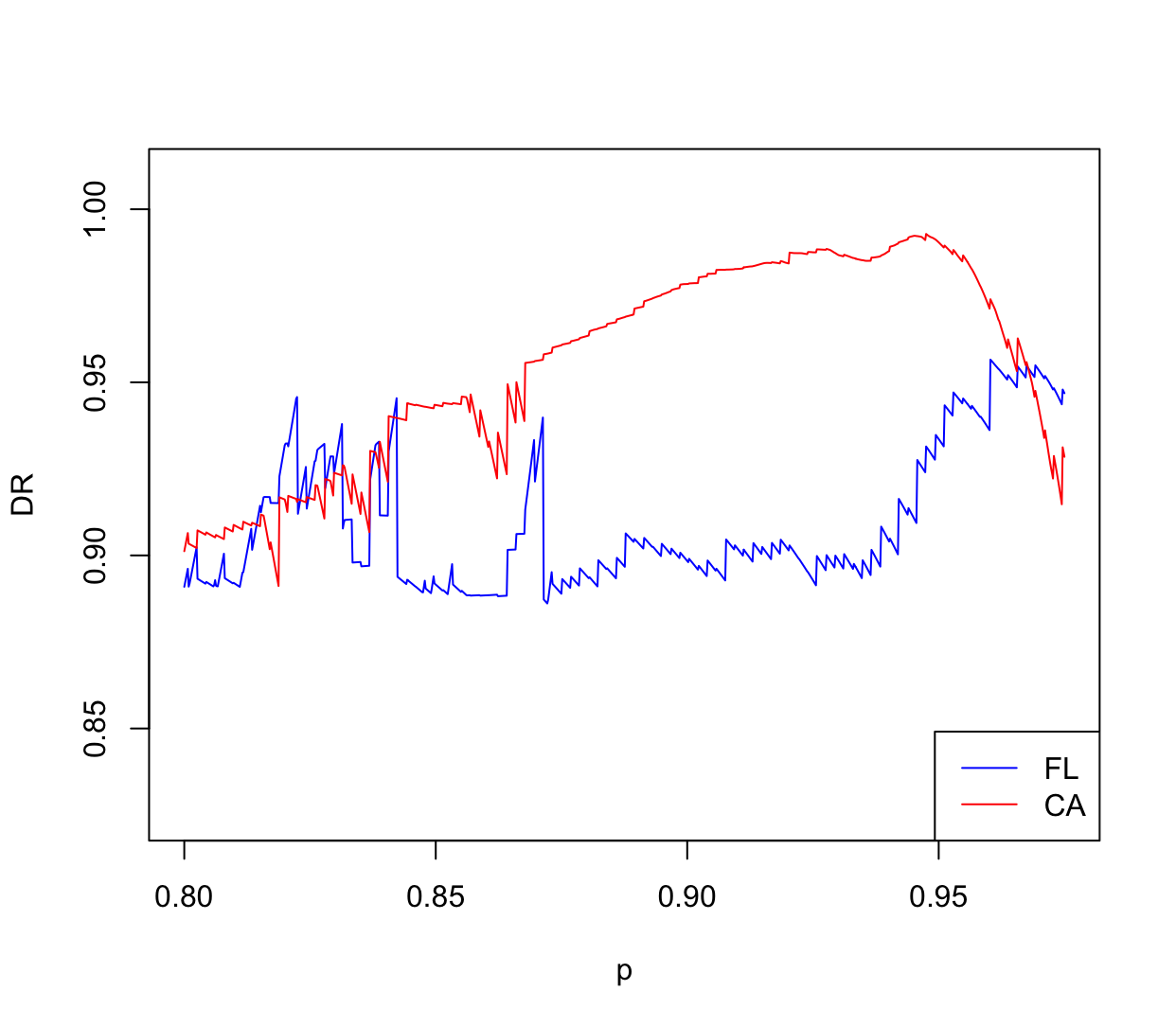}
    \caption{\small$\xi=0.3^{1/\widehat{\alpha}_{FL-CA}}$}
    \label{fig:pool1_result3}
    \end{subfigure}

    \begin{subfigure}{0.45\textwidth}
    \includegraphics[width=\linewidth]{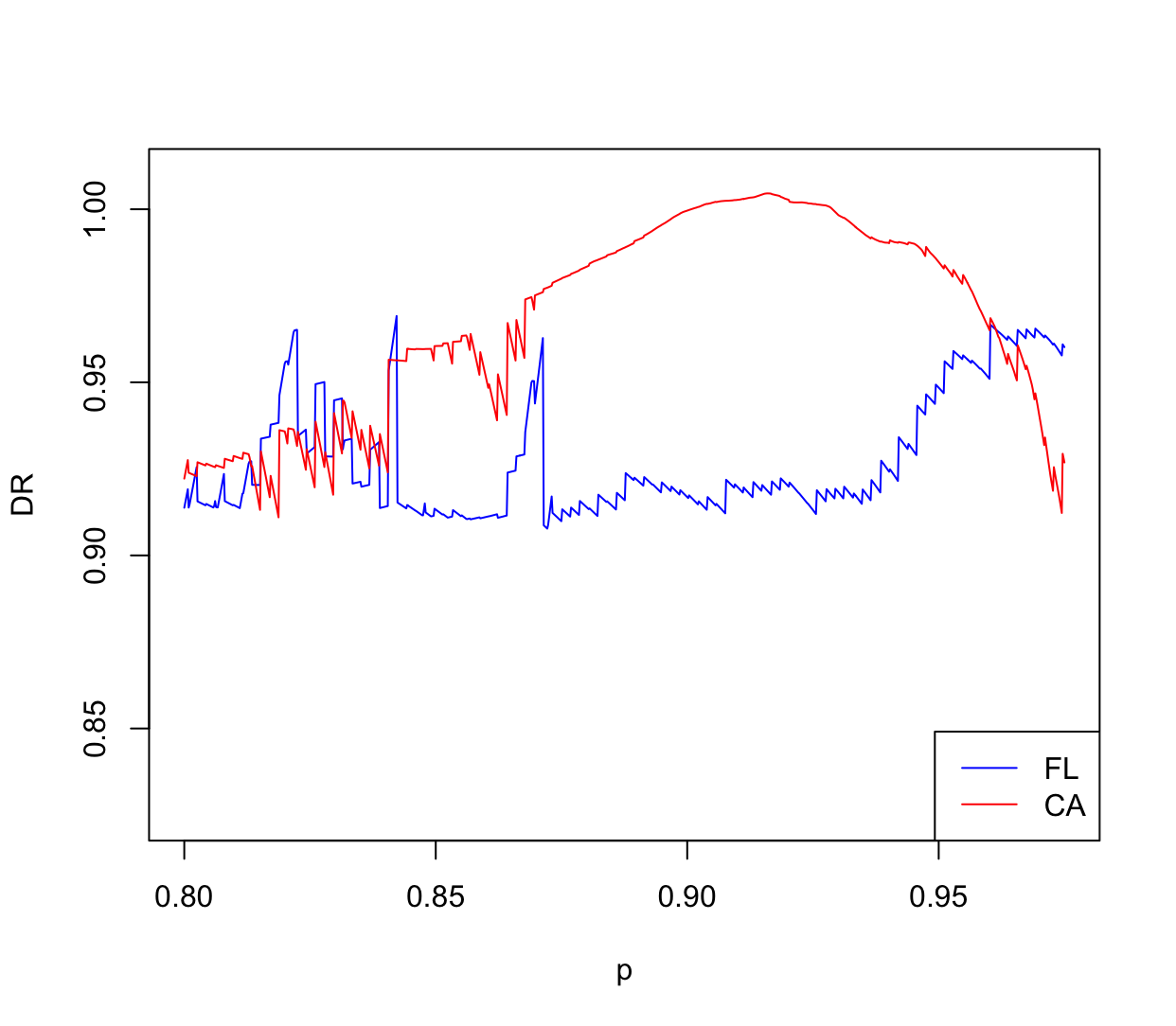}
    \caption{\small$\xi=0.5^{1/\widehat{\alpha}_{FL-CA}}$}
    \label{fig:pool1_result5}
    \end{subfigure}
    \hfill
    \begin{subfigure}{0.45\textwidth}
    \includegraphics[width=\linewidth]{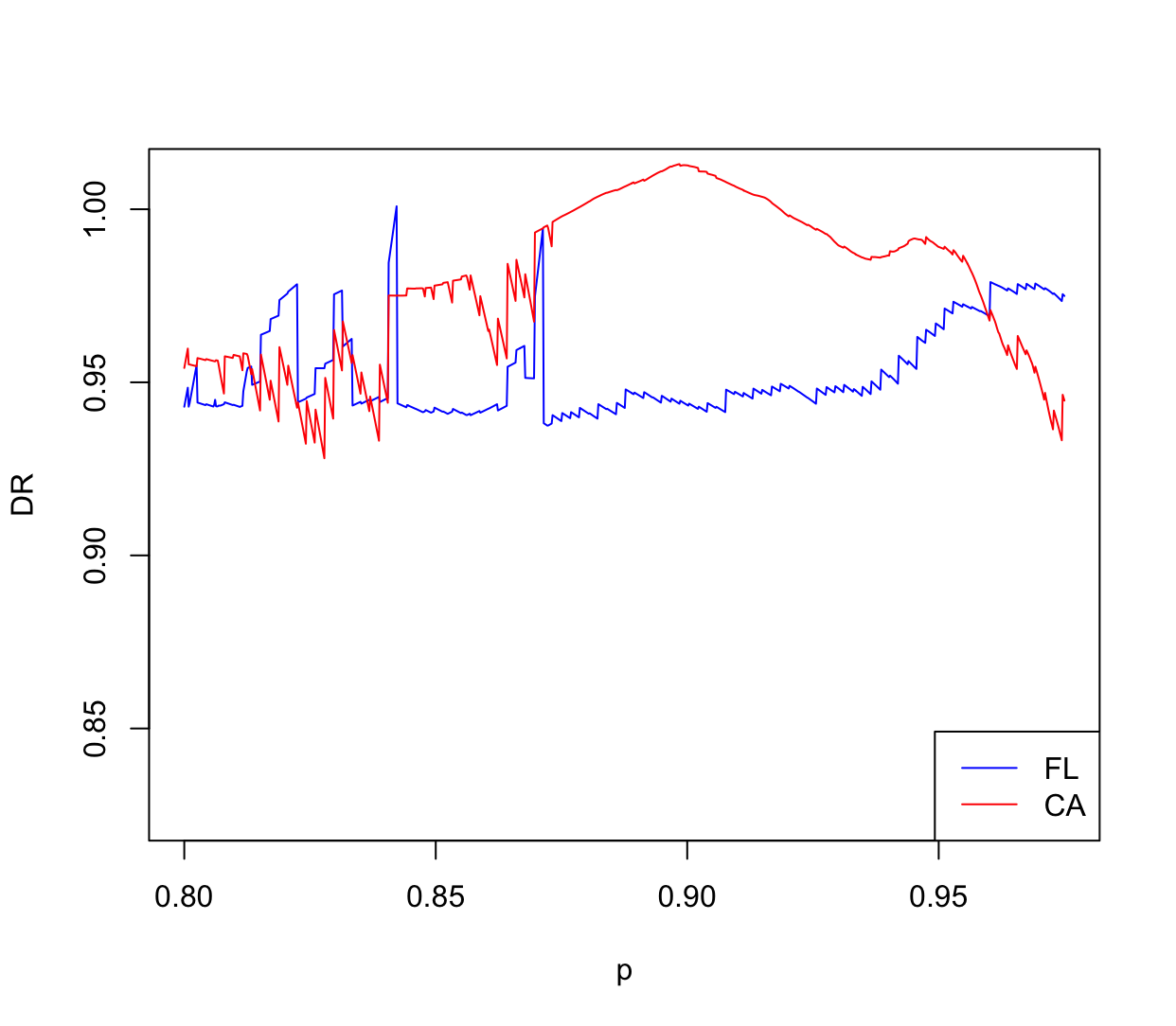}
    \caption{\small$\xi=0.7^{1/\widehat{\alpha}_{FL-CA}}$}
    \label{fig:pool1_result7}
    \end{subfigure}

    \captionsetup{justification=raggedright,singlelinecheck=false}
    \caption{\small$\mathrm{DR}_i(p)$ in Pool 1 for selected values of $\xi$ }
    
    \label{fig:pool1_result}
\end{figure}

{Figure \ref{fig:pool1_result} shows $\mathrm{DR}_{i}(p)$ for each participant
in Pool 1 (FL and CA) for {$p$ ranging from 0.8 to 0.975 at a stepsize of 0.001}, where $\xi$ is set as $0.1^{1/\widehat{\alpha}_{1}},$
$0.3^{1/\widehat{\alpha}_{1}},$ $0.5^{1/\widehat{\alpha}_{1}},$ or
$0.7^{1/\widehat{\alpha}_{1}}$. We observe that, in all scenarios, $\mathrm{DR}_{i}(p)$ is generally smaller
than 1 for both FL and CA, which means that all participants in the pool obtain a
diversification benefit from joining the pool. For smaller values of $\xi$,
$\mathrm{DR}_{i}(p)$ of FL and CA are smaller as well. This means that a higher
diversification benefit is achieved when the lower layer loss is brought to
the pool, which is consistent with our insight from Theorem \ref{thm:opt_sol}.
Furthermore, we notice that $\mathrm{DR}%
_{FL}(p)$ is generally smaller than $\mathrm{DR}%
_{CA}(p)$, which means that FL obtains more diversification benefit
than CA. As $\widehat
{\theta}_{CA}<1$, this implies that a participant in the pool with a larger loss enjoys a greater diversification benefit from the pool.}

\begin{figure} [htbp]
    \centering

    \begin{subfigure}{0.45\textwidth}
    \includegraphics[width=\linewidth]{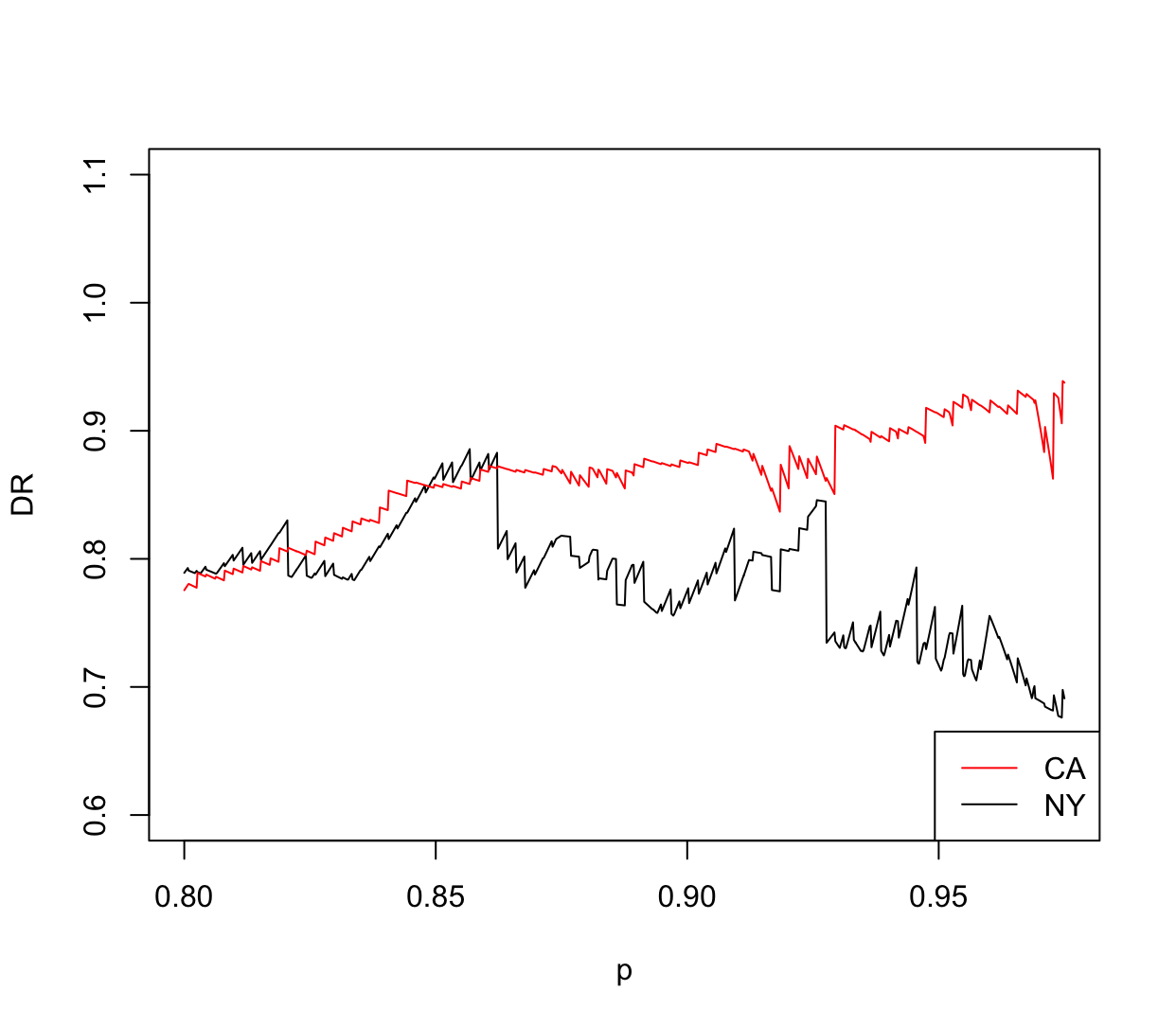}
    \caption{\small$\xi=0.1^{1/\widehat{\alpha}_{CA}}$}
    \label{fig:pool2_result1}
    \end{subfigure}
    \hfill
    \begin{subfigure}{0.45\textwidth}
    \includegraphics[width=\linewidth]{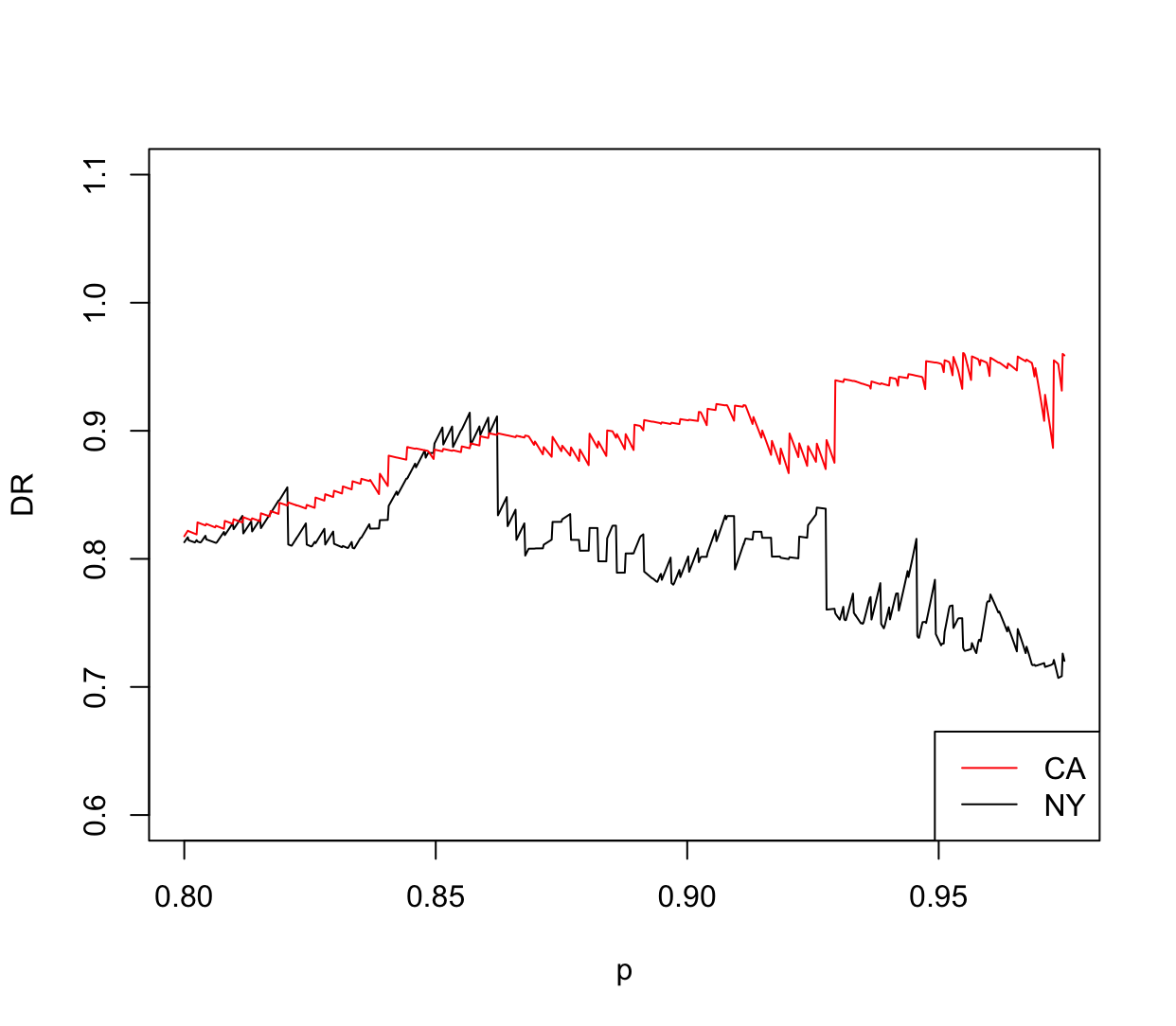}
    \caption{\small$\xi=0.3^{1/\widehat{\alpha}_{CA}}$}
    \label{fig:pool2_result3}
    \end{subfigure}

    \begin{subfigure}{0.45\textwidth}
    \includegraphics[width=\linewidth]{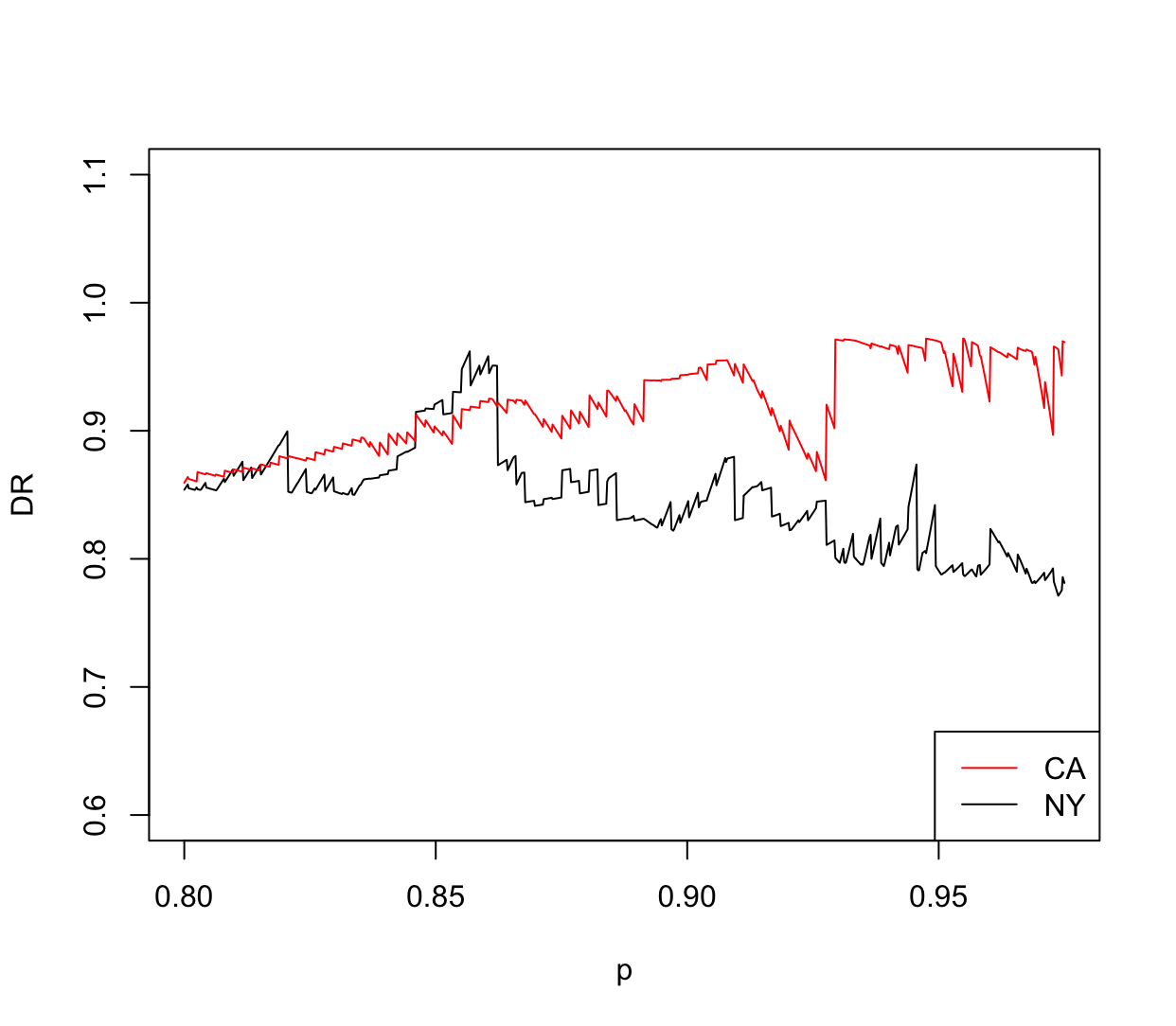}
    \caption{\small$\xi=0.5^{1/\widehat{\alpha}_{CA}}$}
    \label{fig:pool2_result5}
    \end{subfigure}
    \hfill
    \begin{subfigure}{0.45\textwidth}
    \includegraphics[width=\linewidth]{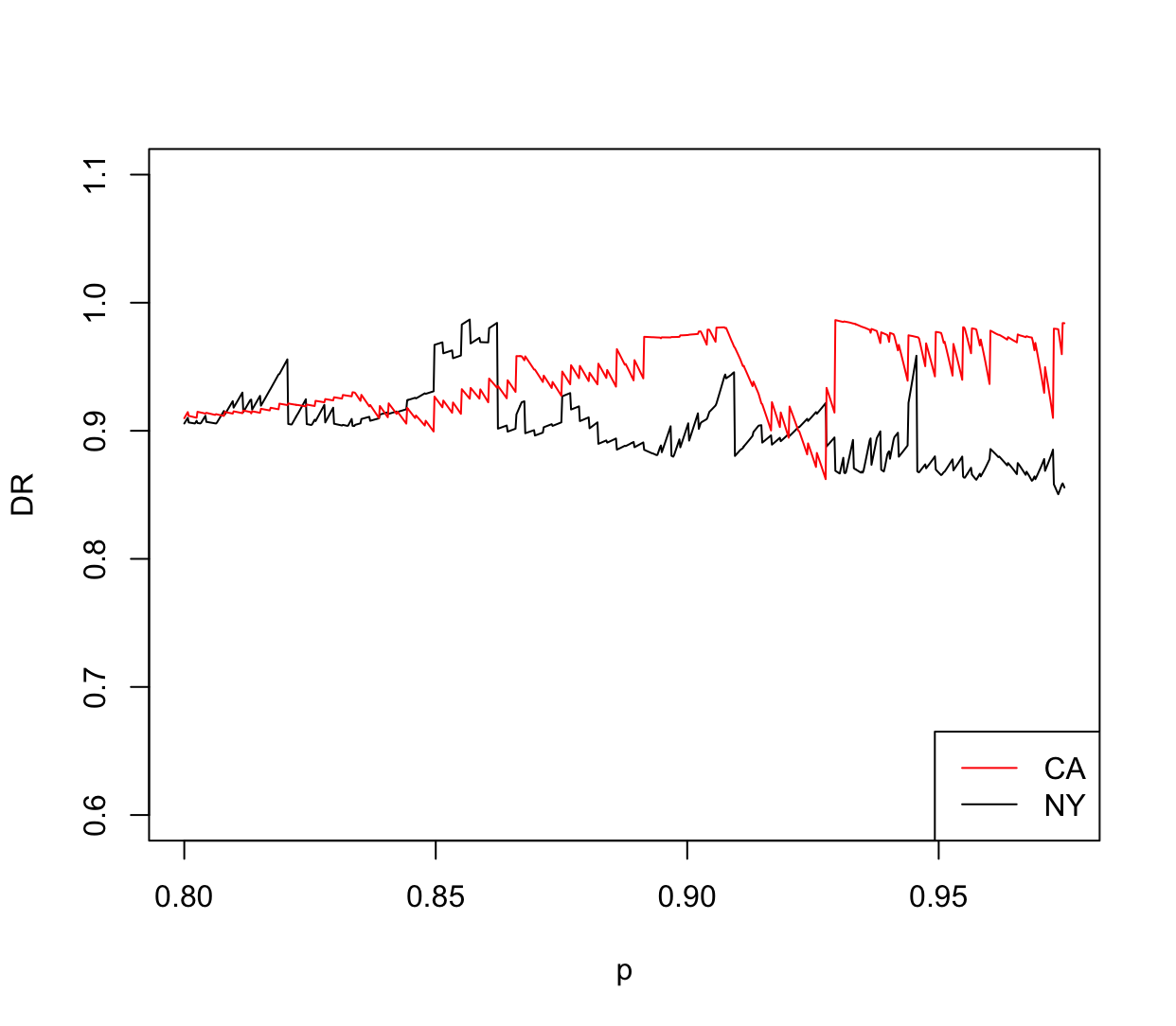}
    \caption{\small$\xi=0.7^{1/\widehat{\alpha}_{CA}}$}
    \label{fig:pool2_result7}
    \end{subfigure}

    \captionsetup{justification=raggedright,singlelinecheck=false}
    \caption{\small$\mathrm{DR}_i(p)$ in Pool 2 for selected values of $\xi$}
    
    \label{fig:pool2_result}
\end{figure}

Figure \ref{fig:pool2_result} shows $\mathrm{DR}_{i}(p)$ for each participant
in Pool 2 (CA and NY) for {$p$ ranging from 0.8 to 0.975 at a stepsize of 0.001}, where $\xi$ is set as $0.1^{1/\widehat{\alpha}_{1}},$
$0.3^{1/\widehat{\alpha}_{1}},$ $0.5^{1/\widehat{\alpha}_{1}},$ or
$0.7^{1/\widehat{\alpha}_{1}}$. Similarly to Pool 1, in all scenarios,
$\mathrm{DR}_{i}(p)$ is generally smaller than 1 for both CA and NY, and that
they are able to reach smaller values when $\xi$ is lower.

\begin{figure} [htbp]
    \centering

    \begin{subfigure}{0.45\textwidth}
    \includegraphics[width=\linewidth]{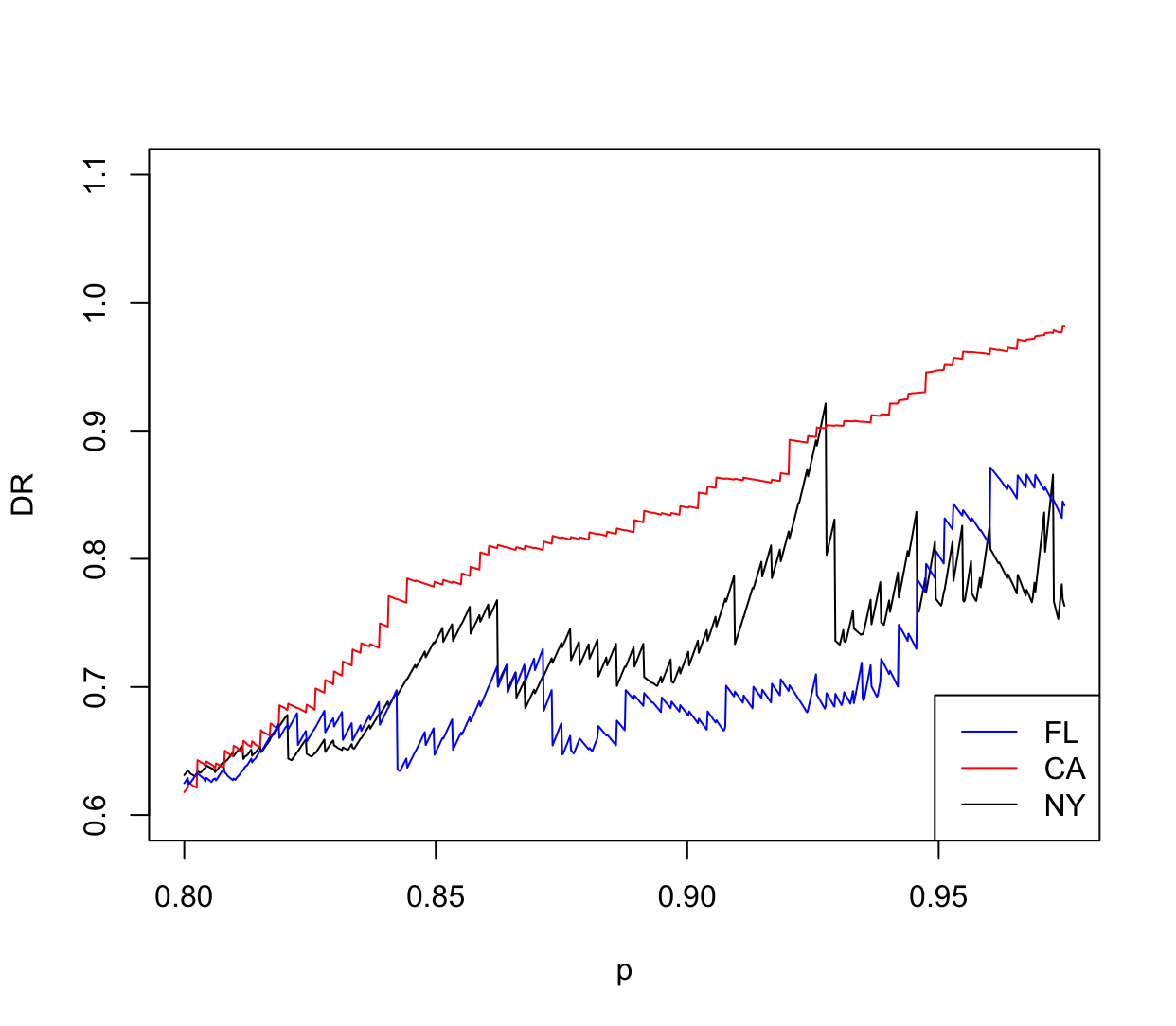}
    \caption{\small$\xi=0.1^{1/\widehat{\alpha}_{FL}}$}
    \label{fig:pool_all_result1}
    \end{subfigure}
    \hfill
    \begin{subfigure}{0.45\textwidth}
    \includegraphics[width=\linewidth]{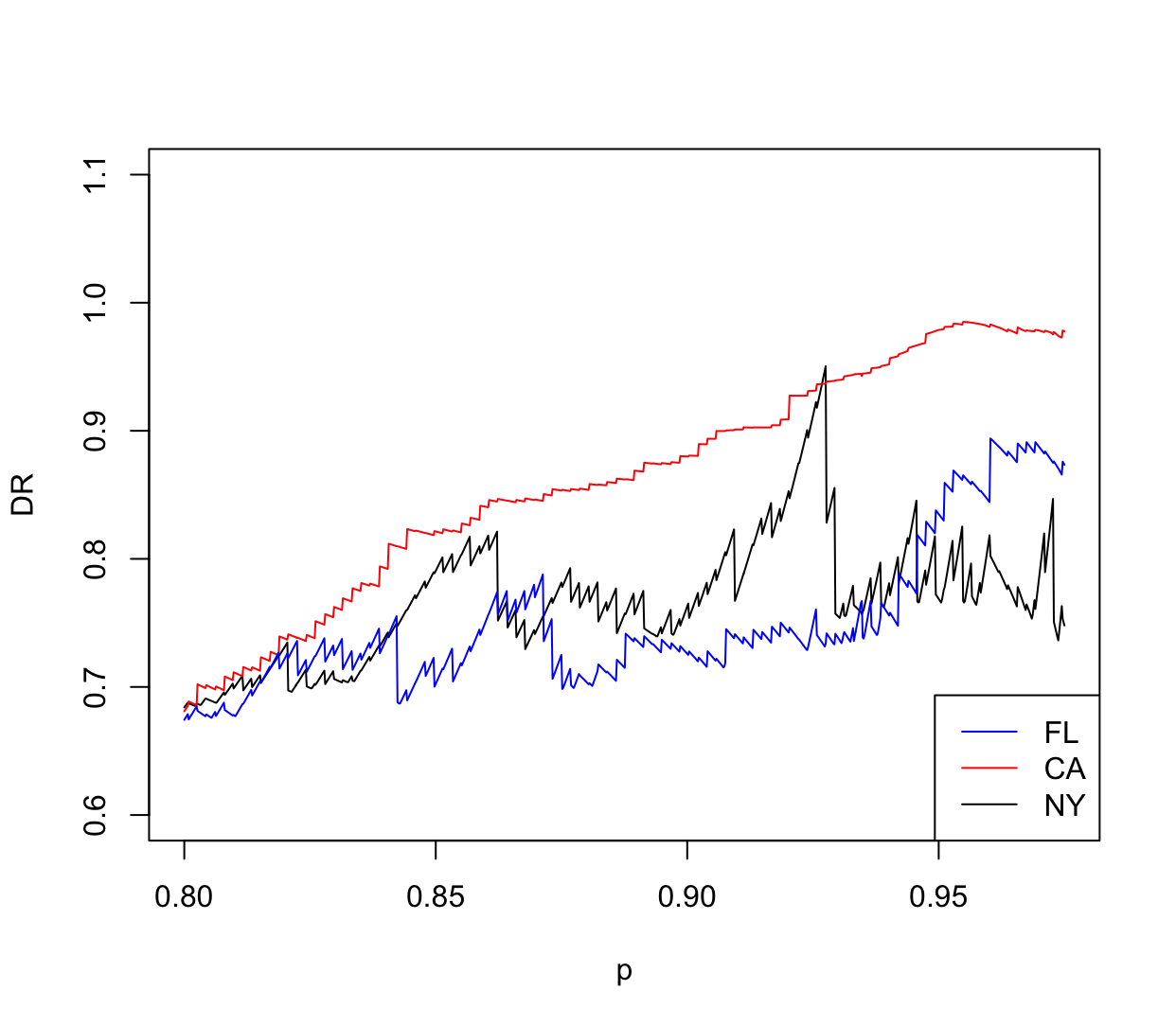}
    \caption{\small$\xi=0.3^{1/\widehat{\alpha}_{FL}}$}
    \label{fig:pool_all_result3}
    \end{subfigure}

    \begin{subfigure}{0.45\textwidth}
    \includegraphics[width=\linewidth]{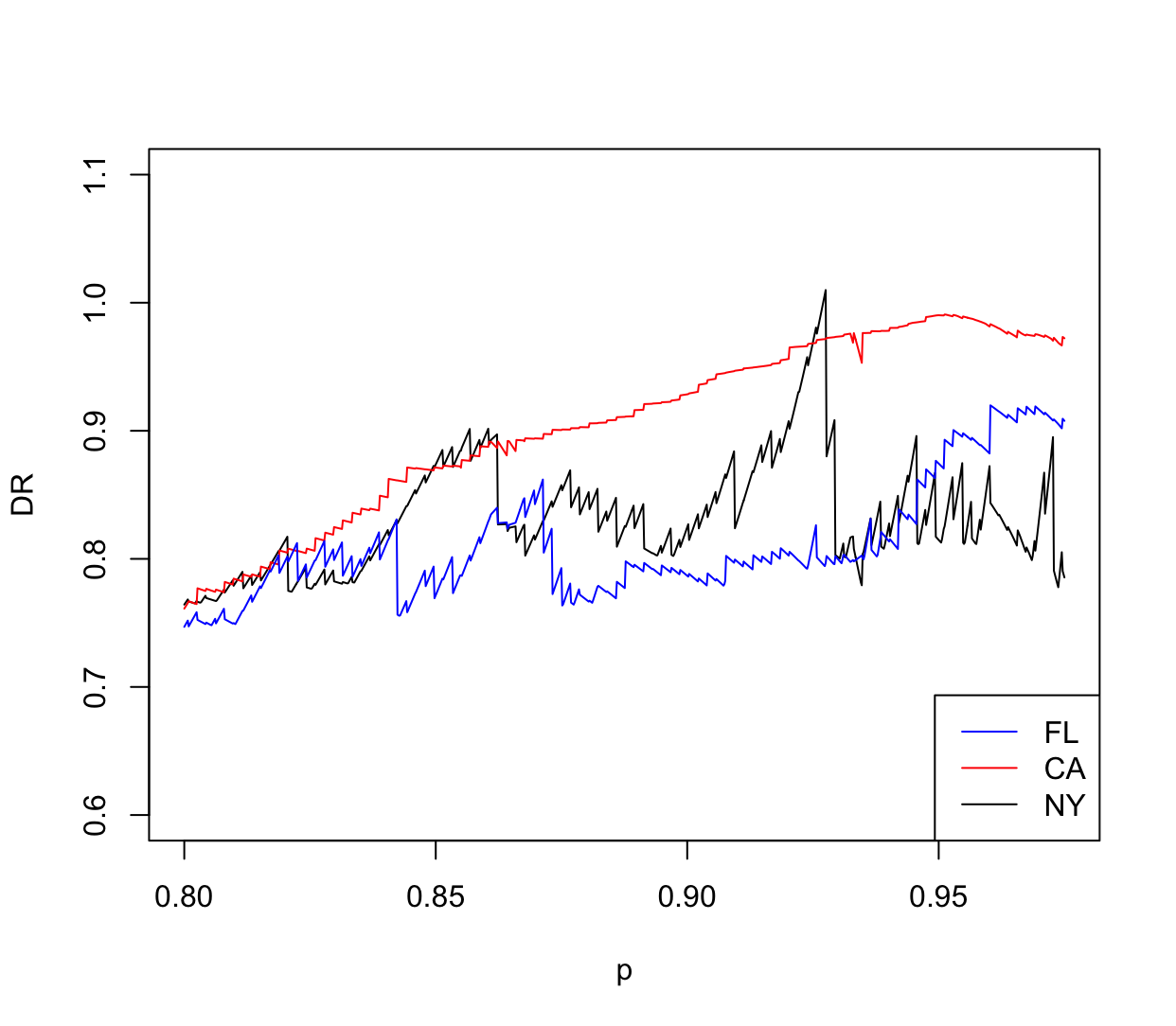}
    \caption{\small$\xi=0.5^{1/\widehat{\alpha}_{FL}}$}
    \label{fig:pool_all_result5}
    \end{subfigure}
    \hfill
    \begin{subfigure}{0.45\textwidth}
    \includegraphics[width=\linewidth]{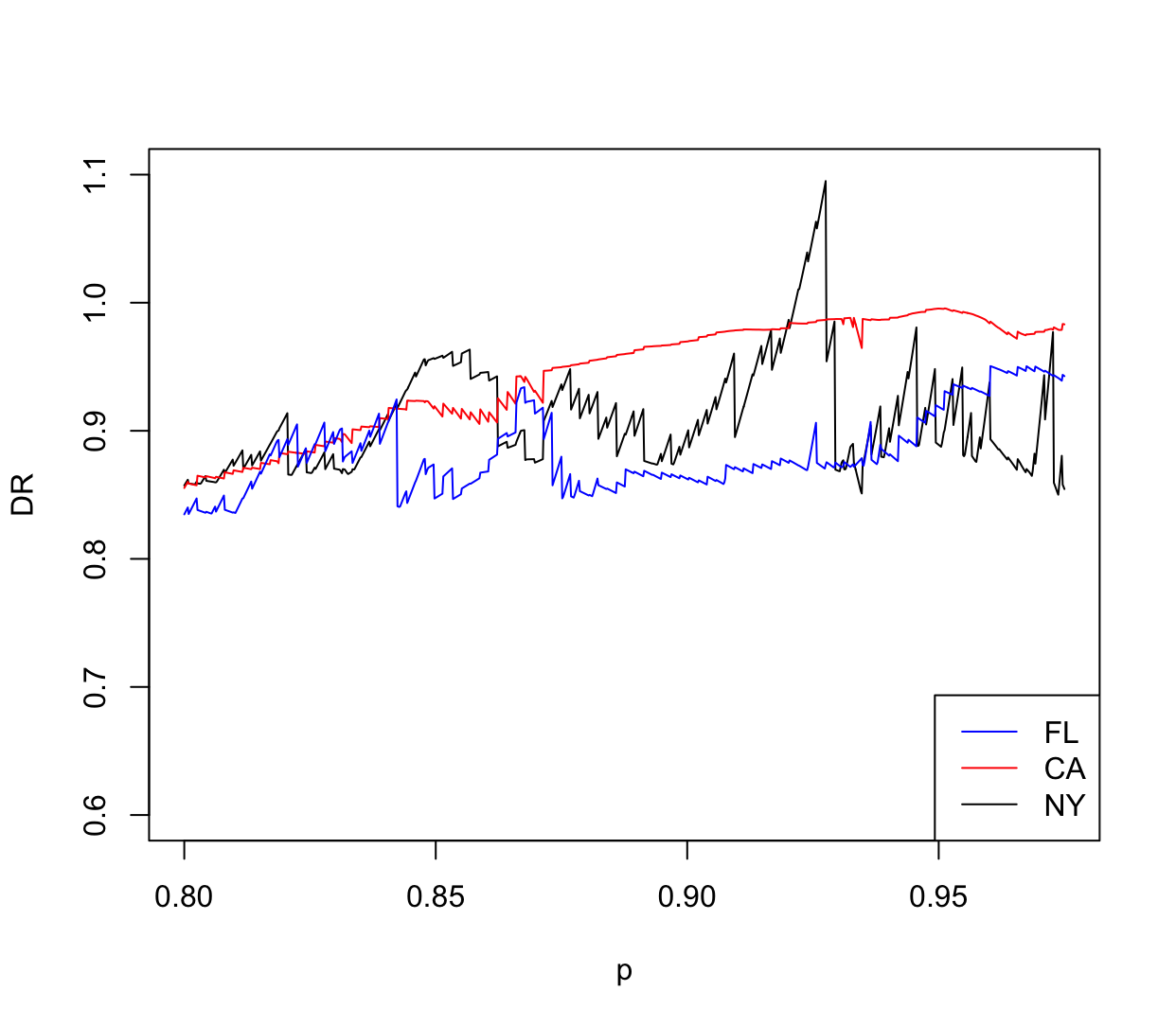}
    \caption{\small$\xi=0.7^{1/\widehat{\alpha}_{FL}}$}
    \label{fig:pool_all_result7}
    \end{subfigure}

    \captionsetup{justification=raggedright,singlelinecheck=false}
    \caption{\small$\mathrm{DR}_i(p)$ in Pool 3 for selected values of $\xi$}
    
    \label{fig:pool_all_result}
\end{figure}

Finally, Figure \ref{fig:pool_all_result} shows $\mathrm{DR}_{i}(p)$ for {$p$ ranging from 0.8 to 0.975 at a stepsize of 0.001}, where $\xi$ is set as $0.1^{1/\widehat{\alpha}_{1}},$
$0.3^{1/\widehat{\alpha}_{1}},$ $0.5^{1/\widehat{\alpha}_{1}},$ or
$0.7^{1/\widehat{\alpha}_{1}}$. Similarly to Pool 1 and Pool 2, in all
scenarios $\mathrm{DR}_{i}(p)$ is generally smaller than 1 and that they are
smaller and more different from each other when $\xi$ is closer to 0.

\section{Conclusion}
In this paper, we investigate how to optimally allocate the diversification
benefit among participants in a catastrophe risk pool. By joining the pool,
each participant receives coverage for a layer of loss characterized by an
attachment point and a limit. In return for this coverage, each participant
pays a premium proportional to the aggregated loss of the pool. The
diversification benefit is measured by a so-called diversification ratio
based on the VaR measure, which compares a participant's risk before and
after joining the pool. Achieving the optimal diversification benefit for
all participants at a given level $p$ of VaR requires solving a
high-dimensional optimization problem, for which an analytical solution is
generally unavailable, while numerical solutions can be time-consuming and
potentially unreliable. We therefore propose using the asymptotically
optimal pool, in which the diversification ratio is evaluated in the limit,
to approximate the practical optimal pool at any finite level $p$ close to
1. Through simulation and empirical studies, we show that the asymptotically
optimal pool provides an accurate and reliable approximation to the
practical optimal pool and that its implementation is relatively
straightforward.

\begin{appendices}
\renewcommand*\appendixpagename{\Large Appendices}
\appendixpage

\section{Proofs of analytical results}\label{Appendix_proof}

\subsection{Proofs of results under Model 1}

\begin{proof}[Proof of Lemma \ref{lem1}]

By (\ref{TE}) and Potter's bounds (see Proposition B.1.9(5) of \cite{haan2006extreme} for example), we have%
\[
1=\lim_{p\rightarrow1}\frac{\overline{F}_{i}\left(  \mathrm{VaR}_{p}%
(X_{i})\right)  }{\overline{F}_{1}\left(  \mathrm{VaR}_{p}(X_{1})\right)
}=\lim_{p\rightarrow1}\frac{\overline{F}_{i}\left(  \mathrm{VaR}_{p}%
(X_{i})\right)  }{\overline{F}_{1}\left(  \mathrm{VaR}_{p}(X_{i})\right)
}\frac{\overline{F}_{1}\left(  \mathrm{VaR}_{p}(X_{i})\right)  }{\overline
{F}_{1}\left(  \mathrm{VaR}_{p}(X_{1})\right)  }=\lim_{p\rightarrow1}%
\theta_{i}\left(  \frac{\mathrm{VaR}_{p}(X_{i})}{\mathrm{VaR}_{p}(X_{1}%
)}\right)  ^{-\alpha}.
\]
This leads to the following result
\[
\lim_{p\rightarrow1}\frac{\mathrm{VaR}_{p}(X_{i})}{\mathrm{VaR}_{p}(X_{1}%
)}=\theta_{i}^{1/\alpha}.
\]
Then under assumption (\ref{a1}), we have
\[
\lim_{p\rightarrow1}\frac{d_{i}(p)}{d_{1}(p)}=\lim_{p\rightarrow1}\frac
{d_{i}(p)}{\mathrm{VaR}_{p}(X_{i})}\frac{\mathrm{VaR}_{p}(X_{1})}{d_{1}%
(p)}\frac{\mathrm{VaR}_{p}(X_{i})}{\mathrm{VaR}_{p}(X_{1})}=\theta
_{i}^{1/\alpha}.
\]
Similarly, by Potter's bounds, we have
\[
\lim_{p\rightarrow1}\frac{\overline{F}_{i}(d_{i})}{\overline{F}_{1}(d_{1}%
)}=\lim_{p\rightarrow1}\frac{\overline{F}_{i}(d_{i})}{\overline{F}_{1}(d_{i}%
)}\frac{\overline{F}_{1}(d_{i})}{\overline{F}_{1}(d_{1})}=\lim_{p\rightarrow
1}\theta_{i}\left(  \frac{d_{i}(p)}{d_{1}(p)}\right)  ^{-\alpha}=1.
\]
Lastly, we have
\[
\lim_{p\rightarrow1}\frac{d_{i}(p)}{\mathrm{VaR}_{p}(X_{1})}=\lim
_{p\rightarrow1}\frac{d_{i}(p)}{\mathrm{VaR}_{p}(X_{i})}\frac{\mathrm{VaR}%
_{p}(X_{i})}{\mathrm{VaR}_{p}(X_{1})}=\theta_{i}^{1/\alpha}\xi
\]
and 
\[
\lim_{p\rightarrow1}\frac{l_{i}(p)}{\mathrm{VaR}_{p}(X_{1})}=\lim
_{p\rightarrow1}\frac{l_{i}(p)}{d_{i}(p)}\frac{d_{i}(p)}{\mathrm{VaR}%
_{p}(X_{1})}=\lambda_{i}\theta_{i}^{1/\alpha}\xi.
\]
by assumption (\ref{a1}). This completes the proof.\bigskip
\end{proof}

\begin{proof}[Proof of Theorem \ref{theorem3}]

Consider the split
\begin{align}
\mathrm{DR}_{i} (p) &  =\frac{\mathrm{VaR}_{p}(X_{i}-Y_{i})}{\mathrm{VaR}%
_{p}(X_{i})}+\frac{E\left[  Y_{i}\right]  }{E\left[  S_{n}\right]  }\cdot
\frac{\mathrm{VaR}_{p}(S_{n})}{\mathrm{VaR}_{p}(X_{i})}\nonumber \\
&  :=I_{1}+I_{2}\cdot I_{3}. \label{split}%
\end{align}
Next we analyze each term separately.

First we consider $I_{1}$. For any $0<p<1$, we have
\[
I_{1}=\left \{
\begin{array}
[c]{lll}%
1, &  & p\leq F_{i}(d_{i}),\\
\frac{d_{i}}{\mathrm{VaR}_{p}(X_{i})}, &  & F_{i}(d_{i})<p\leq F_{i}(l_{i}),\\
1-\frac{l_{i}-d_{i}}{\mathrm{VaR}_{p}(X_{i})}, &  & p>F_{i}(l_{i}).
\end{array}
\right.
\]
This leads to the following expression
\[
\lim_{p\rightarrow1}I_{1}=\left \{
\begin{array}
[c]{lll}%
1, &  & \xi \geq1,\\
\xi, &  & 1/\lambda_{i}\leq \xi<1,\\
1-(\lambda_{i}-1)\xi, &  & \xi<1/\lambda_{i}.
\end{array}
\right.
\]

Now we turn to $I_{2}$. Note that%
\[
E[Y_{i}]=\int_{d_{i}}^{l_{i}}\overline{F}_{i}(x)dx=d_{i}\int_{1}^{l_{i}/d_{i}%
}\overline{F}_{i}\left(  d_{i}x\right)  dx.
\]
By Potter's bounds (see Proposition B.1.9(5) of \cite{haan2006extreme} for example) and assumption (\ref{a1}),
for any $\varepsilon,\delta>0$, there exists $p_{0}>0$ and $d_{0}>0$, such
that for $p_{0}<p<1$, we have $d_{i}(p)>d_{0}$, $l_{i}(p)/d_{i}(p)<\lambda
_{i}+\varepsilon$ and
\[
\frac{\overline{F}_{i}\left(  d_{i}x\right)  }{\overline{F}_{i}\left(
d_{i}\right)  }\leq(1+\varepsilon)x^{-\alpha+\delta}%
\]
for all $x>1$. Then for $p_{0}<p<1$, we have
\[
0 \leq \frac{E[Y_{i}]}{d_{i}\overline{F}_{i}\left(  d_{i}\right)  }\leq \int
_{1}^{\lambda_{i}+\varepsilon}(1+\varepsilon)x^{-\alpha+\delta}dx<\infty
\]
for all $\alpha>0$. Adhering to the Lebesgue dominated convergence theorem gives us
\begin{equation}
\lim_{p\rightarrow1}\frac{E[Y_{i}]}{d_{i}\overline{F}_{i}\left(  d_{i}\right)
}=\int_{1}^{\lambda_{i}}x^{-\alpha}dx=\frac{\lambda_{i}^{1-\alpha}-1}
{1-\alpha}. \label{limit1}%
\end{equation}
If $\alpha=1$, the right-hand side of (\ref{limit1}) is understood to be
$\ln \lambda_{i}$. Then, by (\ref{r1}), we have%
\[
\lim_{p\rightarrow1}\frac{E[Y_{i}]}{d_{1}\overline{F}_{1}\left(  d_{1}\right)
}=\lim_{p\rightarrow1}\frac{E[Y_{i}]}{d_{i}\overline{F}_{i}\left(
d_{i}\right)  }\frac{d_{i}\overline{F}_{i}\left(  d_{i}\right)  }%
{d_{1}\overline{F}_{1}\left(  d_{1}\right)  }=\frac{\left(  \lambda
_{i}^{1-\alpha}-1\right)  \theta_{i}^{1/\alpha}}{1-\alpha}.
\]
Moreover, we also have
\[
\lim_{p\rightarrow1}\frac{E\left[  S_{n}\right]  }{d_{1}\overline{F}%
_{1}\left(  d_{1}\right)  }=\lim_{p\rightarrow1}\sum_{j=1}^{n}\frac{E[Y_{j}%
]}{d_{1}\overline{F}_{1}\left(  d_{1}\right)  }=\sum_{j=1}^{n}\frac{\left(
\lambda_{j}^{1-\alpha}-1\right)  \theta_{j}^{1/\alpha}}{1-\alpha}.
\]
It follows that
\[
\lim_{p\rightarrow1}I_{2}=\frac{\left(  \lambda_{i}^{1-\alpha}-1\right)
\theta_{i}^{1/\alpha}}{\sum_{j=1}^{n}\left(  \lambda_{j}^{1-\alpha}-1\right)
\theta_{j}^{1/\alpha}}.
\]

Lastly, we consider $I_{3}$. To analyze $\mathrm{VaR}_{p}(S_{n})$, we start
with $\mathbb{P}(S_{n}>t)$ and compare it with $\overline{F}_1(t)$, where $t=\mathrm{VaR}_{p}(X_{1})$. Then $p\rightarrow1$ can be replaced by $t\rightarrow\infty$. Denote $J=\{1,....n\}$ and $\mathbb{J}=\{
\left(  J_{1},J_{2},J_{3}\right)  :J_{1}\cup J_{2}\cup J_{3}=J$ and
$J_{1},J_{2},J_{3}$ mutually exclusive$\}$. Note that at most two of $J_{1}$,
$J_{2}$ and $J_{3}$ can be empty sets. Let $\mathbb{J}_{1}=\left \{
X_{i}\leq d_{i}:i\in J_{1}\right \}  $, $\mathbb{J}_{2}=\left \{  d_{i}<
X_{i}\leq l_{i}:i\in J_{2}\right \}  $ and $\mathbb{J}_{3}=\left \{  X_{i}>
l_{i}:i\in J_{3}\right \}  $. Then we have
\begin{align*}
\mathbb{P}(S_{n}>t)  &  =\underset{\left(  J_{1},J_{2},J_{3}\right)  \in \mathbb{J}%
}{\sum}\mathbb{P} \left(  \underset{i\in J_{1}}{\sum}Y_{i}+\underset{j\in J_{2}}{\sum
}Y_{j}+\underset{k\in J_{3}}{\sum}Y_{k}>t\right) \\
&  =\underset{\left(  J_{1},J_{2},J_{3}\right)  \in \mathbb{J}}{\sum}\mathbb{P} \left(
\underset{j\in J_{2}}{\sum}X_{j}>t+\underset{j\in J_{2}}{\sum}d_{j}%
-\underset{k\in J_{3}}{\sum}(l_{k}-d_{k}),\mathbb{J}_{1}\cap \mathbb{J}_{2}%
\cap \mathbb{J}_{3}\right) \\
&  =\underset{\left(  J_{1},J_{2},J_{3}\right)  \in \mathbb{J}}{\sum}I.
\end{align*}
Note that if $J_{2}=J_{3}=\varnothing$, then $I=0$. Next, we show that if at
least one of $J_{2}$ and $J_{3}$ is not an empty set, then the cardinality of
$J_{2}\cup J_{3}$ is at most 1 such that
\[
\lim_{t\rightarrow \infty}\frac{I}{\overline{F}_{1}\left(  t\right)  }\not =0.
\]
It suffices to show the following cases for some $1\leq i\not =j\leq n$ and
$i$ or $j\not \in J_{1}$: (1) $J_{2}=\{i\}$ and $J_{3}=\varnothing$; (2)
$J_{2}=\{i,j\}$ and $J_{3}=\varnothing$; (3) $J_{2}=\varnothing$ and
$J_{3}=\{i\}$; (4) $J_{2}=\varnothing$ and $J_{3}=\{i,j\}$; (5) $J_{2}=\{i\}$
and $J_{3}=\{j\}$.

(1) In this case, since $X_{i}$'s are independent, we obtain the following result
\[
I=\mathbb{P} \left(  X_{i}>t+d_{i},d_{i}<X_{i}\leq l_{i}\right)  \mathbb{P} \left(
\mathbb{J}_{1}\right)  .
\]
For $t>l_{i}-d_{i}$ which is equivalent to $\xi<\theta_{i}^{-1/\alpha}\left(
\lambda_{i}-1\right)  ^{-1}$, we have $I=0$. For $0<t<l_{i}-d_{i}$ which is
equivalent to $\xi>\theta_{i}^{-1/\alpha}\left(  \lambda_{i}-1\right)  ^{-1}$,
since $t=\mathrm{VaR}_{p}(X_{1})$, by Lemma \ref{lem1} we have%
\begin{align*}
\lim_{t\rightarrow \infty}\frac{I}{\overline{F}_{1}\left(  t\right)  }  &
=\lim_{t\rightarrow \infty}\left(  \frac{\overline{F}_{i}\left(  t+d_{i}%
\right)  }{\overline{F}_{1}\left(  t\right)  }-\frac{\overline{F}_{i}\left(
l_{i}\right)  }{\overline{F}_{1}\left(  t\right)  }\right)
{\displaystyle \prod \limits_{k\in J_{1}}}
F_{k}(d_{k})\\
&  =\lim_{t\rightarrow \infty}\left(  \frac{\overline{F}_{i}\left(
t+d_{i}\right)  }{\overline{F}_{i}\left(  t\right)  }\frac{\overline{F}%
_{i}\left(  t\right)  }{\overline{F}_{1}\left(  t\right)  }-\frac{\overline
{F}_{i}\left(  l_{i}\right)  }{\overline{F}_{i}\left(  t\right)  }%
\frac{\overline{F}_{i}\left(  t\right)  }{\overline{F}_{1}\left(  t\right)
}\right)  \lim_{t\rightarrow \infty}%
{\displaystyle \prod \limits_{k\in J_{1}}}
F_{k}(d_{k})\\
&  =\left(  \theta_{i}^{-1/\alpha}+\xi \right)  ^{-\alpha}-\left(  \xi
\lambda_{i}\right)  ^{-\alpha}.
\end{align*}

(2) In this case, since $X_{i}$'s are independent, we obtain the following result
\begin{align*}
I  &  =\mathbb{P} \left(  X_{i}+X_{j}>t+d_{i}+d_{j},d_{i}< X_{i}\leq l_{i},d_{j}<
X_{j}\leq l_{j}\right)  \mathbb{P} \left(  \mathbb{J}_{1}\right) \\
&  =\mathbb{P} \left(  X_{i}+X_{j}>t+d_{i}+d_{j}\left \vert d_{i}< X_{i}\leq l_{i}%
,d_{j}< X_{j}\leq l_{j}\right.  \right)  \mathbb{P} \left(  d_{i}< X_{i}%
\leq l_{i}\right)  \mathbb{P} \left(  d_{j}< X_{j}\leq l_{j}\right)  \mathbb{P} \left(
\mathbb{J}_{1}\right)  .
\end{align*}
Then, following a similar proof to that for case (1), we have
\[
\lim_{t\rightarrow \infty}\frac{\mathbb{P} \left(  d_{i}< X_{i}\leq l_{i}\right)
}{\overline{F}_{1}\left(  t\right)  }\mathbb{P} \left(  d_{j}<X_{j}\leq l_{j}\right)
=0,
\]
which leads to
\begin{equation}
\lim_{t\rightarrow \infty}\frac{I}{\overline{F}_{1}\left(  t\right)  }=0.
\label{I0}%
\end{equation}
Thus, when the cardinality of $J_{2}$ is greater than $1$, (\ref{I0}) holds true.

(3) In this case, since $X_{i}$'s are independent, we obtain the following result
\[
I=\mathbb{P} \left(  l_{i}-d_{i}>t,X_{i}> l_{i}\right)  \mathbb{P} \left(  \mathbb{J}%
_{1}\right)  .
\]
For $t>l_{i}-d_{i}$ which is equivalent to $\xi<\theta_{i}^{-1/\alpha}\left(
\lambda_{i}-1\right)  ^{-1}$, we have $I=0$. For $0<t<l_{i}-d_{i}$ which is
equivalent to $\xi>\theta_{i}^{-1/\alpha}\left(  \lambda_{i}-1\right)  ^{-1}$,
we have%
\begin{align*}
\lim_{t\rightarrow \infty}\frac{I}{\overline{F}_{1}\left(  t\right)  }  &
=\lim_{t\rightarrow \infty}\frac{\overline{F}_{i}\left(  l_{i}\right)
}{\overline{F}_{1}\left(  t\right)  }%
{\displaystyle \prod \limits_{k\in J_{1}}}
F_{k}(d_{k})\\
&  =\left(  \xi \lambda_{i}\right)  ^{-\alpha}.
\end{align*}

(4) In this case, since $X_{i}$'s are independent, we have%
\begin{align*}
I  &  =\mathbb{P} \left(  l_{i}+l_{j}>t+d_{i}+d_{j},X_{i}> l_{i},X_{j}>
l_{j}\right)  \mathbb{P} \left(  \mathbb{J}_{1}\right) \\
&  =\mathbb{P} \left(  l_{i}+l_{j}>t+d_{i}+d_{j}\left \vert X_{i}> l_{i},X_{j}>
l_{j}\right.  \right)  \mathbb{P} \left(  X_{i}> l_{i}\right)  \mathbb{P} \left(  X_{j}>
l_{j}\right)  \mathbb{P} \left(  \mathbb{J}_{1}\right)  .
\end{align*}
Then, following a similar proof to that for case (3), we have
\[
\lim_{t\rightarrow \infty}\frac{\mathbb{P} \left(  X_{i}> l_{i}\right)  }%
{\overline{F}_{1}\left(  t\right)  }\mathbb{P} \left(  X_{j}> l_{j}\right)  =0,
\]
which leads to (\ref{I0}). Moreover, when the cardinality of $J_{3}$ is
greater than $1$, (\ref{I0}) holds as well.

(5) In this case, since $X_{i}$'s are independent, we have
\begin{align*}
I  &  =\mathbb{P} \left(  X_{i}+l_{j}>t+d_{i}+d_{j},d_{i}< X_{i}\leq l_{i},X_{j}>
l_{j}\right)  \mathbb{P} \left(  \mathbb{J}_{1}\right) \\
&  =\mathbb{P} \left(  X_{i}+l_{j}>t+d_{i}+d_{j}\left \vert d_{i}< X_{i}\leq l_{i}%
,X_{j}> l_{j}\right.  \right)  \mathbb{P} \left(  d_{i}< X_{i}\leq l_{i}\right)
\mathbb{P} \left(  X_{j}> l_{j}\right)  \mathbb{P} \left(  \mathbb{J}_{1}\right)  .
\end{align*}
Then following a similar proof to that for case (1), we have
\[
\lim_{t\rightarrow \infty}\frac{\mathbb{P} \left(  d_{i}< X_{i}\leq l_{i}\right)
}{\overline{F}_{1}\left(  t\right)  }\mathbb{P} \left(  X_{j}> l_{j}\right)  =0,
\]
which leads to (\ref{I0}). Moreover, when the cardinality of $J_{2} \cup J_{3}$ is greater than $1$, (\ref{I0}) holds as well.

To sum up, we obtain the following result
\begin{equation}
\lim_{t\rightarrow \infty}\frac{\mathbb{P}(S_{n}>t)}{\overline{F}_{1}\left(  t\right)
}=\sum_{j\in Z}\left(  \theta_{j}^{-1/\alpha}+\xi \right)  ^{-\alpha},
\label{sum}%
\end{equation}
where $Z=\left \{  j=1,2,...,n:\xi>\theta_{j}^{-1/\alpha}\left(  \lambda
_{j}-1\right)  ^{-1}\right \}  $.

Next, we obtain the limit of $\frac{\mathrm{VaR}_{p}(S_{n})}{\mathrm{VaR}%
_{p}(X_{1})}$ as $p\rightarrow1$. If $Z$ is an empty set, then $\lim
_{p\rightarrow1}\frac{\mathrm{VaR}_{p}(S_{n})}{\mathrm{VaR}_{p}(X_{1})}=0$.
When $Z$ is not empty, we denote the right-hand side of (\ref{sum}) by $a$, which
is not $0$. Then for any $\varepsilon>0$, there exits $t_{0}>0$ such that for
$t>t_{0}$, we have%
\begin{equation}
a\overline{F}_{1}\left(  t\right)  -\varepsilon \leq \mathbb{P}(S_{n}>t)\leq
a\overline{F}_{1}\left(  t\right)  +\varepsilon. \label{ineq}%
\end{equation}
Since
\[
\mathrm{VaR}_{p}(S_{n})=\inf \left \{  x:\mathbb{P}(S_{n}\leq x)\geq p\right \}
=\inf \left \{  x:\mathbb{P}(S_{n}>x)\leq1-p\right \}  ,
\]
(\ref{ineq}) implies that for $p$ close to 1, we have
\begin{equation}
\inf \left \{  x:\overline{F}_{1}\left(  x\right)  \leq \frac{1-p+\varepsilon}%
{a}\right \}  \leq \mathrm{VaR}_{p}(S_{n})\leq \inf \left \{  x:\overline{F}%
_{1}\left(  x\right)  \leq \frac{1-p-\varepsilon}{a}\right \}  . \label{ineq1}%
\end{equation}
Since (\ref{ineq1}) holds for arbitrary $\varepsilon$, we have
\[
\mathrm{VaR}_{p}(S_{n})\sim \mathrm{VaR}_{1-\left(  1-p\right)  /a}(X_{1}).
\]
This leads to the following result
\[
\lim_{p\rightarrow1}\frac{\mathrm{VaR}_{p}(S_{n})}{\mathrm{VaR}_{p}(X_{1}%
)}=\lim_{p\rightarrow1}\frac{\mathrm{VaR}_{p}(S_{n})}{\mathrm{VaR}_{1-\left(
1-p\right)  /a}(X_{1})}\frac{\mathrm{VaR}_{1-\left(  1-p\right)  /a}(X_{1}%
)}{\mathrm{VaR}_{p}(X_{1})}=a^{1/\alpha}
\]
{by the convergence of extreme quantile estimator (see e.g., Theorem 4.3.8 of \cite{haan2006extreme}).} When $a=0$, it reduces to the case that $Z$ is an empty set.
Similarly, from the assumption (\ref{TE}), we can show that
\[
\lim_{p\rightarrow1}\frac{\mathrm{VaR}_{p}(X_{i})}{\mathrm{VaR}_{p}(X_{1}%
)}=\theta_{i}^{1/\alpha}.
\]
Furthermore, we also have
\[
\lim_{p\rightarrow1}I_{3}=\lim_{p\rightarrow1}\frac{\mathrm{VaR}_{p}(S_{n}%
)}{\mathrm{VaR}_{p}(X_{1})}\frac{\mathrm{VaR}_{p}(X_{1})}{\mathrm{VaR}%
_{p}(X_{i})}=\left(  \frac{a}{\theta_{i}}\right)  ^{1/\alpha}=\theta
_{i}^{-1/\alpha}\left(  \sum_{j\in Z}\left(  \theta_{j}^{-1/\alpha}%
+\xi \right)  ^{-\alpha}\right)  ^{1/\alpha}.
\]
The desired result follows by combining all of the above three terms.
\end{proof}

\bigskip

\subsection{Proofs of results under Model 2}

\begin{proof}[Proof of Lemma \ref{lem3}]

If $\alpha_{i}=\alpha_{1}$, then all the results can be proved in the same way as that for Lemma \ref{lem1}. If $\alpha_{i}>\alpha_{1}$, then
\[
\lim_{t\rightarrow \infty}\frac{\overline{F}_{i}(t)}{\overline{F}_{1}%
(t)}=0,\qquad \text{and}\qquad \lim_{p\rightarrow1}\frac{\mathrm{VaR}_{p}%
(X_{i})}{\mathrm{VaR}_{p}(X_{1})}=0.
\]
Thus we obtain the following result
\[
\lim_{p\rightarrow1}\frac{d_{i}(p)}{d_{1}(p)}=\lim_{p\rightarrow1}\frac
{d_{i}(p)}{\mathrm{VaR}_{p}(X_{i})}\frac{\mathrm{VaR}_{p}(X_{1})}{d_{1}%
(p)}\frac{\mathrm{VaR}_{p}(X_{i})}{\mathrm{VaR}_{p}(X_{1})}=0,
\]
and similarly we have
\[
\lim_{p\rightarrow1}\frac{d_{i}(p)}{\mathrm{VaR}_{p}(X_{1})}=\lim
_{p\rightarrow1}\frac{l_{i}(p)}{\mathrm{VaR}_{p}(X_{1})}=0.
\]
Note that when $\alpha_{i}>\alpha_{1}$, we have $\theta_{i}=0$. Thus the
desired results hold. Now we are left to show $\lim_{p\rightarrow1}%
\frac{\overline{F}_{i}(d_{i})}{\overline{F}_{1}(d_{1})}=1$, which follows
from
\[
\lim_{p\rightarrow1}\frac{\overline{F}_{i}(d_{i})}{\overline{F}_{1}(d_{1}%
)}=\lim_{p\rightarrow1}\frac{\overline{F}_{i}(d_{i})/(1-p)}{\overline{F}%
_{1}(d_{1})/(1-p)}=\lim_{p\rightarrow1}\frac{\overline{F}_{i}(d_{i}%
)/\overline{F}_{i}\left(  \mathrm{VaR}_{p}(X_{i})\right)  }{\overline{F}%
_{1}(d_{1})/\overline{F}_{1}\left(  \mathrm{VaR}_{p}(X_{1})\right)  }=1
\]
by (\ref{a3}) and the definition of regular variation. This completes the proof.
\bigskip
\end{proof}

\begin{proof}[Proof of Theorem \ref{difthm}]

Consider the same split (\ref{split}) as in Theorem \ref{theorem3}. Applying the
analysis of $I_{1}$ to Model 2, we have%
\[
\lim_{p\rightarrow1}I_{1}=\left \{
\begin{array}
[c]{lll}%
1, &  & \xi^{\alpha_{1}/\alpha_{i}}\geq1,\\
\xi^{\alpha_{1}/\alpha_{i}}, &  & 1/\lambda_{i}\leq \xi^{\alpha_{1}/\alpha_{i}%
}<1,\\
1-(\lambda_{i}-1)\xi^{\alpha_{1}/\alpha_{i}}, &  & \xi^{\alpha_{1}/\alpha_{i}%
}<1/\lambda_{i}.
\end{array}
\right.
\]
For $I_{2} \cdot I_3$, on the one hand, the relation (\ref{limit1}) holds for $E[Y_{i}]$:
\[
\lim_{p\rightarrow1}\frac{E[Y_{i}]}{d_{i}\overline{F}_{i}\left(  d_{i}\right)
}=\frac{\lambda_{i}^{1-\alpha_{i}}-1}{1-\alpha_{i}}.
\]
For $E\left[  S_{n}\right]  $, first note that by Lemma \ref{lem3}, we have
\[
\lim_{p\rightarrow1}\frac{E[Y_{i}]}{d_{1}\overline{F}_{1}\left(  d_{1}\right)
}=\lim_{p\rightarrow1}\frac{E[Y_{i}]}{d_{i}\overline{F}_{i}\left(
d_{i}\right)  }\frac{d_{i}\overline{F}_{i}\left(  d_{i}\right)  }%
{d_{1}\overline{F}_{1}\left(  d_{1}\right)  }=\frac{\left(  \lambda
_{i}^{1-\alpha_{i}}-1\right)  \theta_{i}^{1/\alpha_{i}}}{1-\alpha_{i}} = \frac{\left(  \lambda
_{i}^{1-\alpha_{1}}-1\right)  \theta_{i}^{1/\alpha_{1}}}{1-\alpha_{1}},
\]
since $\theta_{i}=0$ if $\alpha_{i}>\alpha_{1}$. Then we have
\[
\lim_{p\rightarrow1}\frac{E\left[  S_{n}\right]  }{d_{1}\overline{F}%
_{1}\left(  d_{1}\right)  }=\lim_{p\rightarrow1}\sum_{j=1}^{n}\frac{E[Y_{j}%
]}{d_{1}\overline{F}_{1}\left(  d_{1}\right)  }=\sum_{j=1}^{n}\frac{\left(
\lambda_{j}^{1-\alpha_{1}}-1\right)  \theta_{j}^{1/\alpha_{1}}}{1-\alpha_{1}%
}.
\]
On the other hand, by Lemma \ref{lem3} and following the proof of Theorem \ref{theorem3}, we have%
\[
\lim_{p\rightarrow1}\frac{\mathrm{VaR}_{p}(S_{n})}{\mathrm{VaR}_{p}(X_{1}%
)}=\left(  \sum_{j\in Z}\left(  \theta_{j}^{-1/\alpha_{1}}+\xi^{\alpha
_{1}/\alpha_{j}}\right)  ^{-\alpha_{1}}\right)  ^{1/\alpha_{1}}.
\]
This is due to the fact that if $\alpha_{j}>\alpha_{1}$, we have $\theta_{j}=0$
and $j\not \in Z$. Then, it follows that
\begin{align*}
\lim_{p\rightarrow1} I_2 \cdot I_3& =\lim_{p\rightarrow1}\frac{E\left[  Y_{i}\right]  }{E\left[  S_{n}\right]
}\frac{\mathrm{VaR}_{p}(S_{n})}{\mathrm{VaR}_{p}(X_{i})}  \\
&  =\lim
_{p\rightarrow1}\frac{E[Y_{i}]}{d_{i}\overline{F}_{i}\left(  d_{i}\right)
}\frac{d_{1}\overline{F}_{1}\left(  d_{1}\right)  }{E\left[  S_{n}\right]
}\frac{\mathrm{VaR}_{p}(S_{n})}{\mathrm{VaR}_{p}(X_{1})}\frac{d_{i}%
}{\mathrm{VaR}_{p}(X_{i})}\frac{\mathrm{VaR}_{p}(X_{1})}{d_{1}}\frac
{\overline{F}_{i}\left(  d_{i}\right)  }{\overline{F}_{1}\left(  d_{1}\right)
}\\
&  =\frac{\left(  1-\alpha_{1}\right)  \left(  \lambda_{i}^{1-\alpha_{i}%
}-1\right)  \xi^{\alpha_{1}/\alpha_{i}-1}}{(1-\alpha_{i})\sum_{j=1}^{n}\left(
\lambda_{j}^{1-\alpha_{1}}-1\right)  \theta_{j}^{1/\alpha_{1}}}\left(
\sum_{j\in Z}\left(  \theta_{j}^{-1/\alpha_{1}}+\xi \right)  ^{-\alpha_{1}%
}\right)  ^{1/\alpha_{1}}.
\end{align*}
This completes the proof.
\end{proof}

\bigskip

\subsection{Proof of Theorem \ref{thm:opt_sol}}

\begin{proof}
When $\xi\geq 1$, the smallest value possible of $\mathrm{DR}_i(1)$ is 1, which is achieved when $\Delta_{\xi
,\boldsymbol{\lambda}}=0$. This is obtained when the set $Z$ is an empty set, which happens when $\xi
\leq \theta_{j}^{-1/\alpha_{1}}\left(  \lambda_{j}-1\right)  ^{-1}$, or
equivalently $\lambda_{j}\leq1+\theta_{j}^{-1/\alpha_{1}}\xi^{-1}$, for all
$j=1,2,...,n$. By (\ref{a2}), this means that $1<\lambda_{j}\leq1+\theta_{j}^{-1/\alpha_{1}}\xi^{-1}$ are optimal solutions in this case.

When $\xi<1$, the smallest possible
$\mathrm{DR}_{i}(1)$ is $\xi^{\alpha_{1}/\alpha_{i}}$, which is achieve when
$\lambda_{i}\geq \xi^{-\alpha_{1}/\alpha_{i}}$ and $\Delta_{\xi
,\boldsymbol{\lambda}}=0$. As seen in the previous case, $\Delta_{\xi
,\boldsymbol{\lambda}}=0$ when $\lambda_{j}\leq1+\theta_{j}^{-1/\alpha_{1}}\xi^{-1}$, for all
$j=1,2,...,n$. Moreover, this condition can be satisfied at the same time as $\lambda_{i}\geq \xi^{-\alpha_{1}/\alpha_{i}}$, since $0\leq \theta_{i} \leq1$ for all $i=1,...,n$ and in Model 2, $\alpha
_{1}=\min \left \{  \alpha_{1},...,\alpha_{n}\right \}  $, which yields
$\alpha_{1}/\alpha_{i}\leq1$ for all $i=1,...,n$.

Combining these two cases, we obtain the set of optimal $\boldsymbol{\lambda}$, which is
\[
\boldsymbol{\lambda} \in \{(\lambda_{1},...,\lambda_{n}):\min \{ \xi^{-\alpha
_{1}/\alpha_{i}},1\} \leq \lambda_{i} \text{ and } \lambda_{i} \leq1+\theta
_{i}^{-1/\alpha_{1}} \xi^{-1}\}.
\]

\end{proof}
\bigskip

\section{Comparison of algorithms for simulation study} \label{Appendix_algo}

In this section, we provide a comparison of candidates for the numerical optimization algorithm used in Section \ref{sec:sim}, and justify the choice of GSA used in this paper. In the experiments within this comparison, we use a set of 20 samples with parameters from the second study of risks of different indices, and set $\xi=0.3^{1/8.5}$ and $p=0.95$. We consider the following global optimization algorithms:
\begin{enumerate}
    \item Artificial Bee Colony (ABC) \citep{karaboga2005idea}: an optimizer inspired by the foraging behaviour of a honey bee colony which searches for an optimal food source.
    \item Differential Evolution (DE) \citep{storn1997differential,mullen2011deoptim}: an optimization algorithm where candidates for the optimal solution are improved through evolution mechanisms such as mutation, crossover, selection.
    \item Generalized Simulated Annealing (GSA) \citep{tsallis1996generalized,xiang_generalized_2013}: a stochastic optimizer inspired by the annealing process in Metallurgy, which uses a heat treatment to minimize the material's internal energy (thermodynamics).
    \item Harmony Search (HS) \citep{geem_new_2001}: an optimization algorithm which mimics the process of improvisation of musicians in search of harmony.
    \item Particle Swarm Optimization (PSO) \citep{kennedy1995particle,clerc_standard_2012}: an algorithm imitating the movements of a flock of birds or a school of fish towards the optimal solution.
\end{enumerate}
While DE and GSA are applied using their respective \texttt{R} packages (\texttt{DEoptim} and \texttt{GenSA}), the algorithms  ABC, HS and PSO are built from scratch due to the unstable performance and poor maintenance of their \texttt{R} packages. To ensure the fairness of the comparison between algorithms, four factors are kept unchanged across these five experiments. Firstly, all experiments use the same 20 samples of $X_i$'s for the estimations described in Section \ref{sec:sim}, where each sample leads to an approximation of the optimal solution $\boldsymbol{\lambda}(p)$. Secondly, all variations of these experiments use the same starting points for the algorithm (i.e. the initial guess for the optimal solution). Thirdly, the convergence criterion is consistent across algorithms, where algorithms are deemed to have found the optimal solution if the function value is not improved after 500 consecutive iterations, with an iteration refers to an update of the optimal solution. Lastly, the maximum number of iterations allowed for each algorithm is the same, where algorithms are terminated if they do not converge after 10 thousand iterations.

We compare the performance of the algorithms based on two criteria: efficiency and accuracy. On the one hand, the efficiency is measured by the time $T_k$ ($k\in\{\text{ABC, DE, GSA, HS, PSO}\}$) needed for the algorithm to either converge or be terminated. Using 20 samples of $X_i$'s, we obtain a sample $t_{1,k},...,t_{20,k}$ of $T_k$. Inspired by the comparison protocols presented by \cite{beiranvand2017best}, Figures \ref{fig:algo_runningtime} and \ref{fig:algo_runningtime2} show the box plots and the empirical cumulative distribution functions of $T_k$
\[
\widehat{F}_{T_k}(x)= \frac{1}{20} \sum_{j=1}^{20} 1_{\{t_{jk}\leq x\}}.
\]
The results show that GSA and HS are the most efficient algorithms, as 100\% of numerical optimization problems tested are solved within 5 hours using these algorithms. ABC is the most inefficient among five algorithms as it takes more than 90 hours for ABC to solve any problem.

\begin{figure}[htbp]
\centering
\begin{subfigure}{.45\textwidth}
  \centering
   \includegraphics[width=\linewidth]{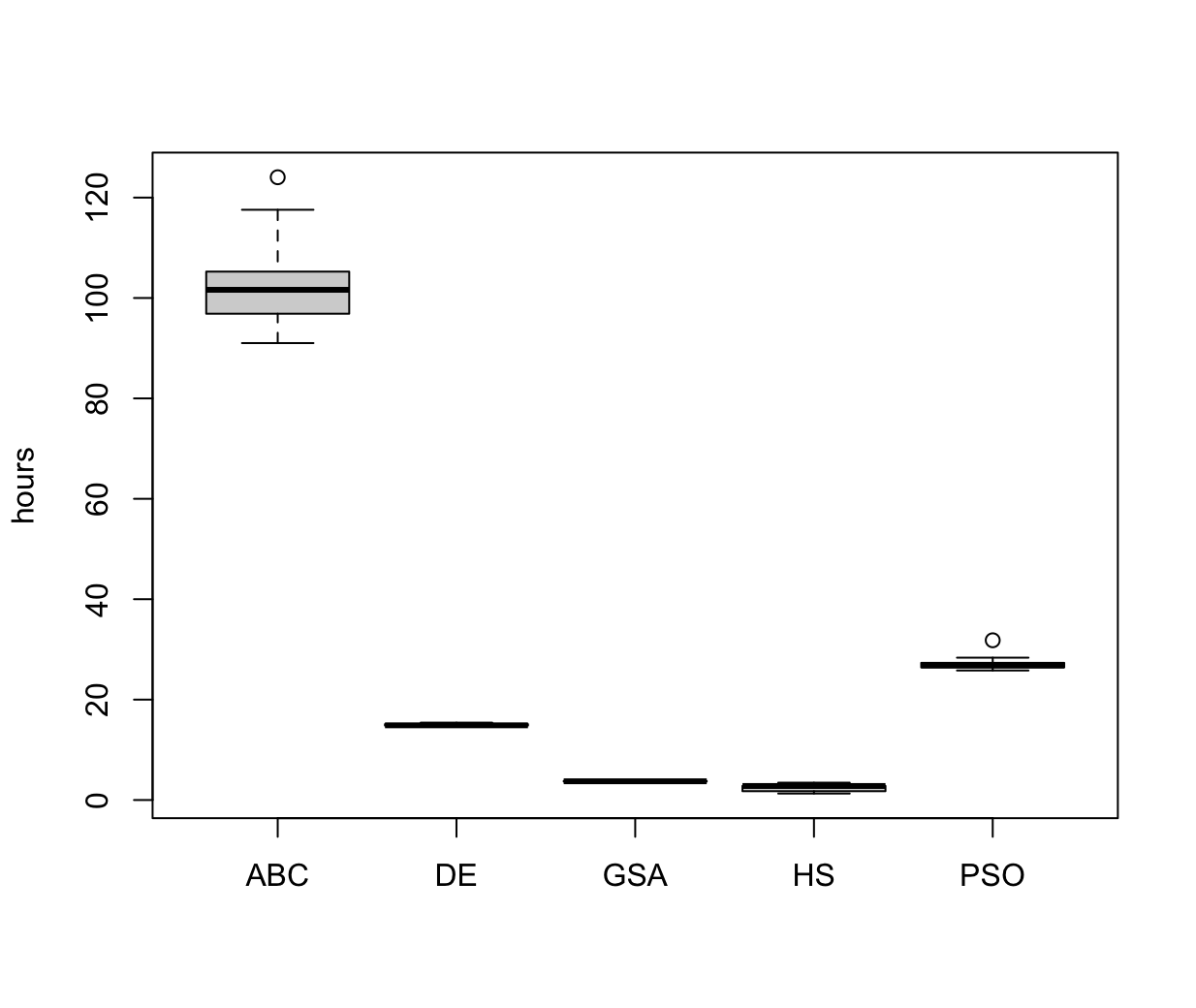}
 \caption{\small With ABC}%
    \label{fig:algo_runningtime_a}
\end{subfigure}%
\begin{subfigure}{.45\textwidth}
  \centering
   \includegraphics[width=\linewidth]{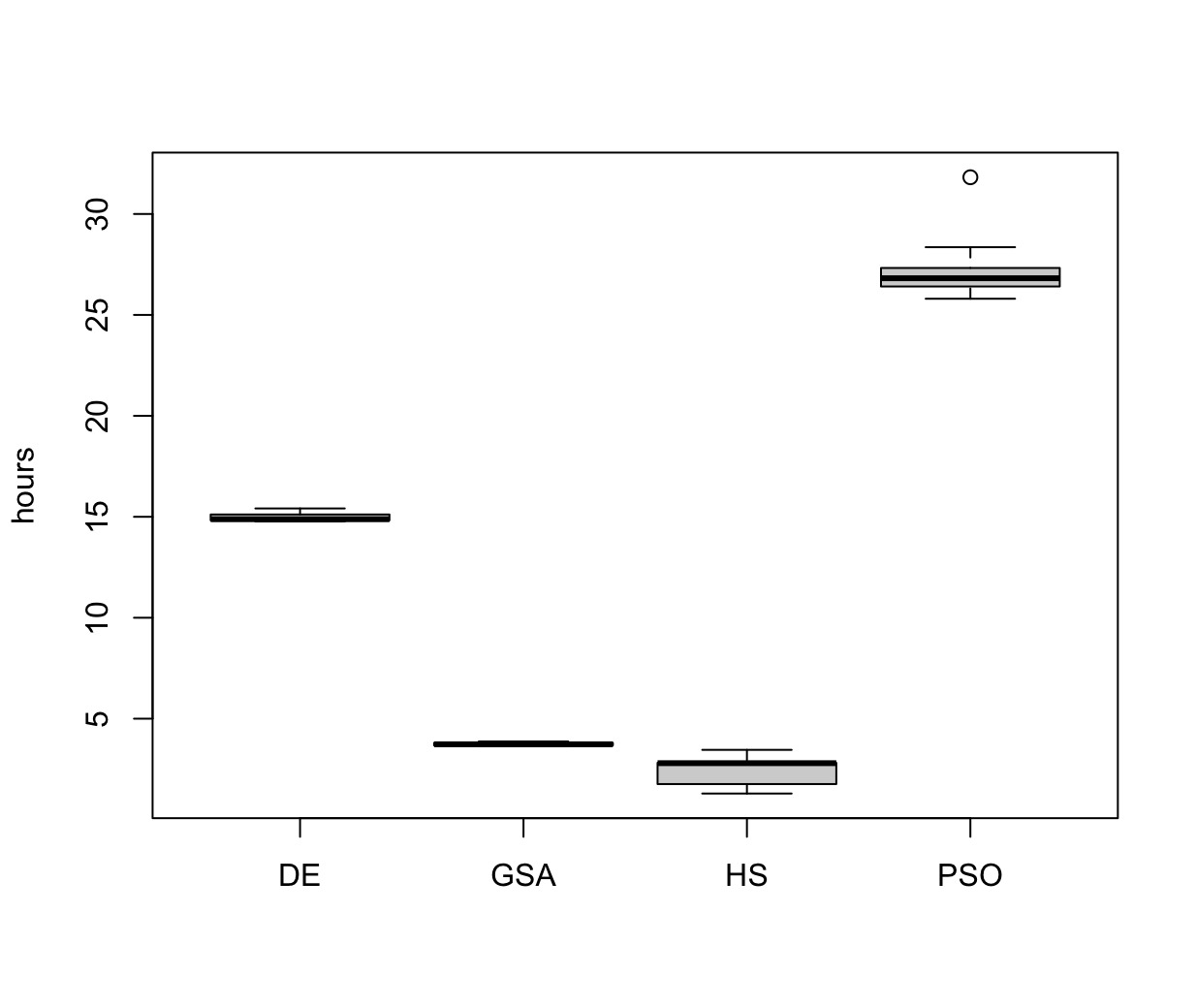}
 \caption{\small Without ABC}%
    \label{fig:algo_runningtime_b}
\end{subfigure}
\captionsetup{justification=raggedright,singlelinecheck=false}
\caption{\small Box plots of $T_k$}
\label{fig:algo_runningtime}
\end{figure}

\begin{figure}[htbp]
    \centering
    \includegraphics[width=0.7\linewidth]{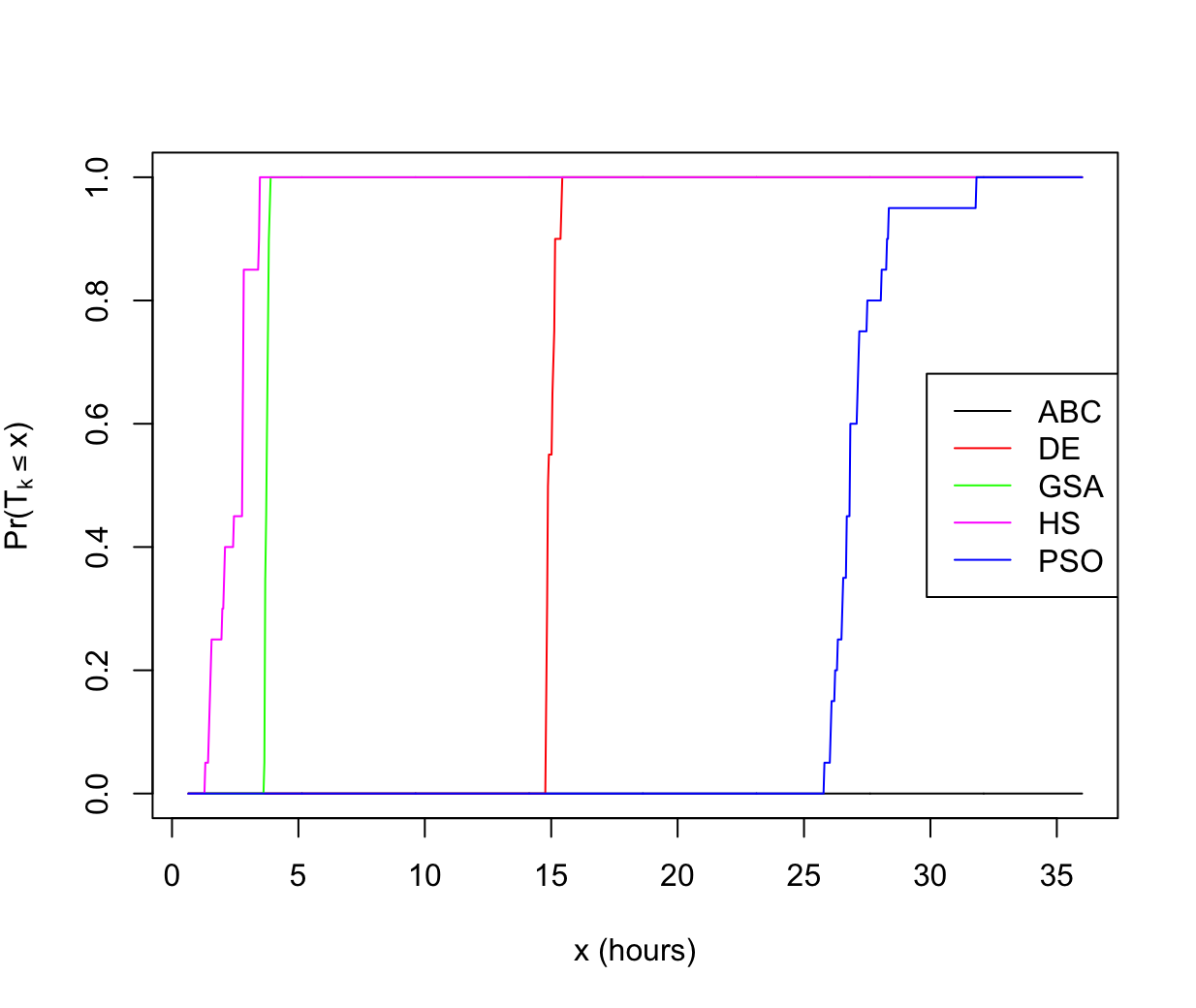}
    \captionsetup{justification=raggedright,singlelinecheck=false}
    \caption{Plots of empirical cumulative distribution functions of $T_k$}
    \label{fig:algo_runningtime2}
\end{figure}

On the other hand, the accuracy of algorithms is measured by the absolute error $ER_k$ ($k\in\{\text{ABC, DE, GSA, HS, PSO}\}$), with observations $er_{j,k}$ ($j=1,...,20$) defined as follows:
\[
er_{j,k} = |x_{j,k} - \min_{k} \{x_{j,k}\}|
\]
where $x_{j,k}$ is the optimal solution of the problem from the $j$-th sample obtained using algorithm $k$, and $\min_{k} \{x_{j,k}\}$ is assumed to be the true optimal solution of problem $j$. Figures \ref{fig:algo_error} and \ref{fig:algo_error2} show the box plots of $ER_k$ and the empirical cumulative distribution functions $\widehat{F}_{ER_k}$, which is defined similarly to $\widehat{F}_{T_k}$. It is clear that HS has the worst performance in terms of accuracy. Although GSA is not as accurate is ABC, DE and PSO, it still allows us to obtain a solution within $10^{-6}$ of the true optimal solution in all problems.

\begin{figure}[htbp]
\centering
\begin{subfigure}{.45\textwidth}
  \centering
   \includegraphics[width=\linewidth]{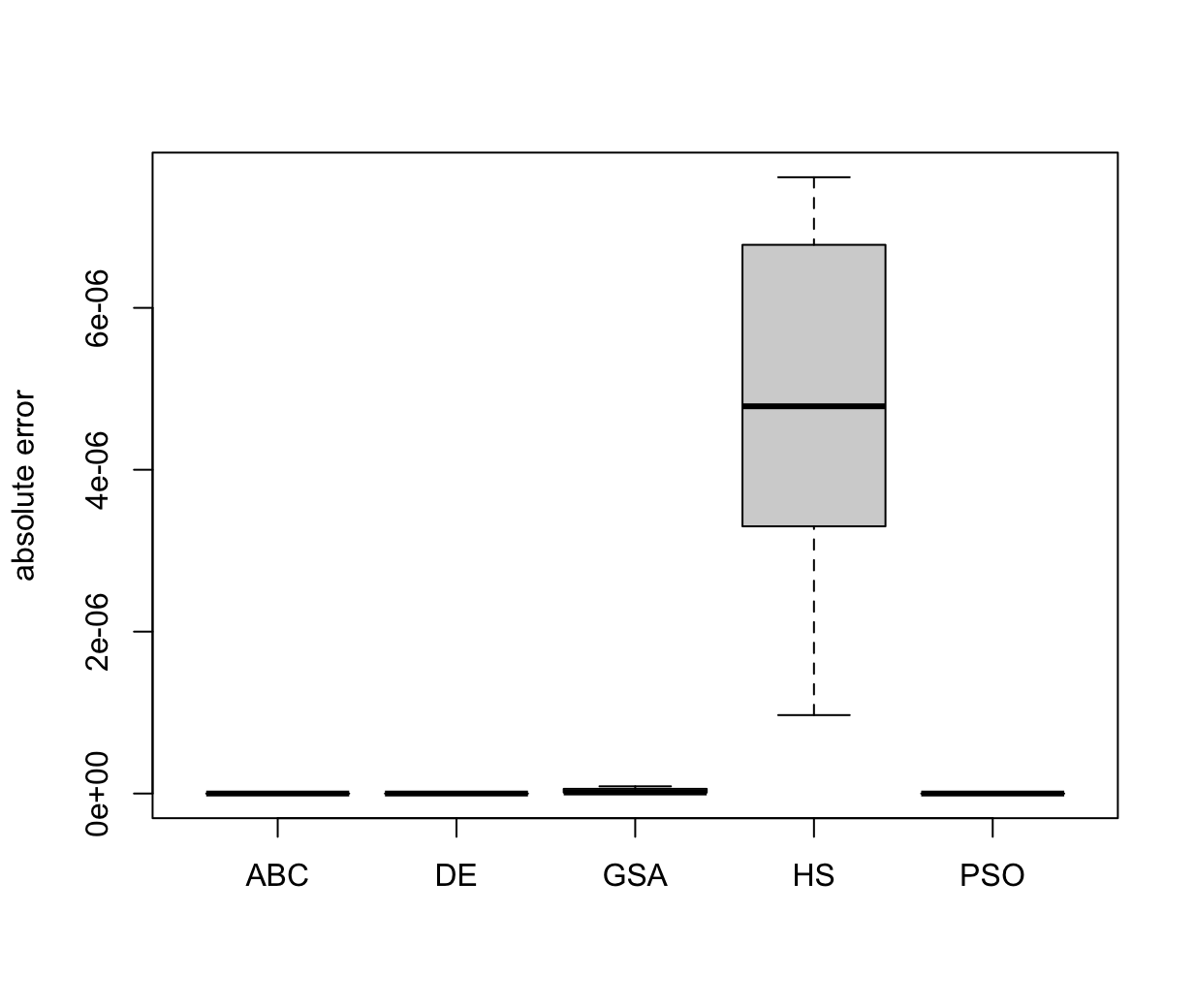}
 \caption{\small With HS}%
    \label{fig:algo_error_a}
\end{subfigure}%
\begin{subfigure}{.45\textwidth}
  \centering
   \includegraphics[width=\linewidth]{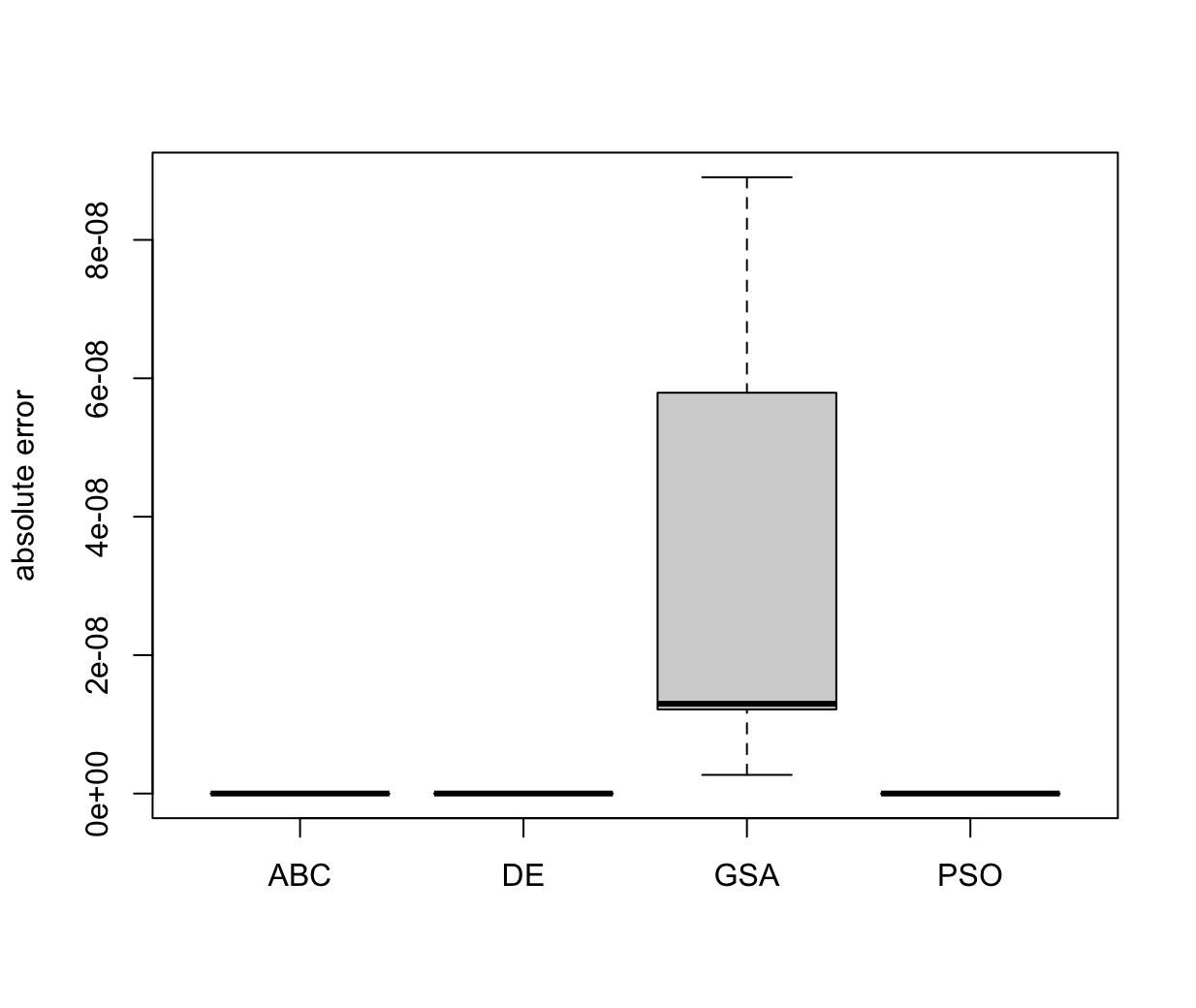}
 \caption{\small Without HS}%
    \label{fig:algo_error_b}
\end{subfigure}
\captionsetup{justification=raggedright,singlelinecheck=false}
\caption{\small Box plots of $ER_k$}
\label{fig:algo_error}
\end{figure}

\begin{figure}[htbp]
    \centering
    \includegraphics[width=0.7\linewidth]{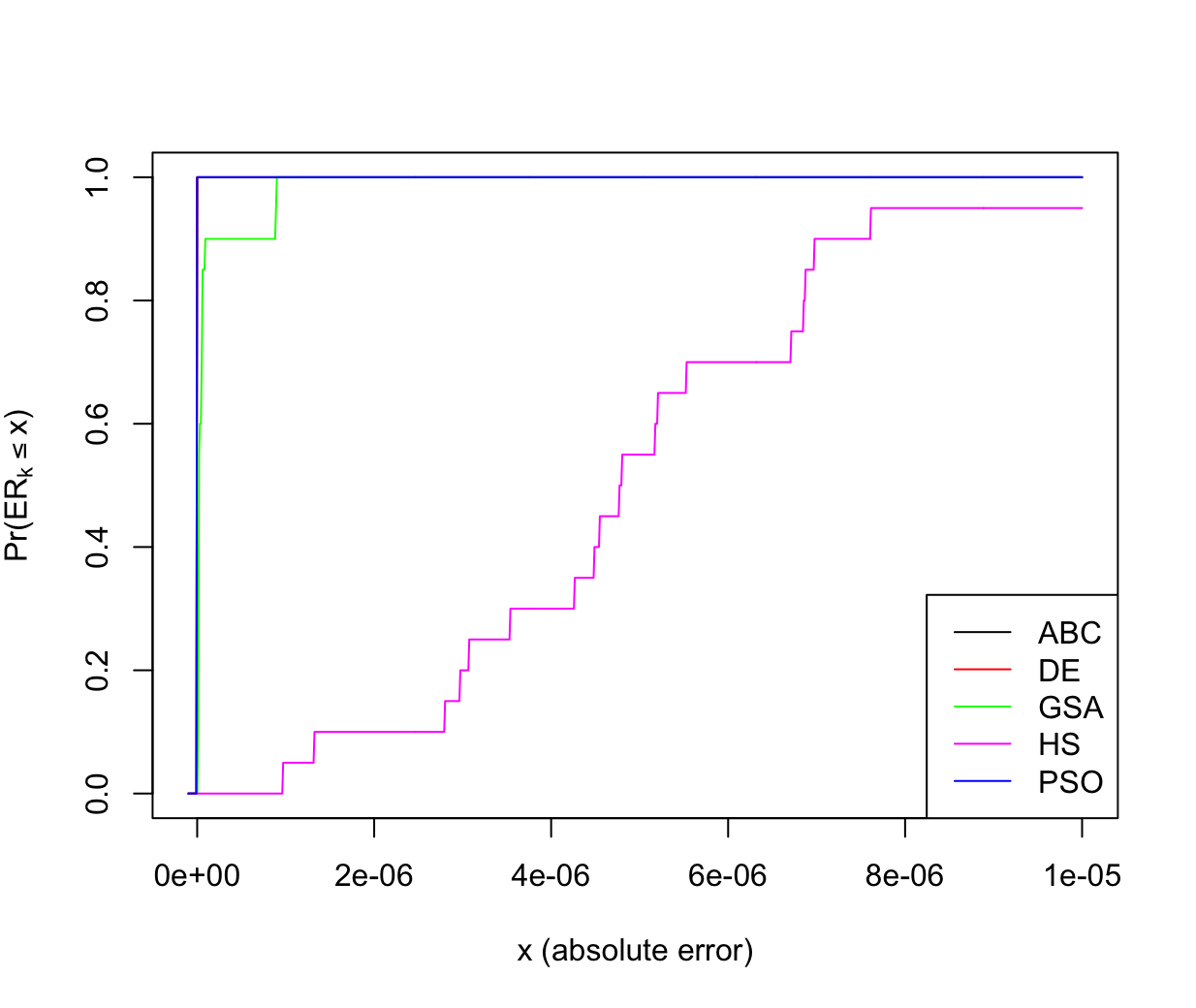}
    \captionsetup{justification=raggedright,singlelinecheck=false}
    \caption{Plots of empirical cumulative distribution functions of $ER_k$}
    \label{fig:algo_error2}
\end{figure}

Considering the performance of those five algorithms, GSA is the best candidate for the simulation study due to its ability to balance efficiency and accuracy.

\section{Exploratory analysis of the NFIP data}\label{Appendix_data_test}

In this section, we provide further details on the results of the statistical tests carried out in the exploratory analysis of the NFIP data in the main text. Denote $\widehat{\gamma}(k) = 1/\widehat{\alpha}(k)$ an inverse tail index estimated using the Hill's estimator with $k$ largest observations. The first test is for the regularly varying tail of the risks (Theorem 1 of \cite{dietrich2002testing}). A stricter version of the null hypothesis $H_0:\overline{F} \in RV_{-1/\gamma}$ for some $\gamma>0$ is not rejected if the test statistic
\[
k\int^1_0 \left( \frac{\log {X}_{(m-\lfloor kt \rfloor)} - \log {X}_{(m-k)}}{\widehat{\gamma}(k)} + \log t\right)^2 t^2 dt,
\]
where ${X}_{(1)}\leq {X}_{(2)}\leq \cdots \leq {X}_{(m)}$ denote the order statistics
for a sample ${X}_{1},...,{X}_{m}$ of distribution $F$, converges in distribution as $m \rightarrow \infty$ to
\[
T=\int_0^1 \left(B(t) + t \log t\int_0^1 \frac{B(s)}{s\text{ }ds} \right)^2 dt,
\]
where $\{B(t)\}$ is a Brownian bridge. Critical values of $T$ are available in \cite{dietrich2002testing}. The results in Figure \ref{fig:RV_test} show that there is strong evidence at the chosen significance level 0.01 that all three risks in our dataset have a regularly varying tail.

\begin{figure}[htbp]
\centering
\begin{subfigure}{0.33\textwidth}
  \centering
   \includegraphics[width=\linewidth]{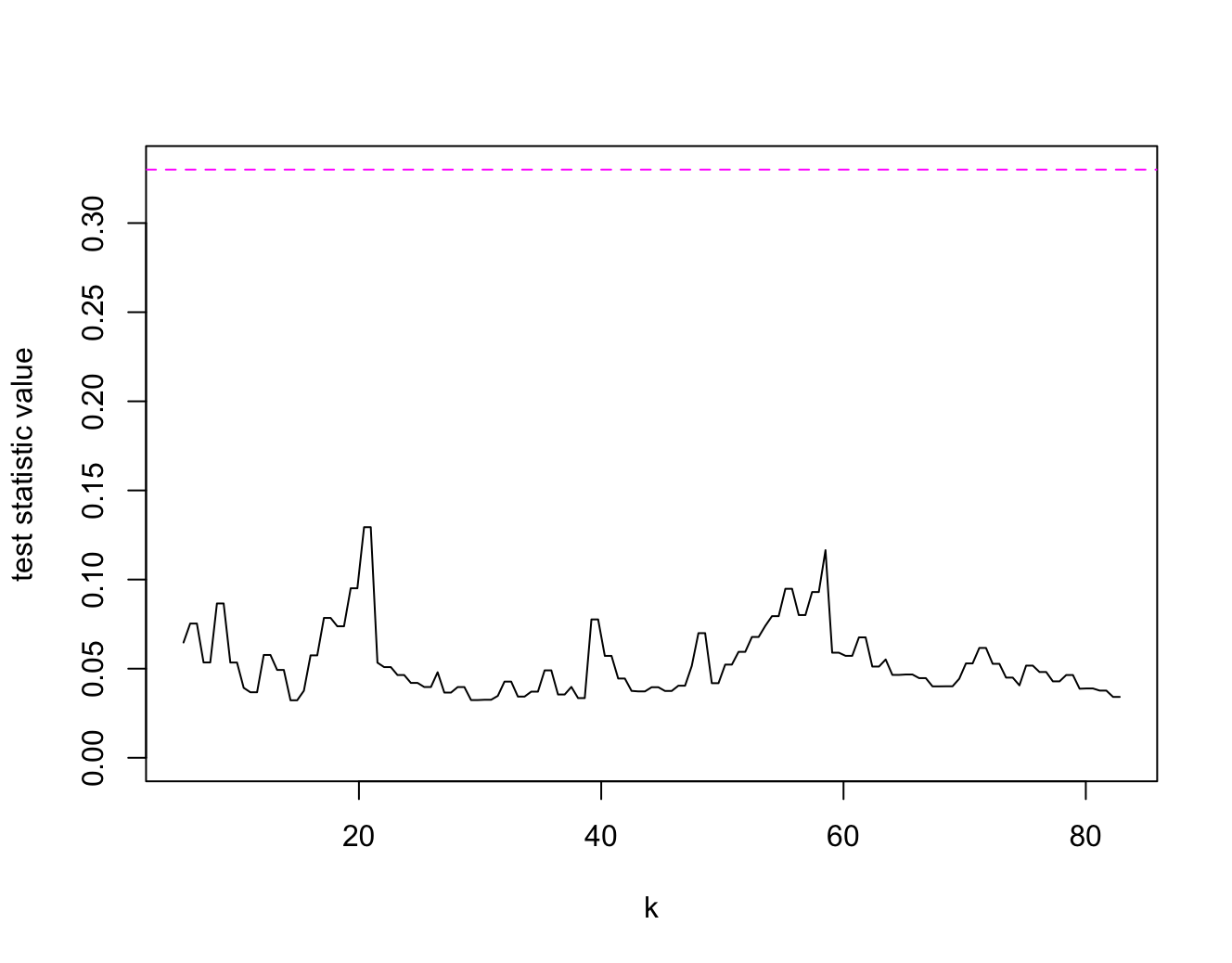}
    \caption{\small NY}
    \label{fig:RV_testNY}
\end{subfigure}%
\begin{subfigure}{.33\textwidth}
  \centering
   \includegraphics[width=\linewidth]{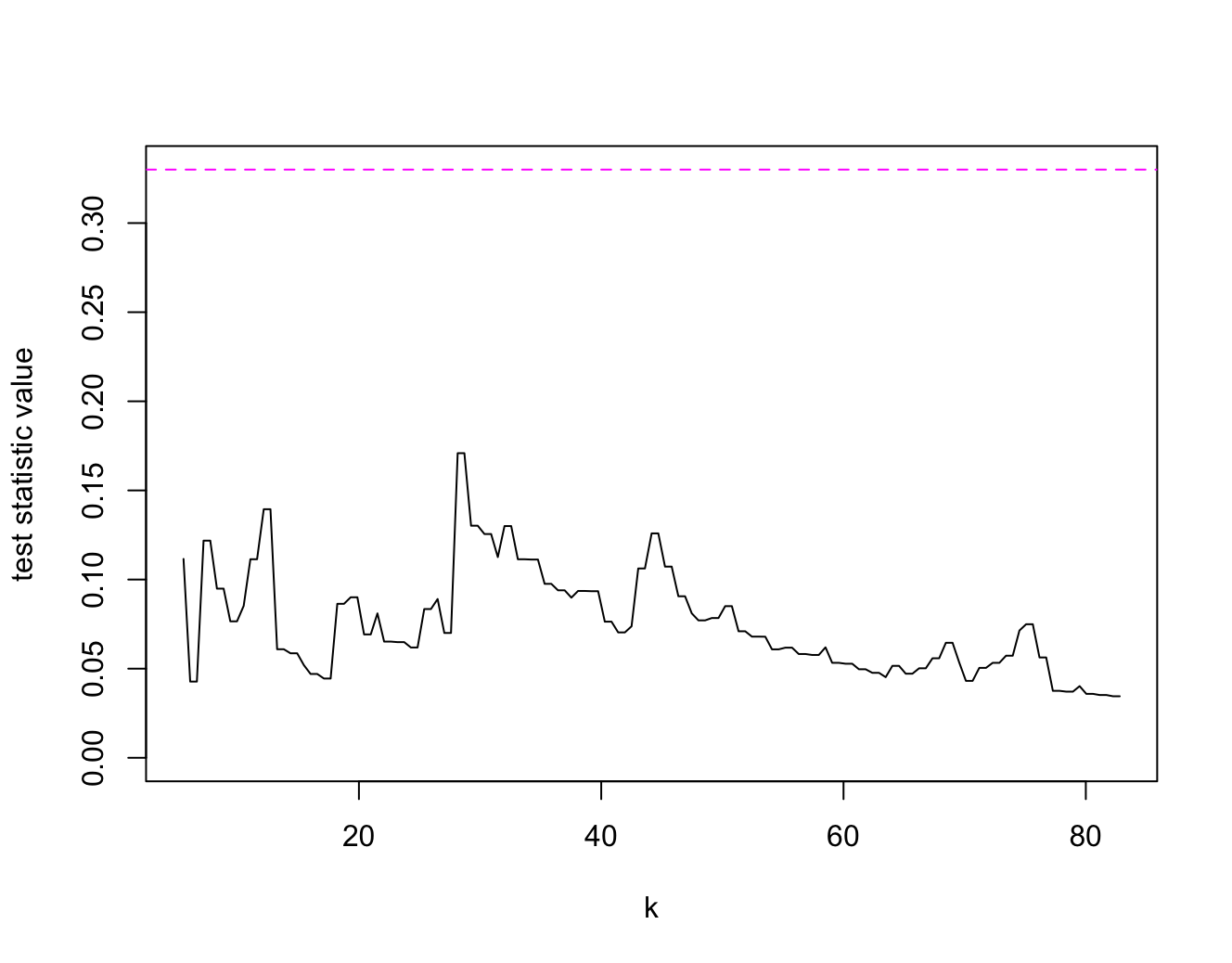}
 \caption{\small CA}%
    \label{fig:RV_testCA}
\end{subfigure}%
\begin{subfigure}{.33\textwidth}
  \centering
   \includegraphics[width=\linewidth]{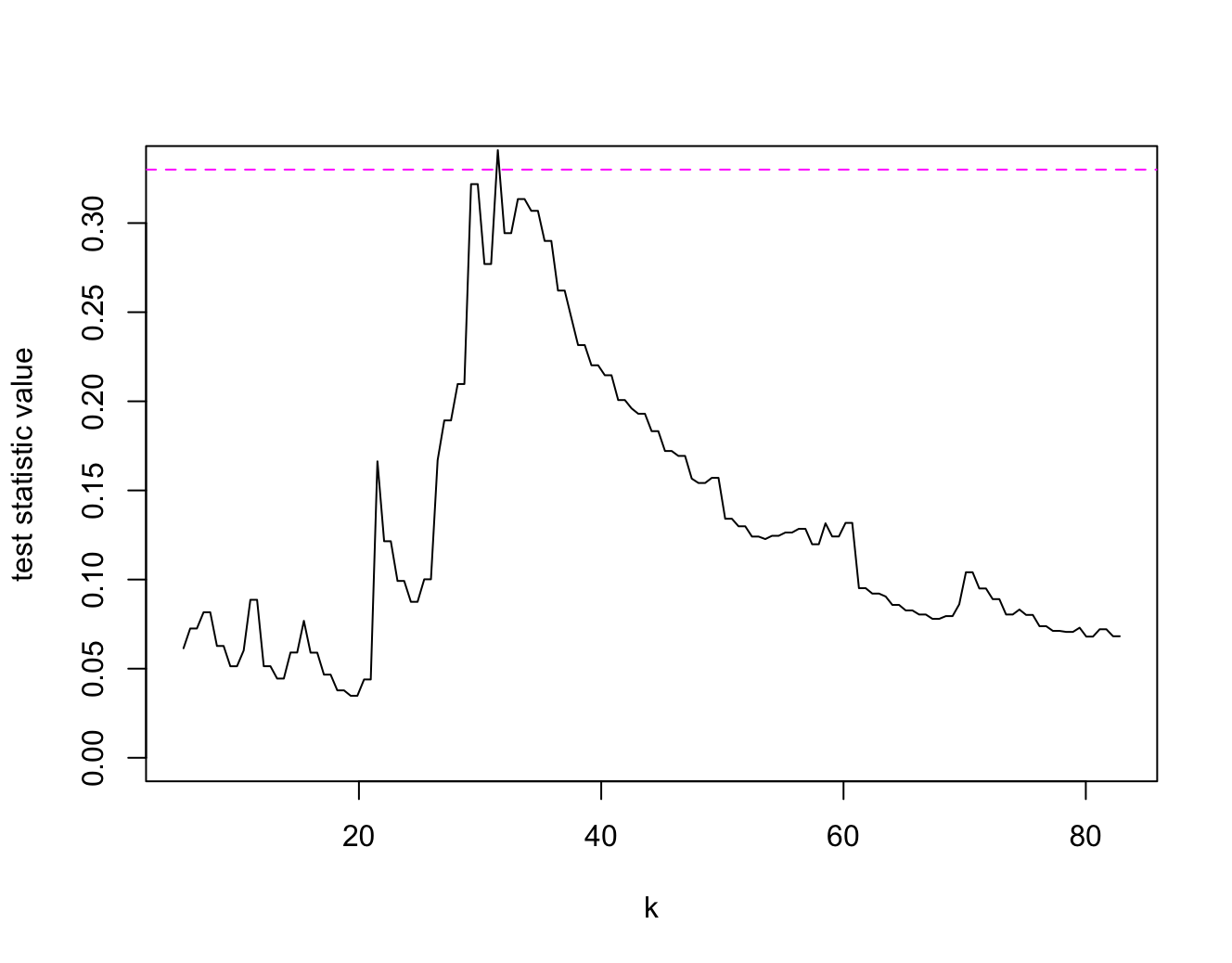}
 \caption{\small FL}%
    \label{fig:RV_testFL}
\end{subfigure}
\caption{\small Results of regularly varying distribution tests. If the test statistic is smaller than the critical value for a large range of $k$ then there is strong evidence that $F$ has a regularly varying tail. The dashed line indicates the critical value of $T$ of significance level 0.01.}
\label{fig:RV_test}
\end{figure}

Table \ref{indep_results} shows the results of the Pearson and Spearman correlation tests. Two pairs of risks NY - CA and CA - FL present strong evidence of independence at significance level 0.01, while NY - FL seems to have a non-linear monotonic relation. 

\begin{table}[htbp]
    \centering
    \begin{tabular}{|c|c|c|}
    \hline
         & Pearson correlation test & Spearman correlation test \\
          & \makecell{$H_0$:\textit{ two risks are} \\\textit{linearly independent}} & \makecell{$H_0$: \textit{two risks do not have}\\\textit{ monotonic relation}}\\ \hline
          NY vs. CA & \makecell{corr: -0.009524\\p-value: 0.8233} & \makecell{corr: 0.042562\\p-value: 0.1435} \\ \hline
          NY vs. FL & \makecell{corr: -0.002893\\p-value: 0.9459} & \makecell{corr: 0.131117\\p-value: $4.192\times10^{-6}$}\\ \hline
          CA vs. FL & \makecell{corr: -0.007786\\p-value: 0.8552} & \makecell{corr: -0.031568\\p-value: 0.2777}\\ \hline
    \end{tabular}
    \captionsetup{justification=raggedright,singlelinecheck=false}
    \caption{Results of independence tests}
    \label{indep_results}
\end{table}

Lastly, we carry out tests for tail equivalence proposed by \cite{daouia2024optimal} for two pairs of risks NY - CA and CA - FL (denoted below $X_1$ and $X_2$). For $i=1,2$, based on $k_i$ largest observations of $X_i$, we obtain the estimated inverse tail index $\widehat{\gamma}_i(k_i)$ of $X_i$ by using the Hill's estimator. Under the assumption of independence between risks, \cite{daouia2024optimal} define the pooled inverse tail index as 
\[
    \widehat{\gamma}_{pool} = \sum_{i=1}^2 w_i \widehat{\gamma}_i(k_i),
\]
where
\[
(w_1,w_2) = \boldsymbol{w} = \frac{\Sigma^{-1} \mathbf{1}}{\mathbf{1}^\mathrm{T} \Sigma^{-1} \mathbf{1}}, \qquad
\Sigma = \begin{bmatrix}
\widehat{\gamma}_1^2(k_1)/k_1 &  0   \\
0  &  \widehat{\gamma}_2^2(k_2)/k_2
\end{bmatrix}
\]
and $\mathbf{1} = (1,1)^\mathrm{T} \in \mathbb{R}^2$. Therefore, the test statistic is defined as
\[
     \sum_{i=1}^2 k_i \frac{(\widehat{\gamma}_i(k_i) - \widehat{\gamma}_{pool})^2}{\widehat{\gamma}_i^2(k_i)}
\]
and follows a $\chi ^2_{1}$ distribution under the null hypothesis $\mathrm{H}_0: \widehat{\gamma}_i(k_i) = \widehat{\gamma}_{pool} \forall i$. If the null hypothesis is not rejected, then $X_1$ and $X_2$ are assumed to have the same tail index defined as
\begin{equation}\label{Hill_pooled_formula}
\widehat{\alpha}_{pool} = 1/\widehat{\gamma}_{pool},
\end{equation} 
which is called the pooled tail estimator. Results of this test are discussed in Section \ref{sec:realdata} of the main text.

\end{appendices}

\bibliographystyle{apalike}
\bibliography{ref_paper.bib}

@book{woo2011calculating,
  title={Calculating catastrophe},
  author={Woo, Gordon},
  year={2011},
  publisher={World Scientific}
}

@article{malamud2005characterizing,
  title={Characterizing wildfire regimes in the United States},
  author={Malamud, Bruce D and Millington, James DA and Perry, George LW},
  journal={Proceedings of the National Academy of Sciences},
  volume={102},
  number={13},
  pages={4694--4699},
  year={2005},
  publisher={National Academy of Sciences}
}

@article{embrechts2009additivity,
  title={Additivity properties for Value-at-Risk under Archimedean dependence and heavy-tailedness},
  author={Embrechts, Paul and Ne{\v{s}}lehov{\'a}, Johanna and W{\"u}thrich, Mario V},
  journal={Insurance: Mathematics and Economics},
  volume={44},
  number={2},
  pages={164--169},
  year={2009},
  publisher={Elsevier}
}

@article{hogg1983estimation,
  title={On the estimation of long tailed skewed distributions with actuarial applications},
  author={Hogg, Robert V and Klugman, Stuart A},
  journal={Journal of Econometrics},
  volume={23},
  number={1},
  pages={91--102},
  year={1983},
  publisher={Elsevier}
}

@article{hsieh1999robustness,
  title={Robustness of tail index estimation},
  author={Hsieh, Ping-Hung},
  journal={Journal of Computational and Graphical Statistics},
  volume={8},
  number={2},
  pages={318--332},
  year={1999},
  publisher={Taylor \& Francis}
}

@article{sornette1996rank,
  title={Rank-ordering statistics of extreme events: Application to the distribution of large earthquakes},
  author={Sornette, Didier and Knopoff, Leon and Kagan, YY and Vanneste, Christian},
  journal={Journal of Geophysical Research: Solid Earth},
  volume={101},
  number={B6},
  pages={13883--13893},
  year={1996},
  publisher={Wiley Online Library}
}

@article{cui2022asymptotic,
  title={Asymptotic analysis of portfolio diversification},
  author={Cui, Hengxin and Tan, Ken Seng and Yang, Fan and Zhou, Chen},
  journal={Insurance: Mathematics and Economics},
  volume={106},
  pages={302--325},
  year={2022},
  publisher={Elsevier}
}

@book{feller1991introduction,
  title={An introduction to probability theory and its applications},
  author={Feller, William},
  volume={2},
  year={1971},
  publisher={John Wiley \& Sons}
}

@book{haan2006extreme,
  title={Extreme value theory: an introduction},
  author={de Haan, Laurens and Ferreira, Ana},
  volume={3},
  year={2006},
  publisher={Springer}
}

@article{denuit2022risk,
  title={Risk-sharing rules and their properties, with applications to peer-to-peer insurance},
  author={Denuit, Michel and Dhaene, Jan and Robert, Christian Y},
  journal={Journal of Risk and Insurance},
  volume={89},
  number={3},
  pages={615--667},
  year={2022},
  publisher={Wiley Online Library}
}

@book{boyd2004convex,
  title={Convex optimization},
  author={Boyd, Stephen and Vandenberghe, Lieven},
  publisher={Cambridge University Press},
  year={2004},
pages=133
}

@incollection{kreps1989nash,
  title={Nash equilibrium},
  author={Kreps, David M},
  booktitle={Game theory},
  pages={167--177},
  year={1989},
  publisher={Springer}
}

@article{bollmann2019international,
  title={International catastrophe pooling for extreme weather},
  author={Bollmann, Andreas and Wang, S},
  journal={Society of Actuaries},
  year={2019}
}

@article{cui2021diversification,
  title={Diversification in catastrophe insurance markets},
  author={Cui, Hengxin and Tan, Ken Seng and Yang, Fan},
  journal={ASTIN Bulletin: The Journal of the IAA},
  volume={51},
  number={3},
  pages={753--778},
  year={2021},
  publisher={Cambridge University Press}
}

@misc{swiss_re_sigma_2025,
	title = {sigma 1/2025: {Natural} catastrophes: insured losses on trend to {USD} 145 billion in 2025},
	shorttitle = {sigma 1/2025},
	url = {https://www.swissre.com/institute/research/sigma-research/sigma-2025-01-natural-catastrophes-trend.html},
	abstract = {Global insured losses from natural catastrophes reached USD 137 billion in 2024. If the trend holds, insured losses will approach USD 145 billion in 2025.},
	language = {en},
	author = {{Swiss Re}},
	month = apr,
	year = {2025}
}

@article{daouia2024optimal,
  title={Optimal weighted pooling for inference about the tail index and extreme quantiles},
  author={Daouia, Abdelaati and Padoan, Simone A and Stupfler, Gilles},
  journal={Bernoulli},
  volume={30},
  number={2},
  pages={1287--1312},
  year={2024},
  publisher={Bernoulli Society for Mathematical Statistics and Probability}
}

@article{dietrich2002testing,
  title={Testing extreme value conditions},
  author={Dietrich, Daniel and De Haan, Laurens and Husler, Jurg},
  journal={Extremes},
  volume={5},
  number={1},
  pages={71},
  year={2002},
  publisher={Springer Nature BV}
}

@article{tsallis1996generalized,
  title={Generalized simulated annealing},
  author={Tsallis, Constantino and Stariolo, Daniel A},
  journal={Physica A: Statistical Mechanics and its Applications},
  volume={233},
  number={1-2},
  pages={395--406},
  year={1996},
  publisher={Elsevier}
}

@article{xiang_generalized_2013,
	title = {Generalized {Simulated} {Annealing} for {Global} {Optimization}: {The} {GenSA} {Package}},
	volume = {5},
	shorttitle = {Generalized {Simulated} {Annealing} for {Global} {Optimization}},
	doi = {10.32614/RJ-2013-002},
	journal = {The R Journal Volume 5(1):13-29, June 2013},
	author = {Xiang, Yang and Gubian, Sylvain and Suomela, Brian and Hoeng, Julia},
	month = jun,
	year = {2013},
}

@article{karaboga2005idea,
  title={An idea based on honey bee swarm for numerical optimization},
  author={Karaboga, Dervis and others},
  year={2005},
  publisher={Technical report-tr06, Erciyes University, Engineering Faculty, Computer Engineering Department}
}

@article{storn1997differential,
  title={Differential evolution--a simple and efficient heuristic for global optimization over continuous spaces},
  author={Storn, Rainer and Price, Kenneth},
  journal={Journal of global optimization},
  volume={11},
  pages={341--359},
  year={1997},
  publisher={Springer}
}

@article{mullen2011deoptim,
  title={DEoptim: An R package for global optimization by differential evolution},
  author={Mullen, Katharine M and Ardia, David and Gil, David L and Windover, Donald and Cline, James},
  journal={Journal of Statistical Software},
  volume={40},
  pages={1--26},
  year={2011}
}

@article{geem_new_2001,
	title = {A {New} {Heuristic} {Optimization} {Algorithm}: {Harmony} {Search}},
	volume = {76},
	issn = {0037-5497},
	shorttitle = {A {New} {Heuristic} {Optimization} {Algorithm}},
	url = {https://doi.org/10.1177/003754970107600201},
	doi = {10.1177/003754970107600201},
	language = {EN},
	number = {2},
	urldate = {2025-06-24},
	journal = {Simulation},
	author = {Geem, Zong Woo and Kim, Joong Hoon and Loganathan, G.V.},
	month = feb,
	year = {2001},
	pages = {60--68},
}

@misc{clerc_standard_2012,
	title = {Standard {Particle} {Swarm} {Optimisation}},
	author = {Clerc, Maurice},
	year = {2012},
    howpublished = {HAL Open Access Archive}
}

@inproceedings{kennedy1995particle,
  title={Particle swarm optimization},
  author={Kennedy, James and Eberhart, Russell},
  booktitle={Proceedings of ICNN'95-international conference on neural networks},
  volume={4},
  pages={1942--1948},
  year={1995},
  organization={IEEE}
}

@article{beiranvand2017best,
  title={Best practices for comparing optimization algorithms},
  author={Beiranvand, Vahid and Hare, Warren and Lucet, Yves},
  journal={Optimization and Engineering},
  volume={18},
  pages={815--848},
  year={2017},
  publisher={Springer}
}

@article{potter1942mean,
  title={The mean values of certain Dirichlet series, II},
  author={Potter, HSA},
  journal={Proceedings of the London Mathematical Society},
  volume={2},
  year={1942},
  publisher={Wiley Online Library}
}

@article{karamata1930,
  author  = {Jovan Karamata},
  title   = {Sur un mode de croissance régulière des fonctions},
  journal = {Mathematica (Cluj)},
  volume  = {4},
  pages   = {38--53},
  year    = {1930}
}

@article{ibragimov2009nondiversification,
  title={Nondiversification traps in catastrophe insurance markets},
  author={Ibragimov, Rustam and Jaffee, Dwight and Walden, Johan},
  journal={The Review of Financial Studies},
  volume={22},
  number={3},
  pages={959--993},
  year={2009},
  publisher={Oxford University Press}
}

@article{boonen2024pareto,
  title={Pareto-efficient risk sharing in centralized insurance markets with application to flood risk},
  author={Boonen, Tim J and Chong, Wing Fung and Ghossoub, Mario},
  journal={Journal of Risk and Insurance},
  volume={91},
  number={2},
  pages={449--488},
  year={2024},
  publisher={Wiley Online Library}
}

@misc{natural_resources_canada_canadas_2023,
	title = {Canada’s record-breaking wildfires in 2023: {A} fiery wake-up call},
	url = {https://natural-resources.canada.ca/stories/simply-science/canada-s-record-breaking-wildfires-2023-fiery-wake-call},
	author = {{Natural Resources Canada}},
	year = {2023},
}

@article{hill1975,
  author  = {Bruce M. Hill},
  title   = {A Simple General Approach to Inference About the Tail of a Distribution},
  journal = {The Annals of Statistics},
  volume  = {3},
  number  = {5},
  pages   = {1163--1174},
  year    = {1975},
  doi     = {10.1214/aos/1176343247}
}

@article{ghossoub2024pareto,
  title={Pareto-optimal peer-to-peer risk sharing with robust distortion risk measures},
  author={Ghossoub, Mario and Zhu, Michael B and Chong, Wing Fung},
  journal={ASTIN Bulletin: The Journal of the IAA},
  pages={1--27},
  year={2024},
  publisher={Cambridge University Press}
}

@book{pisarenko2010heavy,
  title={Heavy-tailed distributions in disaster analysis},
  author={Pisarenko, V and Rodkin, M},
  volume={30},
  year={2010},
  publisher={Springer Science \& Business Media}
}

@article{froot2002pricing,
  title={The pricing of event risks with parameter uncertainty},
  author={Froot, Kenneth A and Posner, Steven E},
  journal={The Geneva Papers on Risk and Insurance Theory},
  volume={27},
  number={2},
  pages={153--165},
  year={2002},
  publisher={Springer}
}

@article{markowitz1952portfolio,
  title   = {Portfolio Selection},
  author  = {Markowitz, Harry M.},
  journal = {The Journal of Finance},
  volume  = {7},
  number  = {1},
  pages   = {77--91},
  year    = {1952},
  publisher = {Wiley}
}

@article{samuelson1967general,
  title={General proof that diversification pays},
  author={Samuelson, Paul A},
  journal={Journal of Financial and Quantitative Analysis},
  volume={2},
  number={1},
  pages={1--13},
  year={1967},
  publisher={Cambridge University Press}
}

@incollection{arrow1978uncertainty,
  title={Uncertainty and the evaluation of public investment decisions},
  author={Arrow, Kenneth J and Lind, Robert C},
  booktitle={Uncertainty in economics},
  pages={403--421},
  year={1978},
  publisher={Elsevier}
}

@article{holzheu2018natural,
  title={The natural catastrophe protection gap: Measurement, root causes and ways of addressing underinsurance for extreme events},
  author={Holzheu, Thomas and Turner, Ginger},
  journal={The Geneva Papers on Risk and Insurance-Issues and Practice},
  volume={43},
  number={1},
  pages={37--71},
  year={2018},
  publisher={Springer}
}

@article{ibragimov2016heavy,
  title={Heavy tails and copulas: Limits of diversification revisited},
  author={Ibragimov, Rustam and Prokhorov, Artem},
  journal={Economics Letters},
  volume={149},
  pages={102--107},
  year={2016},
  publisher={Elsevier}
}

@book{vincent1981optimality,
  author    = {Vincent, Thomas L. and Grantham, Walter J.},
  title     = {Optimality in Parametric Systems},
  year      = {1981},
  publisher = {John Wiley \& Sons},
  address   = {New York}
}

@article{marler2004survey,
  title={Survey of multi-objective optimization methods for engineering},
  author={Marler, R Timothy and Arora, Jasbir S},
  journal={Structural and multidisciplinary optimization},
  volume={26},
  number={6},
  pages={369--395},
  year={2004},
  publisher={Springer}
}

\nocite{*}

\end{document}